\definecolor{DarkGreen}{rgb}{0.1,0.5,0.1}
\definecolor{DarkRed}{rgb}{0.5,0.1,0.1}
\definecolor{DarkBlue}{rgb}{0.1,0.1,0.5}
\newtheorem{theorem}{Theorem}[section]
\newtheorem*{namedtheorem}{\theoremname}
\newcommand{\theoremname}{testing}
\newtheorem{lemma}[theorem]{Lemma}
\newtheorem{claim}[theorem]{Claim}
\newtheorem{observation}{Observation}
\newtheorem{corollary}[theorem]{Corollary}
\newtheorem{question}[theorem]{Question}
\newtheorem*{question*}{Question}
\theoremstyle{definition}
\newtheorem{definition}[theorem]{Definition}
\theoremstyle{plain}
\newtheorem{Alg}{Algorithm}
\definecolor{DarkGreen}{rgb}{0.1,0.5,0.1}
\definecolor{DarkRed}{rgb}{0.5,0.1,0.1}
\definecolor{DarkBlue}{rgb}{0.1,0.1,0.5}
\newcommand{\ignore}[1]{}
\newcommand{\E}{\mathop{\bf E\/}}
\renewcommand{\floatc@ruled}[2]{\vspace{2pt}{\@fs@cfont \#1.\:} \#2 \par
 \vspace{1pt}}
\title{Approximate Counting, the Lov\'asz Local Lemma and Inference in Graphical Models}
\author{Ankur Moitra\thanks{
Massachusetts Institute of Technology. Department of Mathematics and the Computer Science and Artificial Intelligence Lab. Email: {\tt moitra@mit.edu}. This work was supported in part by NSF CAREER Award CCF-1453261, NSF Large CCF-1565235, a David and Lucile Packard Fellowship, an Alfred P. Sloan Fellowship, an Edmund F. Kelley Research Award, a Google Research Award and a grant from the MIT NEC Corporation.} }
\begin{document}
\maketitle

\begin{abstract}
\normalsize
In this paper we introduce a new approach for approximately counting in bounded degree systems with higher-order constraints. Our main result is an algorithm to approximately count the number of solutions to a CNF formula $\Phi$ when the width is logarithmic in the maximum degree. This closes an exponential gap between the known upper and lower bounds. 

Moreover our algorithm extends straightforwardly to approximate sampling, which shows that under Lov\'asz Local Lemma-like conditions it is not only possible to find a satisfying assignment, it is also possible to generate one approximately uniformly at random from the set of all satisfying assignments. Our approach is a significant departure from earlier techniques in approximate counting, and is based on a framework to bootstrap an oracle for computing marginal probabilities on individual variables. Finally, we give an application of our results to show that it is algorithmically possible to sample from the posterior distribution in an interesting class of graphical models. 
\end{abstract}

\thispagestyle{empty}

\newpage

\setcounter{page}{1}

\section{Introduction}

\subsection{Background}

In this paper we introduce a new approach for approximately counting in bounded degree systems with higher-order constraints. For example, if we are given a CNF formula $\Phi$ with $n$ variables and $m$ clauses with the property that each clause contains between $k$ and $2k$ variables and each variable belongs to at most $d$ clauses we ask:

\begin{question}
How does $k$ need to relate to $d$ for there to be algorithms to estimate the number of satisfying assignments to $\Phi$ within a $(1\pm1/n^c)$ multiplicative factor?
\end{question}

In the case of a monotone CNF formula where no variable appears negated, the problem is equivalent to the following: Suppose we are given a hypergraph on $n$ nodes and $m$ hyperedges with the property that each hyperedge contains between $k$ and $2k$ nodes and each node belongs to at most $d$ hyperedges. How does $k$ need to relate to $d$ in order to be able to approximately compute the number of independent sets? Here an independent set is a subset of nodes for which there is no induced hyperedge. Bordewich, Dyer and Karpinski \cite{BDK1} gave an MCMC algorithm for approximating the number of hypergraph independent sets (equivalently, the number of satisfying assignments in a monotone CNF formula) that succeeds whenever $k \geq d + 2$. Bez\'akova et al. \cite{BGGGS} gave a deterministic algorithm that succeeds whenever $k \geq d  \geq 200$ and proved that when $d \geq 5 \cdot 2^{k/2}$ it is $NP$-hard to approximate the number of hypergraph independent sets even within an exponential factor. 

More broadly, there is a rich literature on approximate counting problems. In a seminal work, Weitz \cite{W} gave an algorithm to approximately count
in the hardcore model with parameter $\lambda$
in graphs of degree at most $d$ whenever $$\lambda \leq \frac{(d-1)^{d-1}}{(d-2)^d}$$ And in another seminal work, Sly \cite{S} showed a matching hardness result which was later improved in various respects by Sly and Sun \cite{SS} and Galanis, \u{S}tefankovi\u{c} and Vigoda \cite{GSV}. These results show that approximate counting is algorithmically possible if and only if there is spatial mixing. Moreover, Weitz's result can be thought of as a comparison theorem that spatial mixing holds on a bounded degree graph if and only if it holds on an infinite tree with the same degree bound. There have been a number of attempts to generalize these results to hypergraphs, many of which follow the approach of defining analogues of the self-avoiding walk trees used in Weitz's algorithm \cite{W}. However what makes hypergraph versions of these problems more challenging is that spatial mixing fails, even on trees. And we can see that there are {\em exponential} gaps between the upper and lower bounds, since the algorithms above require $k$ to be linear in $d$ while the lower bounds only rule out $k \leq 2 \log d - O(1)$. 

We can take another vantage point to study these problems. Bounded degree CNF formulae are also one of the principal objects of study in the Lov\'asz Local Lemma \cite{EL} which is a celebrated result in combinatorics that guarantees when $k \geq \log d + \log k + O(1)$ that $\Phi$ has at least one satisfying assignment. The original proof of the Lov\'asz Local Lemma was non-constructive and did not yield a polynomial time algorithm for finding such an assignment, even though it was guaranteed to exist. Beck \cite{B} gave an algorithm followed by a parallel version due to Alon \cite{A} that can find a satisfying assignment whenever $k \geq 8 \log d + \log k + O(1)$. And in a celebrated recent result, Moser and Tardos \cite{MT} gave an algorithm exactly matching the existential result. This was followed by a number of works giving constructive proofs of various other settings and generalizations of the Lov\'asz Local Lemma \cite{HS, AI, HV, Ko}. However these works leave open the following question:

\begin{question}
Under the conditions of the Lov\'asz Local Lemma (i.e. when $k$ is logarithmic in $d$) is it possible to approximately sample from the uniform distribution on satisfying assignments? 
\end{question}

Approximate counting and approximate sampling problems are well-known to be related. When the problem is self-reducible, they are in fact algorithmically equivalent \cite{JVV, JS}. However in our setting the problem is not self-reducible because as we fix variables we could violate the assumption that $k$ is at least logarithmic in $d$. It is natural to hope that under exactly the same conditions as the Lov\'asz Local Lemma, that there is an algorithm for approximate sampling that matches the limits of the existential and now algorithmic results. However the hardness results of Bez\'akova et al. \cite{BGGGS} imply that we need at least another factor of two, and that it is $NP$-hard to approximately sample when $k \leq 2 \log d - O(1)$\footnote{The hardness results in \cite{BGGGS} are formulated for approximate counting but carry over to approximate sampling. In particular, an oracle for approximately sampling from the set of satisfying assignments yields an oracle for approximating the marginal at any variable. Then one can invoke Lemma $7$ in \cite{BGGGS}.}. 

In fact, there is another connection between the Lov\'asz Local Lemma and approximate counting. Scott and Sokal \cite{SS} showed that given the dependency graph of events in the local lemma, the best lower bound on the probability of an event guaranteed to exist by the Lov\'asz Local Lemma (i.e. the fraction of satisfying assignments) is exactly the solution to some counting problem. Harvey, Srivastava and Vondr\'ak \cite{HSV} recently adapted techniques of Weitz to complex polydisks and gave an algorithm for approximately computing this lower bound. This yields a lower bound on the fraction of satisfying assignments, however the actual number could be exponentially larger. 

\subsection{Our Results}

Our main result is an algorithm to approximately count the number of solutions when $k$ is at least logarithmic in $d$. In what follows, let $c$, $k$ and $d$ be constants. We prove\footnote{We have not made an attempt to optimize the constant in this theorem. }:

\begin{theorem}[informal]
Suppose $\Phi$ is a CNF formula where each clause contains between $k$ and $2k$ variables and at most $d$ clauses containing any one variable. For $k \gtrsim 60 \log d$ there is a deterministic polynomial time algorithm for approximating the number of satisfying assignments to $\Phi$ within a multiplicative $(1\pm 1/n^c)$ factor. Moreover there is a randomized polynomial time algorithm to sample from a distribution that is $1/n^c$-close in total variation distance to the uniform distribution on satisfying assignments. 
\end{theorem}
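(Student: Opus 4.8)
The plan is to build the counting algorithm on top of an oracle that, given a partial assignment $\sigma$ to some of the variables and a still-"free" variable $x_i$, returns an approximation of the marginal probability $\Pr[x_i = 1]$ in the uniform distribution over satisfying assignments consistent with $\sigma$. Once such marginals can be estimated to within a small error, standard telescoping recovers the count: fix an ordering of the variables, write $\#\mathrm{SAT}(\Phi) = 2^n \prod_{i=1}^{n} \Pr[x_i = \sigma_i \mid x_{<i} = \sigma_{<i}]$ for any fixed satisfying assignment $\sigma$, and multiply the estimated conditional marginals. Multiplicative error $1/n^c$ overall follows if each of the $n$ factors is estimated to relative error $O(1/n^{c+1})$; sampling follows the same way, drawing each $x_i$ in turn from the estimated conditional marginal, with the total-variation error again accumulating additively over the $n$ steps.

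The core of the argument — and the part that departs from earlier work — is constructing the marginal oracle itself under the hypothesis $k \gtrsim 60 \log d$. The natural route is a recursion: to estimate the marginal of $x_i$, condition on $x_i = 1$ and on $x_i = 0$, and in each branch identify the clauses that become "dangerous" (nearly falsified) and recurse on the variables feeding those clauses, so that one only ever needs to reason about a bounded-size neighborhood of $x_i$ in the dependency structure. The key structural fact to establish is a decay-of-correlations / self-reducibility-within-a-local-region statement: although the global problem is not self-reducible (fixing variables can drive a clause's width below $\log d$), after fixing a few variables the \emph{bad} clauses are sparse enough that the Lov\'asz Local Lemma (and its algorithmic form, Moser--Tardos \cite{MT}) still applies to the residual formula, and the influence of far-away variables on $x_i$'s marginal is negligible. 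I would make this quantitative: after conditioning, the subformula restricted to clauses within distance $r$ of $x_i$ in the dependency graph still satisfies an LLL-type condition with a comfortable slack coming from the factor-of-$60$ gap, and the marginal computed on this truncated instance differs from the true marginal by $\exp(-\Omega(r))$ or $d^{-\Omega(r)}$. Choosing $r = \Theta(\log n)$ then gives inverse-polynomial accuracy, while the truncated neighborhood has size $d^{O(r)} = n^{O(1)}$, so the whole computation is polynomial (and deterministic, if the local computation is done by exhaustive enumeration over the bounded-degree neighborhood with the slack used to control the enumeration size).

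The main obstacle is precisely the non-self-reducibility: one must show that the set of clauses that "shrink" below the LLL threshold along the conditioning process does not proliferate — i.e., that bad clauses stay isolated, so that the residual instance decomposes into small independent pieces (this is the analogue of the $2$-tree / witness-forest arguments of Beck \cite{B} and Moser--Tardos \cite{MT}, but now tracking \emph{marginals} rather than mere satisfiability). Making the correlation decay robust enough that it survives conditioning on $\Theta(\log n)$ previously-set variables — so that the telescoping product is valid at every step — is where the constant $60$ (rather than $2$) is spent, and is the technically delicate heart of the proof. I expect the write-up to first develop the correlation-decay lemma for a single marginal on a static formula, then argue inductively that conditioning preserves the hypotheses needed for the lemma to keep applying, and finally assemble the telescoping count and the sampler.
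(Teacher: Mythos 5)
Your outer scaffolding (telescoping the count over conditional marginals, and sampling by drawing each variable from its estimated conditional marginal) matches the paper's Section 6, and you correctly identify non-self-reducibility as an obstacle; the paper resolves that particular issue by first using the Local Lemma (via Moser--Tardos) to find a partial assignment that satisfies every clause while leaving at least $7k/8$ variables of each clause unset, and then telescoping only over the variables set by that partial assignment, so that no residual clause ever falls below the width threshold.

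The genuine gap is in the heart of your proposal: the marginal oracle. You propose to compute the marginal of $x_i$ by truncating the formula to a radius-$r$ ball around $x_i$ in the dependency graph and to invoke a correlation-decay lemma asserting that the truncated marginal differs from the true one by $\exp(-\Omega(r))$. This is precisely the Weitz-style spatial-mixing route, and the paper's starting observation is that spatial mixing \emph{fails} for these hypergraph/CNF problems, even on trees --- which is why the known algorithms along those lines require $k$ linear in $d$ rather than logarithmic. There is no known (and in this regime, no true) distance-$r$ truncation lemma to lean on, so the ``technically delicate heart'' you defer is not a matter of bookkeeping constants but of a missing idea. (A secondary issue: even granting such a lemma, a radius-$\Theta(\log n)$ ball in a degree-$\Theta(dk)$ dependency graph has polynomially many \emph{variables}, so exhaustive enumeration over its assignments is exponential time; the paper only ever brute-forces over discrepancy sets of size $O(\log n)$.) What the paper actually does is quite different: it couples the two conditional distributions given $x=T$ and $x=F$ using a pre-committed set of \emph{marked} variables (so clauses never become dangerously short), shows via a $3$-tree/witness argument that the coupling's discrepancy set $V_I$ has size $O(\log n)$ with high probability, and then --- because running the coupling would itself require the marginal oracle --- abstracts the coupling as a stochastic decision tree, splits it into two one-sided trees whose edge quantities are \emph{total probabilities}, and finds a feasible assignment of those probabilities by a polynomial-size linear program whose leaf constraints involve only brute-force counts on the logarithmic-sized $\Phi_{I_1},\Phi_{I_2}$. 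The punchline is that \emph{any} feasible solution certifies two-sided bounds $q' \le \Pr_{\mathcal{D}}[x=T] \le q''$ up to $O(1/n^c)$ via a fractional-matching argument; this certification step is what replaces the correlation-decay lemma your proposal assumes.
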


This algorithm closes an exponential gap between the known upper bounds \cite{BDK1, BGGGS} and lower bounds \cite{BGGGS}. It also shows that under Lov\'asz Local Lemma-like conditions not only is it possible to efficiently find a satisfying assignment, it is possible to find a random one. Moreover our approach is a significant departure from earlier techniques based either on path coupling \cite{BDK1} or adapting Weitz's approach to non-binary models and hypergraphs \cite{GK, NT, SST, LL, BGGGS}. The results above appear in Theorem~\ref{thm:nonmonotonecounting} and Theorem~\ref{thm:approximatesampling}. Moreover our techniques seem to extend to many non-binary counting problems as we explain in Section~\ref{sec:conclusion}. 

Our approach starts from a thought experiment about what we could do if we had access to a very powerful oracle that could answer questions about the marginal distributions of individual variables under the uniform distribution on satisfying assignments. We use this oracle and properties of the Lov\'asz Local Lemma (namely, bounds it gives on the marginal distribution of individual variables) to construct a coupling between two random satisfying assignments so that both agree outside some logarithmic sized component. If we knew the distribution on what logarithmic sized component this coupling procedure produces, we could brute force and find the ratio of the number of satisfying assignments with $x = T$ to the number with $x = F$ to compute marginals at $x$. However the distribution of what component the coupling produces intimately depends on the powerful oracle we have assumed that we have access to. 

Instead, we abstract the coupling procedure as a random root-to-leaf path in a tree that represents the state of the coupling. We show that at the leaves of this tree, there is a way to fractionally charge assignments where $x = T$ against assignments where $x = F$. Crucially, doing so requires only brute-force search on a logarithmic sized component. Finally, we show that there is a polynomial sized linear program to find a flow through the tree that produces an approximately valid way to fractionally charge assignments with $x = T$ against ones with $x = F$, and that any such solution {\em certifies} the correct marginal distribution. From these steps, we have thus bootstrapped an oracle for answering queries about the marginal distribution. Our main results then follow from utilizing this oracle. In settings where the problem is self-reducible \cite{JS} it is well-known how to go from knowing the marginal to approximate counting and sampling. In our setting, the problem is not self-reducible because setting variables could result in clauses becoming too small in which case $k$ would not be large enough as a function of $d$. We are able to get around this by using the Lov\'asz Local Lemma once more to find a safe ordering in which to set the variables. 

In an exciting development and just after the initial posting of this paper, Hermon, Sly and Zhang \cite{HSZ} gave an algorithm for approximately counting the number of independent sets in a $k$-uniform hypergraph of maximum degree $d$ provided that $k \geq 2 \log d + O(1)$. The techniques are entirely different and their algorithm matches the hardness result in \cite{BGGGS} up to an additive constant! It remains an interesting question to find similarly sharp phase transitions for the approximate counting problems studied here, namely for CNFs that are not necessarily monotone. And in yet another interesting direction, Guo, Jerrum and Liu \cite{GJL},  gave an algorithm based on connections to ``cycle popping" that can uniformly sample from $\Phi$ under weaker conditions on the degree but by imposing conditions on intersection properties of bad events. 

\subsection{Further Applications}

Our algorithms have interesting applications in graphical models. Directed graphical models are a rich language for describing distributions by the conditional relationships of their variables. However very little is known algorithmically about learning their parameters or performing basic tasks such as inference \cite{DL1, DL2}. In most settings, these problems are computationally hard. However we can study an interesting class of directed graphical models which we call {\em cause networks}. See Figure~\ref{fig:causenetwork}. 

\begin{figure}
\begin{center}
  \begin{minipage}[c]{6cm}
    \includegraphics[width=6cm]{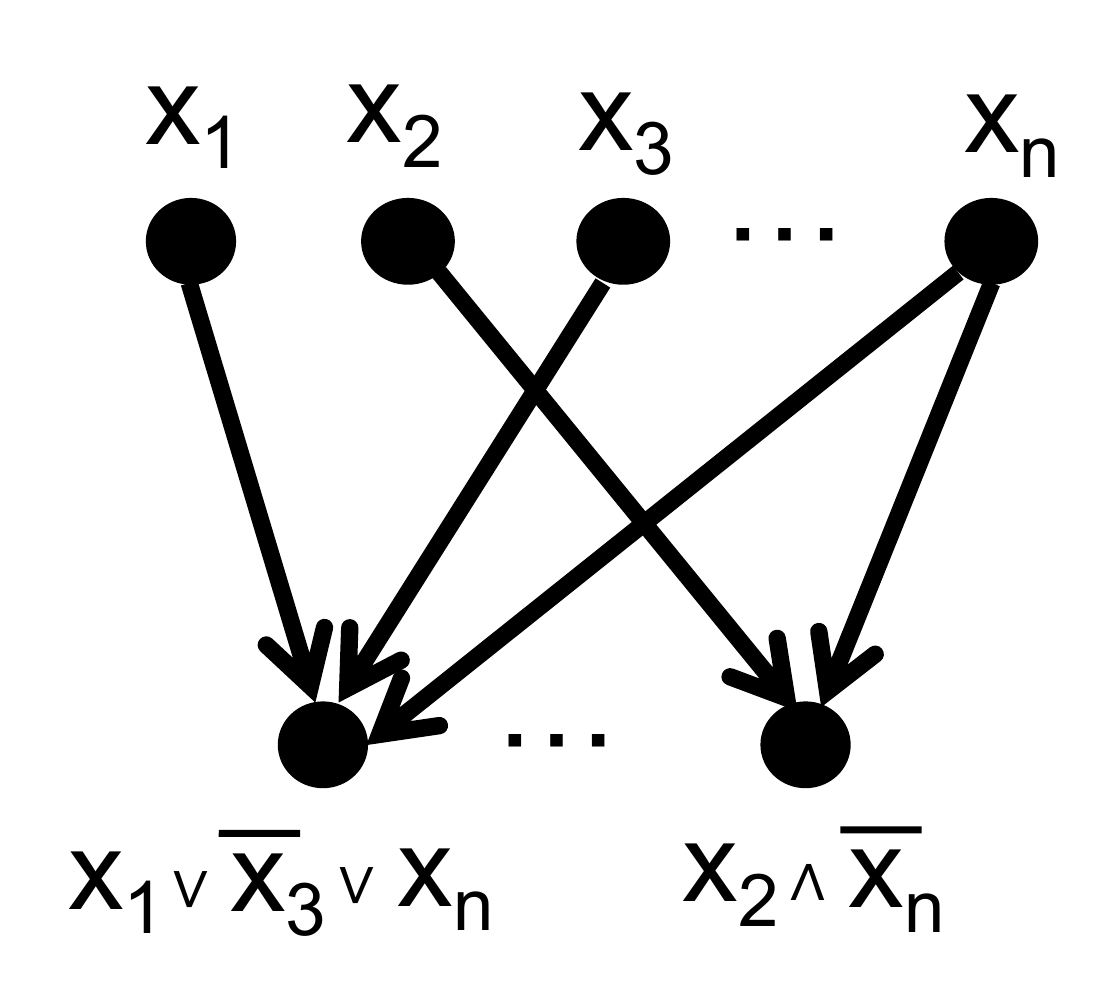}
  \end{minipage}
  \hspace{0.75cm}
  \begin{minipage}[c]{6cm}
  \vspace{1.5pc}
    \caption{
       An example cause network with $n$ hidden variables. A sample is generated by choosing each hidden variable to be $T/F$ independently with equal probability, and observing the truth values of each clause. 
    } \label{fig:causenetwork}
  \end{minipage}
  \end{center}
  \vspace{-2.0pc}
\end{figure}

\begin{definition}
In a cause network there is a collection of hidden variables $x_1, x_2, \dots, x_n$ that are chosen independently to be $T$ or $F$ with equal probability. There is a collection of $m$ observed variables each of which is either an OR or an AND of several variables or their negations. 
\end{definition}

Our goal is: Given a random sample $x_1, x_2, \dots, x_n$ from the model where we observe the truth value of each of the $m$ clauses, to sample from the posterior distribution on the hidden variables. This generalizes graphical models such as the symptom-disease network where the hidden variables represent diseases that a patient may have, and the clauses represent observed symptoms. We will require the following regularity condition on our observations: 

\begin{definition}
A collection of observations is regular if for every observed variable, the corresponding clause is adjacent to (i.e. shares a variable with) at most $15k/16$ OR clauses that are false and at most $15k/16$ AND clauses that are true. 
\end{definition}

Now, as an immediate corollary we have:

\begin{corollary}\label{corr:inference}
Given a cause network where each observed variable depends on between $k$ and $2k$ hidden variables, each hidden variable affects at most $d$ observed variables and $k \gtrsim 70 \log d$, there is a polynomial time algorithm for sampling from the posterior distribution for any regular collection of observations. 
\end{corollary}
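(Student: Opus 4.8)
The plan is to reduce the task to a single application of Theorem~\ref{thm:approximatesampling}. Since the hidden variables $x_1,\dots,x_n$ are chosen independently and uniformly, the prior is the uniform distribution on $\{T,F\}^n$ and each observed truth value is a deterministic function of $x_1,\dots,x_n$; hence the posterior given the observations is exactly the uniform distribution over the assignments that are consistent with every observation. So it is enough to encode ``consistent with the observations'' as a CNF formula of the shape required by our algorithm and then run it. The encoding splits into the four types of observation. An OR clause observed to be $T$ contributes a single disjunction over its $k$ to $2k$ literals; an AND clause observed to be $F$ contributes a single disjunction over the negations of its $k$ to $2k$ literals. The other two types are rigid: an OR clause observed to be $F$ forces each of its literals to $F$, and an AND clause observed to be $T$ forces each of its literals to $T$, so each contributes $k$ to $2k$ unit clauses. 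Let $S$ be the set of hidden variables pinned by some unit clause. Because the observations come from an actual sample they are consistent, so these pinned values are well defined and conflict-free, and algorithmically one simply reads them off.

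Now substitute the pinned values into the disjunctive clauses: a disjunction containing a literal made true is discarded, and every remaining disjunction loses exactly the literals on variables of $S$. Let $\Phi'$ be the resulting CNF, on the variable set $\{x_i : i \notin S\}$. A uniform sample from the satisfying assignments of $\Phi'$, extended by the pinned values, is a uniform sample from the assignments consistent with the observations, i.e.\ from the posterior; since this extension is deterministic it preserves total variation distance, so it suffices to approximately sample the uniform distribution on satisfying assignments of $\Phi'$. It remains to check that $\Phi'$ satisfies the hypotheses of Theorem~\ref{thm:approximatesampling}. The degree bound is immediate, since we only delete clauses and so every surviving variable still appears in at most $d$ clauses. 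The width bound is the heart of the matter: each surviving clause has width at most $2k$ trivially, and we must show that none of them is much shorter than $k$. The only way a surviving disjunction $C$ loses width is by sharing variables with OR clauses observed $F$ or AND clauses observed $T$, and the number of such clauses adjacent to $C$ is exactly what the regularity condition controls; the stronger hypothesis $k \gtrsim 70\log d$ here (versus the $60\log d$ required by Theorem~\ref{thm:approximatesampling}) supplies the slack to absorb this shrinkage.

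The one genuinely delicate step, and the only place the regularity condition is really used, is this width estimate. For each surviving disjunction $C$ and each variable $v \in C$ that also occurs in a pinning observation, a short sign computation decides whether the forced value of $v$ makes $C$ true (so $C$ has already been discarded) or merely deletes $v$'s literal from $C$; one then bounds the total number of deletions suffered by a surviving $C$ using the budget of at most $15k/16$ adjacent OR clauses observed $F$ together with at most $15k/16$ adjacent AND clauses observed $T$, and concludes that every surviving clause keeps width comfortably above $60\log d \geq 60\log d'$, where $d' \le d$ is the maximum degree of $\Phi'$. In particular no surviving clause collapses to a unit, so substituting the initially forced values once suffices and $\Phi'$ is a legitimate input to Theorem~\ref{thm:approximatesampling}, which finishes the proof. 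Everything else --- the reduction to uniform sampling over a CNF, the case analysis of the four observation types, and the preservation of total variation error --- is routine.
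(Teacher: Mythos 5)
Your proposal is correct and follows essentially the same route as the paper: classify the four observation types, note that false ORs and true ANDs pin their variables while true ORs and false ANDs each contribute a single prohibited configuration (hence a single CNF clause), invoke regularity to keep the residual clause widths large enough, and hand the resulting formula to Theorem~\ref{thm:approximatesampling}. Your accounting of where the slack between $70\log d$ and $60\log d$ is spent is, if anything, more explicit than the paper's own one-line assertion that each surviving clause retains at least $7k/8$ unset variables.
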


\noindent This is a rare setting where there is an algorithm to solve an inference problem in graphical models but $(i)$ the underlying graph does not have bounded treewidth and $(ii)$ correlation decay fails. We believe that our techniques may eventually be applicable to settings where the observed variables are noisy functions of the hidden variables and where the hidden variables are not distributed uniformly.

\section{Preliminaries}

In this paper, we will be interested in approximately counting the number of satisfying assignments to a CNF formula. For example, we could be given:
 $$ \Phi = (\overline{x}_1 \vee x_3 \vee x_5) \wedge (x_2 \vee x_3 \vee \overline{x}_8) \wedge \dots \wedge (x_4 \vee \overline{x}_5 \vee x_9)$$
Let's fix some parameters. We will assume that there are $n$ variables and there are $m$ clauses each of which is an OR of between $k$ and $Ck$ distinct variables. The constant $C$ will take values either $2$ or $6$ because of the way our algorithm will be built on various subroutines. %Furthermore since the formula is monotone, we require that no variable is negated. 
Finally, we will require a degree bound that each variable appears in at most $d$ clauses. We will be interested in the relationships between $k$ and $d$ that allow us to approximately count the number of satisfying assignments in polynomial time. 

The celebrated Lov\'asz Local Lemma tells us conditions on $k$ and $d$ where we are guaranteed that there is at least one satisfying assignment. Let $D$ be an upper bound on the degree of the dependency graph. We can take $D = 2dk$ or $D = 6dk$ depending on whether we are in a situation where there are at most $2k$ or at most $6k$ variables per clause. 

\begin{theorem} \cite{EL}
If $e (D+1)  \leq 2^k$ then $\Phi$ has at least one satisfying assignment. 
\end{theorem}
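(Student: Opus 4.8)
The plan is to obtain this symmetric statement as a corollary of the general (asymmetric) Lov\'asz Local Lemma applied to the uniform product measure on assignments. First I would fix the probability space: set each variable to $T$ or $F$ independently and uniformly at random, and for each clause $C_i$ let $A_i$ be the ``bad event'' that $C_i$ is violated, i.e.\ that every literal of $C_i$ evaluates to false. Since $C_i$ is an OR of at least $k$ \emph{distinct} variables and a literal on a given variable is false with probability $1/2$ independently of the literals on the other variables, we get $\Pr[A_i] \leq 2^{-k}$, with equality exactly when $C_i$ has $k$ variables.

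Next I would describe the dependency graph: declare $A_i \sim A_j$ precisely when $C_i$ and $C_j$ share a variable. Because the indicator of $A_i$ is a function only of the variables appearing in $C_i$, and for any family of non-neighbors of $A_i$ the underlying variable sets are disjoint from $\mathrm{vars}(C_i)$, the event $A_i$ is mutually independent of every sub-family of its non-neighbors, which is the independence hypothesis the local lemma requires. For the degree: $C_i$ contains at most $Ck$ variables, each lying in at most $d$ clauses, so $A_i$ has at most $Ck(d-1) < D$ neighbors, since $D = Ckd$ (with $C = 2$ or $6$ according to whether clauses have at most $2k$ or $6k$ variables).

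The one substantive step is to check the scalar hypothesis of the general local lemma with the symmetric choice $x_i = 1/(D+1)$ for every $i$: one needs $\Pr[A_i] \leq x_i \prod_{j \sim i}(1 - x_j)$, and since $A_i$ has at most $D$ neighbors it suffices to have $2^{-k} \leq \tfrac{1}{D+1}\bigl(1 - \tfrac{1}{D+1}\bigr)^{D}$. Here I would invoke the elementary bound $\bigl(1 - \tfrac1n\bigr)^{n-1} > \tfrac1e$ for integers $n \geq 2$ (immediate from $\ln(1+x) < x$), applied with $n = D + 1$; this gives $\tfrac{1}{D+1}\bigl(1 - \tfrac{1}{D+1}\bigr)^{D} > \tfrac{1}{e(D+1)} \geq 2^{-k}$, where the final inequality is exactly the hypothesis $e(D+1) \leq 2^k$. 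The general Lov\'asz Local Lemma then gives $\Pr\bigl[\bigcap_i \overline{A_i}\bigr] \geq \prod_i (1 - x_i) > 0$, so with positive probability the random assignment satisfies every clause and in particular a satisfying assignment exists. I do not anticipate a genuine obstacle: the only steps that are not pure bookkeeping are verifying mutual independence along the dependency graph and the scalar estimate $(1-1/n)^{n-1} > 1/e$, both of which are routine. (One could also avoid citing the asymmetric form and argue directly, by induction on $|S|$, that $\Pr[A_i \mid \bigcap_{j \in S}\overline{A_j}] \leq 2^{-k+1}$ for every $i \notin S$; but routing through the known general statement is cleaner.)
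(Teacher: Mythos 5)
Your proof is correct: the paper states this theorem as a citation to Erd\H{o}s--Lov\'asz and gives no proof, and your argument is exactly the standard derivation of the symmetric local lemma for $k$-CNFs --- uniform product measure, bad events per clause with $\Pr[A_i]\leq 2^{-k}$, dependency graph via shared variables, and the asymmetric LLL with $x_i = 1/(D+1)$ together with $\bigl(1-\tfrac1n\bigr)^{n-1} > \tfrac1e$. All steps, including the mutual-independence check and the degree bound $Ck(d-1) < D$, are sound.
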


\noindent Moser and Tardos \cite{MT} gave an algorithm to find a satisfying assignment under these same conditions. However the assignment that their randomized algorithm finds is fundamentally not uniform from the set of all satisfying assignments. Our goal is to be able to both approximately count and uniformly sample when $k$ is logarithmic in $d$. 

There are many more related results, but we will not review them all here. Instead we state a version of the asymmetric local lemma given in \cite{HSS} which gives us some control on the uniform distribution on assignments. Let $\mathcal{C}$ be the collection of clauses in $\Phi$. Let $\mbox{Pr}[\cdot]$ denote the uniform distribution all assignments \--- i.e. uniform on $\{T, F\}^n$. Finally, for a clause $b$ let $\Gamma(b)$ denote all the clauses that intersect $b$. We can abuse notation and for any event $a$ that depends on some set of the variables, let $\Gamma(a)$ denote all the clauses that contain any of the variables on which $a$ depends.

\begin{theorem}\label{thm:distributionLLL}
Suppose there is an assignment $x: \mathcal{C} \rightarrow (0, 1)$ such that for all $c \in \mathcal{C}$ we have
$$\mbox{Pr}[c \mbox{ is unsatisfied}] \leq x(c) \prod_{b \in \Gamma(c)} \Big (1 - x(b) \Big )$$
then there is at least one satisfying assignment. Moreover the uniform distribution $\mathcal{D}$ on satisfying assignments satisfies that for any event $a$
$$\mbox{Pr}_{\mathcal{D}}[a] \leq \mbox{Pr}[a] \prod_{b \in \Gamma(a)} \Big (1 - x(b) \Big )^{-1}$$
\end{theorem}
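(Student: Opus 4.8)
This is the distributional form of the Lov\'asz Local Lemma (the asymmetric LLL together with its bound on the so-called LLL-distribution), and the plan is to prove it by the classical conditional-probability induction, with one extra decomposition step to handle a general event $a$. Throughout, write $A_c$ for the event ``clause $c$ is unsatisfied,'' so that a satisfying assignment is exactly a point of $\bigcap_{c \in \mathcal{C}} \overline{A_c}$. The only probabilistic fact about the base measure $\Pr$ that we use is that it is a product measure on $\{T,F\}^n$, so any two events depending on disjoint sets of variables are independent; this is precisely where the definition of $\Gamma$ enters: if $S \subseteq \mathcal{C}$ contains no clause of $\Gamma(c)$ then $A_c$ is independent of $\bigwedge_{b \in S}\overline{A_b}$, and likewise $a$ is independent of $\bigwedge_{c \notin \Gamma(a)}\overline{A_c}$.

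The core step is the standard lemma: for every clause $c$ and every $S \subseteq \mathcal{C}$ with $c \notin S$,
$$\Pr\Big[A_c \,\Big|\, \bigwedge_{b \in S}\overline{A_b}\Big] \;\le\; x(c),$$
proved by induction on $|S|$. Partition $S = S_1 \cup S_2$ with $S_1 = S \cap \Gamma(c)$ and $S_2 = S \setminus \Gamma(c)$, and write the conditional probability as $\Pr[A_c \wedge \bigwedge_{S_1}\overline{A_b}\mid \bigwedge_{S_2}\overline{A_b}]$ divided by $\Pr[\bigwedge_{S_1}\overline{A_b}\mid\bigwedge_{S_2}\overline{A_b}]$. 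The numerator is at most $\Pr[A_c \mid \bigwedge_{S_2}\overline{A_b}] = \Pr[A_c] \le x(c)\prod_{b\in\Gamma(c)}(1-x(b))$, using independence from $S_2$ and then the hypothesis. For the denominator, enumerate $S_1$ and expand it as a telescoping product; each factor has the form $1 - \Pr[A_b \mid (\text{fewer than }|S|\text{ conditions})]$, which the induction hypothesis bounds below by $1 - x(b)$, and since $S_1 \subseteq \Gamma(c)$ the product is at least $\prod_{b\in\Gamma(c)}(1-x(b))$. Dividing gives the claim. Specializing to $S = \mathcal{C}\setminus\{c\}$ and telescoping over all clauses yields $\Pr[\bigwedge_c\overline{A_c}] \ge \prod_c(1-x(c)) > 0$, so a satisfying assignment exists and $\mathcal{D}$ is well defined.

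Now fix an event $a$ and its neighborhood $\Gamma(a)$. Splitting the conjunction along $\mathcal{C} = \Gamma(a) \cup (\mathcal{C}\setminus\Gamma(a))$ in both the numerator and the denominator of $\Pr_{\mathcal{D}}[a] = \Pr[a\mid\bigwedge_c\overline{A_c}]$, and using that $a$ is independent of $\bigwedge_{c\notin\Gamma(a)}\overline{A_c}$, one obtains
$$\Pr_{\mathcal{D}}[a] \;=\; \Pr[a]\cdot\frac{\Pr\big[\bigwedge_{c\in\Gamma(a)}\overline{A_c}\,\big|\,a\wedge\bigwedge_{c\notin\Gamma(a)}\overline{A_c}\big]}{\Pr\big[\bigwedge_{c\in\Gamma(a)}\overline{A_c}\,\big|\,\bigwedge_{c\notin\Gamma(a)}\overline{A_c}\big]}.$$
The numerator is at most $1$; the denominator is again a telescoping product over $\Gamma(a)$ whose factors are bounded below using the core lemma (each $c\in\Gamma(a)$ is absent from its own conditioning set), giving a lower bound of $\prod_{c\in\Gamma(a)}(1-x(c))$. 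Combining yields $\Pr_{\mathcal{D}}[a] \le \Pr[a]\prod_{b\in\Gamma(a)}(1-x(b))^{-1}$, as required.

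The only thing that needs care is the induction bookkeeping in the core lemma — isolating the $\Gamma(c)$-part of the conditioning, keeping the conditioning sets strictly smaller so the hypothesis applies, and checking that every appeal to independence is genuinely about events on disjoint variable sets under the product measure; once the split along $\Gamma(a)$ is in place, the final bound is immediate.
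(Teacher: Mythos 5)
Your proof is correct: this is the standard conditional-probability induction for the asymmetric Lov\'asz Local Lemma together with the usual derivation of the LLL-distribution bound, and the key steps --- the core lemma $\Pr[A_c \mid \bigwedge_{b\in S}\overline{A_b}] \le x(c)$ proved by splitting $S$ into $S\cap\Gamma(c)$ and $S\setminus\Gamma(c)$, the telescoping lower bound on the denominator, and the decomposition of the conditioning along $\Gamma(a)$ for the final bound --- are all sound (with the implicit positivity of all conditioning events supplied by the same induction). The paper states this theorem without proof, citing Haeupler--Saha--Srinivasan, and your argument is precisely the standard one given there.
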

 
 Notice that this inequality is one-sided, as it ought to be. After all if we take $b$ to be some clause, and $a$ to be the event that $b$ is not satisfied then we know that $\mbox{Pr}_{\mathcal{D}}[a] = 0$ even though $\mbox{Pr}[a]$ is nonzero. However what this theorem does tell us is that the marginal distribution of $\mathcal{D}$ on any variable is close to uniform. We will establish a quantitative version of this statement in the following corollary:

\begin{corollary}\label{corr:uniform}
Suppose that $e D s  \leq 2^k$. Then for every variable $x_i$, we have
$$ \frac{1}{2} - \frac{2}{s} \leq \mbox{Pr}_{\mathcal{D}}[x_i = T] \leq \frac{1}{2} + \frac{2}{s}$$
\end{corollary}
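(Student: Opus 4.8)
The plan is to apply Theorem~\ref{thm:distributionLLL} twice, once to the event $a = \{x_i = T\}$ and once to $a = \{x_i = F\}$, and then convert the resulting multiplicative bounds into the additive bound claimed. First I would exhibit a valid choice of the function $x: \mathcal{C} \to (0,1)$. The natural candidate is the uniform choice $x(c) = \alpha$ for all clauses $c$, where $\alpha$ is some value slightly larger than $2^{-k}$; one checks the hypothesis $\Pr[c \text{ unsatisfied}] \le x(c)\prod_{b \in \Gamma(c)}(1 - x(b))$ using $\Pr[c \text{ unsatisfied}] = 2^{-|c|} \le 2^{-k}$, $|\Gamma(c)| \le D$, and the inequality $(1-\alpha)^D \ge 1 - \alpha D \ge$ (something bounded below), so that it suffices to have $\alpha(1-\alpha)^D \ge 2^{-k}$. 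Under the hypothesis $eDs \le 2^k$, taking $\alpha = 1/(D s / \text{const})$ or more cleanly $\alpha$ on the order of $e/2^k$ should work; this is the place where the constant $e$ and the slack factor $s$ get used.

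Next, for the event $a = \{x_i = T\}$, which depends only on the single variable $x_i$, we have $\Pr[a] = 1/2$ and $\Gamma(a)$ is the set of clauses containing $x_i$, so $|\Gamma(a)| \le d$ (not $D$ — this is the variable-degree bound, which is the relevant one here). Theorem~\ref{thm:distributionLLL} then gives $\Pr_{\mathcal{D}}[x_i = T] \le \tfrac12 (1-\alpha)^{-d}$, and symmetrically $\Pr_{\mathcal{D}}[x_i = F] \le \tfrac12(1-\alpha)^{-d}$. From the first inequality, $\Pr_{\mathcal{D}}[x_i = T] \le \tfrac12(1-\alpha)^{-d}$; from the second together with $\Pr_{\mathcal{D}}[x_i=T] = 1 - \Pr_{\mathcal{D}}[x_i=F]$, we get the lower bound $\Pr_{\mathcal{D}}[x_i = T] \ge 1 - \tfrac12(1-\alpha)^{-d}$. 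So it remains to show $\tfrac12(1-\alpha)^{-d} \le \tfrac12 + \tfrac2s$, i.e. $(1-\alpha)^{-d} \le 1 + \tfrac4s$, equivalently $(1-\alpha)^d \ge (1 + 4/s)^{-1} \ge 1 - 4/s$.

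The last step is the elementary estimate: with $\alpha$ of order $e \cdot 2^{-k}$ and $2^k \ge eDs \ge e\,d\,s$ (using $D \ge d$), we have $\alpha d \le d \cdot e \cdot 2^{-k} \le 1/s$, hence $(1-\alpha)^d \ge 1 - \alpha d \ge 1 - 1/s \ge 1 - 4/s$, which closes the argument (with generous room in the constants). I expect the only mildly delicate point to be pinning down the exact admissible value of $\alpha$ so that both the feasibility condition $\alpha(1-\alpha)^D \ge 2^{-k}$ and the bound $\alpha d \le O(1/s)$ hold simultaneously under the single hypothesis $eDs \le 2^k$; this is a routine balancing of constants rather than a genuine obstacle, and the factor $e$ in the hypothesis is precisely what makes it go through via $\alpha(1-\alpha)^D \approx \alpha e^{-\alpha D} \ge 2^{-k}$ when $\alpha \approx e/2^k$ and $\alpha D \le 1/s \le 1$.
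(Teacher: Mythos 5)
Your proposal is correct and follows essentially the same route as the paper: apply Theorem~\ref{thm:distributionLLL} with a uniform weight (the paper takes exactly $x(c) = \tfrac{1}{Ds}$, the first of your candidates), verify feasibility from $\Pr[c \text{ unsatisfied}] \le 2^{-k}$ and $(1-\tfrac{1}{Ds})^{D} \ge \tfrac{1}{e}$ via the hypothesis $eDs \le 2^k$, and bound $(1-\alpha)^{-|\Gamma(a)|} \le 1 + \tfrac{4}{s}$ for both events $x_i = T$ and $x_i = F$. The only cosmetic point is your final chain ending in ``$\ge 1 - 4/s$'': what is actually needed is $(1-\alpha)^d \ge (1+4/s)^{-1}$, which your stronger intermediate bound $1 - 1/s$ does supply (for $s$ not tiny, the bound being trivial otherwise), so nothing is genuinely missing.
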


\begin{proof}
Set $x(c) = \frac{1}{D s}$ for each clause $c$, and consider the event $a$ that $x_i = T$. Now invoking Theorem~\ref{thm:distributionLLL} we calculate:
\begin{eqnarray*}
\mbox{Pr}_{\mathcal{D}}[x_i = T] &\leq& \mbox{Pr}[x_i = T]  \prod_{b \in \Gamma(a)} \Big (1 - x(b) \Big )^{-1} \\
&\leq & \Big (\frac{1}{2} \Big ) \Big ( 1 - \frac{1}{Ds} \Big )^{-D} \leq \frac{1}{2} + \frac{2}{s}
\end{eqnarray*}
where the last inequality follows because $(1 - \frac{1}{Ds})^{-D} \leq e^{\frac{2}{s}} \leq 1 + \frac{4}{s}$. An identical calculation works for the event $x_i = F$. All that remains is to check that the condition in Theorem~\ref{thm:distributionLLL} holds, which is a standard calculation: If $c$ is a clause then 
$$\mbox{Pr}[c \mbox{ is unsatisfied}] \leq \Big ( \frac{1}{Ds} \Big ) \Big ( 1 - \frac{1}{Ds} \Big )^D$$
The left hand side is at most $2^{-k}$ because each clause has at least $k$ distinct variables, and the right hand side is at least $(\frac{1}{Ds}) (\frac{1}{e})$. Rearranging completes the proof.
\end{proof}

\noindent Notice that $k$ is still only logarithmic in $d$ but with a larger constant, and by increasing this constant  we get some useful facts about the marginals of the uniform distribution on satisfying assignments. 

\section{A Coupling Procedure}\label{sec:couple}

\subsection{Marked Variables}

Throughout this section we will assume that the number of variables per clause is between $k$ and $6k$. Now we are almost ready to define a coupling procedure. The basic strategy that we will employ is to start from either $x = T$ and $x = F$, and then sample from the corresponding marginal distribution on satisfying assignments. If we sample a variable $y$ next, then Corollary~\ref{corr:uniform} tells us that regardless of whether $x = T$ or $x = F$, each clause has at least $k -1$ variables remaining and so the marginal distribution on $y$ is still close to uniform. 

Thus we will try to couple the conditional distributions, when starting from $x = T$ or $x = F$ as well as we can, to show that the marginal distribution on variables that are all at least some distance $\Delta$ away must converge in total variation distance. There is, however, an important catch that motivates the need for a fix. Imagine that we continue in this fashion, sampling variables from the appropriate conditional distribution. We can reach a situation where a clause $c$ has all of its variables except $y$ set and yet the clause is still unsatisfied. The marginal distribution on $y$ is no longer close to uniform. Hence, reaching small clauses is problematic because then we cannot say much about the marginal distribution on the remaining variables and it would be difficult to construct a good coupling. 

Instead, our strategy is to use the Lov\'asz Local Lemma once more, but to decide on a set of variables in advance which we call {\em marked}. 

\begin{lemma}\label{lemma:marked}
Set $c_0 = e^{(\frac{1}{2})(\frac{1}{4})^2}$. Suppose that $2 e (D+1)  \leq c_0^{k}$. Then there is an assignment $$\mathcal{M}: \{x_i\}_{i=1}^n \rightarrow \{\mbox{marked}, \mbox{unmarked}\}$$
such that for every clause $c$, it has at least $\frac{k}{4}$ marked and at least $\frac{k}{4}$ unmarked variables.
\end{lemma}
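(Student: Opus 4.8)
The plan is to prove Lemma~\ref{lemma:marked} by a direct application of the (symmetric) Lov\'asz Local Lemma. First I would set up the probabilistic experiment: mark each variable $x_i$ independently with probability $\frac12$. For each clause $c$ define a ``bad event'' $B_c$ that $c$ has fewer than $k/4$ marked variables \emph{or} fewer than $k/4$ unmarked variables. Since $c$ has between $k$ and $6k$ variables, if $N_c \geq k$ is the number of variables in $c$, the number of marked variables among them is $\mathrm{Binomial}(N_c, \tfrac12)$, and I want to bound $\Pr[B_c]$. Because $k/4 \leq N_c/4 \leq N_c/2 - (N_c/2 - N_c/4) = N_c/2 - N_c/4$, the threshold $k/4$ is at least $N_c/4$ below the mean $N_c/2$ only when $N_c \geq \dots$ — actually the cleanest route is: the event that fewer than $k/4$ are marked requires a deviation of at least $N_c/2 - k/4 \geq N_c/2 - N_c/4 = N_c/4 \geq k/4$ below the mean (using $N_c \geq k$). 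By a Chernoff/Hoeffding bound for the binomial, $\Pr[\text{fewer than } k/4 \text{ marked}] \leq \exp(-2(k/4)^2/N_c)$. Here is where the constant $c_0$ comes in: $N_c \leq 6k$, so $\exp(-2(k/4)^2/N_c) \leq \exp(-2(k/4)^2/(6k)) = \exp(-k/48)$; but the lemma uses $c_0 = e^{(1/2)(1/4)^2} = e^{1/32}$, which corresponds to a bound like $\exp(-(1/2)(1/4)^2 k) = c_0^{-k}$, so I should double-check whether the intended Hoeffding application normalizes by $N_c$ or whether one should phrase the deviation as a fraction. I suspect the author uses the form: fraction of marked variables deviates from $\tfrac12$ by at least $\tfrac14$, and Hoeffding gives $\exp(-2 N_c (1/4)^2) \leq \exp(-2k(1/4)^2)$ — but then $c_0$ would be $e^{2(1/4)^2} = e^{1/8}$, not $e^{1/32}$; alternatively with a one-sided bound and a slightly lossy estimate one lands on $e^{(1/2)(1/4)^2}$. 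I will write the Hoeffding step so that it yields $\Pr[B_c] \leq 2 c_0^{-k}$ (factor $2$ from the two one-sided events), matching the hypothesis $2e(D+1) \leq c_0^k$.

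Next I would invoke the symmetric LLL: each bad event $B_c$ depends only on the mark-bits of the variables in $c$, so $B_c$ is mutually independent of all $B_{c'}$ except those $c'$ sharing a variable with $c$; the number of such $c'$ is at most $D$ (recall $D = 6dk$ is the stated bound on the dependency-graph degree when clauses have up to $6k$ variables). So the dependency degree is at most $D$, and by the symmetric LLL it suffices that $e \cdot p \cdot (D+1) \leq 1$ where $p = 2 c_0^{-k}$ is the probability bound. This is exactly the hypothesis $2 e (D+1) \leq c_0^k$. Hence with positive probability no $B_c$ occurs, which means there exists a marking $\mathcal{M}$ such that every clause has at least $k/4$ marked and at least $k/4$ unmarked variables — exactly the conclusion.

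For the write-up I would (i) state the random experiment, (ii) do the Chernoff/Hoeffding computation to get $\Pr[B_c] \leq 2c_0^{-k}$, being careful about the one-sided versus two-sided bound and the normalization that produces precisely the constant $c_0 = e^{(1/2)(1/4)^2}$, (iii) identify the dependency degree as $\leq D$, (iv) quote the symmetric LLL and check $e \cdot 2c_0^{-k} \cdot (D+1) \leq 1 \iff 2e(D+1) \leq c_0^k$, and (v) conclude existence. The main obstacle — really the only non-routine point — is step (ii): getting the tail bound on the binomial to land on exactly the claimed constant $c_0$. One has to be a little careful because the number of variables per clause is not exactly $k$ but ranges up to $6k$; the deviation from the mean, measured \emph{as a count}, is at least $k/4$ regardless of the clause size (since the mean is $N_c/2 \geq k/2$ and the threshold is $k/4$), and one wants a Hoeffding bound of the shape $\exp(-2(\text{count deviation})^2/N_c)$; with $N_c$ possibly as large as $6k$ this gives only $\exp(-k/48)$, so to recover $c_0^{-k} = e^{-k/32}$ one instead works with the deviation measured as a \emph{fraction} of $N_c$: the fraction of marked variables deviates from $\tfrac12$ by at least $\tfrac12 - \tfrac{k/4}{N_c}$; when $N_c$ is large this fraction-deviation is close to $\tfrac12$, and when $N_c = k$ it equals $\tfrac14$, so it is always $\geq \tfrac14$, and Hoeffding's inequality for the \emph{mean} of $N_c$ i.i.d.\ bits gives $\exp(-2N_c(1/4)^2) \leq \exp(-2k(1/4)^2)$. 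I would present the computation in whichever normalization actually yields the stated $c_0$ and note the factor of $2$ for the union over the two one-sided deviations, and that completes the proof.
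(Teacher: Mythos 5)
Your proposal is correct and follows essentially the same route as the paper: mark each variable independently with probability $\tfrac12$, bound the probability that a clause has fewer than $k/4$ marked (or unmarked) variables by a Chernoff/Hoeffding estimate, and apply the symmetric Lov\'asz Local Lemma with dependency degree at most $D$, so that the hypothesis $2e(D+1)\le c_0^{k}$ is exactly the LLL condition with $p = 2c_0^{-k}$. Your worry in step (ii) is harmless: you only need an \emph{upper} bound of $2c_0^{-k}$, and your fraction-based Hoeffding estimate $\exp(-2N_c(1/4)^2)\le e^{-k/8}\le e^{-k/32}=c_0^{-k}$ already implies the (deliberately lossy) constant the paper states.
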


\begin{proof}
We will choose each variable to be marked or unmarked with equal probability, and independently. Consider the $m$ bad events, one for each clause $c$, that $c$ does not have enough marked or enough unmarked variables. Then we have
$$\mbox{Pr}[c \mbox{ is bad}] \leq 2 e^{-(\frac{1}{2})(\frac{1}{4})^2 k} = 2 c_0^{-k}$$
which follows from the Chernoff bound. Now we can appeal to the Lov\'asz Local Lemma to get the desired conclusion. 
\end{proof}

\noindent Only the variables that are marked will be allowed to be set to either $T$ or $F$ by the coupling procedure. The above lemma guarantees that every clause $c$ always has enough remaining variables that can make it true that the marginal distribution on any marked variable always is close to uniform. 

\subsection{Factorizing Formulas}

Now fix a variable $x$. We will build up two partial assignments, and will use the notation 
$$\mathcal{A}_1(x) = T \mbox{ and } \mathcal{A}_2(x) = F$$
to indicate that the first partial assignment sets $x$ to $T$, and the second one sets $x$ to $F$. Furthermore we will refer to the conditional distribution that is uniform on all satisfying assignments consistent with the decisions made so far in $\mathcal{A}_1$ and $\mathcal{D}_1$. Similarly we will refer to the other conditional distribution as $\mathcal{D}_2$. Note that these distributions are updated as more variables are set. 

We can now state our goal. Suppose we have partial assignments $\mathcal{A}_1$ and $\mathcal{A}_2$. Then we will want to write
$$\Phi_{\mathcal{A}_1} = \Phi_{I_1} \wedge \Phi_{O_1}$$
where $\Phi_{\mathcal{A}_1}$ is the subformula we get after making the assignments in $\mathcal{A}_1$ and simplifying \--- i.e. removing literals (a variable or its negation) that are $F$, and deleting clauses that already have a literal set to $T$. Similarly we will want to write 
$$\Phi_{\mathcal{A}_2} = \Phi_{I_2} \wedge \Phi_{O_2}$$
Finally, we want the following conditions to be met:
\begin{enumerate}

\item[$(1)$] $\Phi_{O_1} = \Phi_{O_2} (:= \Phi_O)$

\item[$(2)$]  $\Phi_{I_1}$ and $\Phi_O$ share no variables, and similarly for $\Phi_{I_2}$ and $\Phi_O$

%\item All the variables at distance at least $\Delta$ from $x$ appear in $\Phi_O$. 

\end{enumerate}

\noindent The crucial point is that if we can find partial assignments $\mathcal{A}_1$ and $\mathcal{A}_2$ where $\Phi_{\mathcal{A}_1}$ and $\Phi_{\mathcal{A}_2}$ meet the above conditions, then the conditional distribution on all variables in $\Phi_O$ is exactly the same. We will use the notation
$$\mathcal{D}_1 \Big |_{\mbox{vars}(\Phi_O)}$$
to denote the conditional distribution of $\mathcal{D}_1$ projected onto just the variables in $\Phi_O$. Then we have:

\begin{lemma}
If the above factorization conditions are met, then 
$$\mathcal{D}_1 \Big |_{\mbox{vars}(\Phi_O)} = \mathcal{D}_2 \Big |_{\mbox{vars}(\Phi_O)}$$
\end{lemma}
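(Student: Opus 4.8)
The plan is to show that $\Phi_{O}$, together with its own variables, is a self-contained subformula that is ``disconnected'' from the parts of $\Phi_{\calA_1}$ and $\Phi_{\calA_2}$ that actually differ, so that the conditional distribution it carries cannot depend on which branch we took. Concretely, condition $(1)$ says $\Phi_{O_1} = \Phi_{O_2} = \Phi_O$, and condition $(2)$ says $\mathrm{vars}(\Phi_O)$ is disjoint from $\mathrm{vars}(\Phi_{I_1})$ and from $\mathrm{vars}(\Phi_{I_2})$. Since $\Phi_{\calA_1} = \Phi_{I_1} \wedge \Phi_O$ and the two conjuncts share no variables, the set of satisfying assignments of $\Phi_{\calA_1}$ is exactly the product set $\{\text{sat.\ assignments of }\Phi_{I_1}\} \times \{\text{sat.\ assignments of }\Phi_O\}$ over the disjoint variable blocks $\mathrm{vars}(\Phi_{I_1})$ and $\mathrm{vars}(\Phi_O)$. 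The uniform distribution on a product set is the product of the uniform distributions on the factors, so $\calD_1$ factorizes as (uniform on satisfying assignments of $\Phi_{I_1}$) $\otimes$ (uniform on satisfying assignments of $\Phi_O$), and in particular its marginal on $\mathrm{vars}(\Phi_O)$ is precisely the uniform distribution on the satisfying assignments of $\Phi_O$.

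First I would make the variable bookkeeping explicit. After imposing $\calA_1$ and simplifying, the variables of $\Phi_{\calA_1}$ are exactly the unset variables appearing in some surviving clause; condition $(2)$ partitions them as $\mathrm{vars}(\Phi_{I_1}) \sqcup \mathrm{vars}(\Phi_O)$, with no clause of $\Phi_{\calA_1}$ straddling the two blocks (a clause with a variable from each block would contradict the disjointness implicit in writing $\Phi_{\calA_1} = \Phi_{I_1} \wedge \Phi_O$ as a partition of clauses into variable-disjoint pieces). Then I would invoke the elementary fact that if a CNF $\Psi = \Psi' \wedge \Psi''$ with $\mathrm{vars}(\Psi') \cap \mathrm{vars}(\Psi'') = \emptyset$, then an assignment satisfies $\Psi$ iff its restriction to $\mathrm{vars}(\Psi')$ satisfies $\Psi'$ and its restriction to $\mathrm{vars}(\Psi'')$ satisfies $\Psi''$ — hence the satisfying set is a Cartesian product and the uniform measure on it is the corresponding product measure. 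Applying this to $\Phi_{\calA_1}$ yields $\calD_1|_{\mathrm{vars}(\Phi_O)} = \mathrm{Unif}(\mathrm{SAT}(\Phi_O))$, and applying it identically to $\Phi_{\calA_2}$ yields $\calD_2|_{\mathrm{vars}(\Phi_O)} = \mathrm{Unif}(\mathrm{SAT}(\Phi_O))$. Since by condition $(1)$ the formula $\Phi_O$ is literally the same object in both cases, the two marginals coincide, which is the claim. (One should note in passing that $\mathrm{SAT}(\Phi_{\calA_1})$ is nonempty — so that the uniform distribution $\calD_1$ is well defined — which is part of the standing hypothesis that we only ever reach partial assignments extendable to a global satisfying assignment; this also forces $\mathrm{SAT}(\Phi_O) \neq \emptyset$.)

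I do not expect any genuine obstacle here: the content is the observation that variable-disjoint conjunction corresponds to independent uniform sampling. The only point requiring a modicum of care is the passage from ``the clause sets partition with disjoint variables'' to ``the satisfying set is a genuine product set,'' i.e.\ verifying that no surviving clause of $\Phi_{\calA_1}$ mixes a variable of $\Phi_O$ with a variable of $\Phi_{I_1}$; but this is exactly what conditions $(1)$ and $(2)$ encode, since $\Phi_O$ is by definition the collection of surviving clauses whose variables lie in $\mathrm{vars}(\Phi_O)$, and condition $(2)$ guarantees those variables are disjoint from $\mathrm{vars}(\Phi_{I_1})$. So the proof is short: unpack the factorization, note product set $\Rightarrow$ product measure $\Rightarrow$ marginal on $\mathrm{vars}(\Phi_O)$ equals $\mathrm{Unif}(\mathrm{SAT}(\Phi_O))$ on both sides, and conclude by $\Phi_{O_1} = \Phi_{O_2}$.
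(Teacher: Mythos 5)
Your proof is correct and follows essentially the same route as the paper: the variable-disjoint factorization $\Phi_{\mathcal{A}_i} = \Phi_{I_i} \wedge \Phi_O$ makes $\mathcal{D}_i$ a product of the uniform distributions on the satisfying assignments of the two factors, so the marginal on $\mathrm{vars}(\Phi_O)$ is uniform on $\mathrm{SAT}(\Phi_O)$ in both cases, and condition $(1)$ makes these identical. Your added remarks on non-emptiness and on why no clause can straddle the two variable blocks are fine elaborations of what the paper states more tersely.
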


\begin{proof}
From the assumption that $\Phi_{\mathcal{A}_1} = \Phi_{I_1} \wedge \Phi_{O}$ and because $\Phi_{I_1}$ and $\Phi_O$ share no variables, it means that there are no clauses that contain variables from both the subformulas $\Phi_{I_1}$ and $\Phi_O$. Any such clause would prevent us from writing the formula $\Phi_{\mathcal{A}_1}$ in such a factorized form. 
Thus the distribution $\mathcal{D}_1$ is simply the cross product of the uniform distributions on satisfying assignments to $\Phi_{I_1}$ and $\Phi_O$. An identical statement holds for $\mathcal{D}_2$ which completes the proof. 
\end{proof}

\noindent Note that meeting the factorization conditions does {\em not} mean that the {\em number} of satisfying assignments to $\Phi_{\mathcal{A}_1}$ and $\Phi_{\mathcal{A}_2}$ are the same.

\subsection{Factorization via Coupling}

Our goal in this subsection is to give a coupling procedure to generate partial assignments $\mathcal{A}_1$ and $\mathcal{A}_2$ starting from $x = T$ and $x = F$ respectively, that result in a factorized formula. In fact, we will set exactly the same set $S$ of variables in both, although not all variables will be set to the same value in the two partial assignments and this set $S$ will also be random. 

There are two important constraints that we will impose on how we construct the partial assignments, that will make it somewhat tricky. First, suppose we have only set the variable $x$ and next we choose to set the variable $y$ in both $\mathcal{A}_1$ and $\mathcal{A}_2$. We will want that the distribution on how we set $y$ in the coupling procedure in $\mathcal{A}_1$ to match the conditional distribution $\mathcal{D}_1$ and similarly for $\mathcal{A}_2$. Now suppose we terminate with some set $S$ having been set. We can continue sampling the variables in $\bar{S}$ from $\mathcal{D}_1$, and we are now guaranteed that the full assignment we generate is uniform from the set of assignments with $x = T$. An identical statement holds when starting with $x = F$. Second, we will want that with very high probability, the coupling procedure terminates with not too many variables in the formula $\Phi_{I_1}$ or $\Phi_{I_2}$. Finally, we will assume that we are given access to a powerful oracle:

\begin{definition}
We will call the following a {\em conditional distribution oracle}: Given a CNF formula $\Phi$, a partial assignment $\mathcal{A}$ and a variable $y$ it can answer with the probability that $y = T$ in a uniformly random satisfying assignment that is also consistent with $\mathcal{A}$
\end{definition}

\noindent Such an oracle is obviously very powerful, and it is well known that if we had access to it we could compute the number of satisfying assignments to $\Phi$ exactly with a polynomial number of queries. However one should think of the coupling procedure as a thought experiment, which will be useful in an indirect way to build up towards our algorithm for approximate counting. 

\begin{fragment*}[t]
\caption{
\label{alg:iterrefine}{\sc Coupling Procedure}, \\ \textbf{Input:} Monotone CNF $\Phi$, variable $x$ and conditional distribution oracle $F$  \vspace*{0.01in}
}

\begin{enumerate} \itemsep 0pt
\small 
\item Using Lemma~\ref{lemma:marked}, label variables as marked or unmarked
\item Initialize $\mathcal{A}_1(x) = T$ and $\mathcal{A}_2(x) = F$
\item Initialize $V_I = \{x\}$ and $V_O = \{x_i\}_{i = 1}^n \setminus \{x\}$
\item While there is a clause $c$ with variables in both $V_I$ and $V_O$
\item $\quad$ Sequentially sample its marked variables (if any) from $\mathcal{D}_1$ and $\mathcal{D}_2$, using $F$ to construct best coupling at each step\footnotemark
\item $\quad$ Case \# 1: $c$ is satisfied by variables already set in both $\mathcal{A}_1$ and $\mathcal{A}_2$
\item $\quad$ $\quad$ Let $S$ be the variables in $c$ that have different truth values in $\mathcal{A}_1$ and $\mathcal{A}_2$. 
\item $\quad$ $\quad$ Update $V_I \leftarrow V_I \cup S$, $V_O \leftarrow V_O \setminus S$
\item $\quad$ $\quad$ Delete $c$
\item $\quad$ Case \# 2: $c$ is not satisfied by variables already set in either $\mathcal{A}_1$ or $\mathcal{A}_2$
\item $\quad$ $\quad$ Let $S$ be all variables in $c$ (marked or unmarked)
\item $\quad$ $\quad$ Update $V_I \leftarrow V_I \cup S$, $V_O \leftarrow V_O \setminus S$
\item End
\end{enumerate} 

\end{fragment*}

\footnotetext{Here by ``best coupling at each step" we mean that sequentially for each variable we want to maximize the probability that $\mbox{Pr}[\mathcal{A}_1(y) = \mathcal{A}_2(y)]$ while preserving the fact that $y$ is set in  $\mathcal{A}_1$ and $\mathcal{A}_2$ according to $\mathcal{D}_1$ and $\mathcal{D}_2$ respectively.}

Notice that a clause $c$ can only trigger the WHILE loop at most once. If it ends up in Case \# 1 then it is deleted from the formula. If it ends up in Case \# 2 then all its variables are included in $V_I$ and once a variable is included in $V_I$ it is never removed. Thus the procedure clearly terminates. Our first step is to show that when it does, the formula factorizes. 
Let $\mathcal{C}_I$ be the set of remaining clauses which have all of their variables in $V_I$. Similarly let $\mathcal{C}_O$ be the set of remaining clauses which have all of their variables in $V_O$. Then set
$$\Phi'_I = \wedge_{c \in \mathcal{C}_I} c$$
and let $\Phi_{I_1}$ and $\Phi_{I_2}$ be the simplification of $\Phi'_I$ with respect to the partial assignments $\mathcal{A}_1$ and $\mathcal{A}_2$. Similarly set
$$\Phi'_O = \wedge_{c \in \mathcal{C}_O} c$$
and let $\Phi_{O_1}$ and $\Phi_{O_2}$ be the simplification of $\Phi'_O$ with respect to the partial assignments $\mathcal{A}_1$ and $\mathcal{A}_2$.

\begin{claim}\label{claim:vi}
All variables with different truth assignments in $\mathcal{A}_1$ and $\mathcal{A}_2$ are in $V_I$. 
\end{claim}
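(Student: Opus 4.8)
The plan is to prove this by tracking where variables with differing truth values can first arise, and to show this always happens inside the WHILE loop's handling of a clause $c$, at which point all such variables are immediately swept into $V_I$. I would proceed by a straightforward induction on the number of iterations of the WHILE loop. The invariant to maintain is: after each iteration (and at initialization), every variable that is set in both $\mathcal{A}_1$ and $\mathcal{A}_2$ but receives different truth values lies in $V_I$.

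For the base case, only $x$ is set, and $x$ is placed in $V_I$ by line~3, so the invariant holds trivially. For the inductive step, consider an iteration in which the loop picks a clause $c$ having variables in both $V_I$ and $V_O$. The only variables whose assignments change during this iteration are the marked variables of $c$ sampled in line~5 (these are newly set, not previously set in either $\mathcal{A}_1$ or $\mathcal{A}_2$), so any variable that already had differing values was in $V_I$ by the inductive hypothesis and stays there, since $V_I$ is only ever enlarged. Thus I only need to handle the variables of $c$ that are freshly set — and more generally all variables of $c$ — and show that any of them with differing truth values ends up in $V_I$ after the update. In Case~2, line~11 puts \emph{all} variables of $c$ into $V_I$, so there is nothing to check. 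In Case~1, the set $S$ defined in line~7 is precisely the set of variables \emph{in $c$} with differing truth values in $\mathcal{A}_1$ and $\mathcal{A}_2$, and line~8 adds exactly this $S$ to $V_I$; hence every variable of $c$ with a differing value is in $V_I$ after the update. Combined with the inductive hypothesis, the invariant is restored.

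The one subtlety worth spelling out — and the only place the argument could slip — is the claim that \emph{all} variables with differing values, across the entire formula and not just within $c$, are covered: a variable could in principle have been set in an earlier iteration. But this is exactly what the inductive hypothesis supplies, and the key structural observation is that the coupling procedure never sets a variable outside the context of processing some clause $c$ in the WHILE loop, and whenever it does so the variable (or at least the offending subset $S$) is moved into $V_I$ in that same iteration, from which it is never removed (as noted in the paragraph preceding the claim). So there is no mechanism by which a differing-value variable can ever sit in $V_O$. This gives the claim; the whole argument is essentially bookkeeping about the two cases of the WHILE loop, so I do not anticipate a genuine obstacle, only the need to state the induction invariant precisely enough that Case~1 and Case~2 both visibly fall under it.
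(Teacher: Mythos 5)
Your proof is correct and follows essentially the same route as the paper's: variables only acquire differing truth values at the moment they are sampled while some clause $c$ is being processed, and in both Case~1 and Case~2 the offending variables are placed into $V_I$, which is never shrunk. Your version merely makes the induction over WHILE-loop iterations explicit where the paper states the same observation in two sentences.
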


\begin{proof}
A variable is set in response to it being contained in some clause $c$ that triggers the WHILE loop. Any such variable is moved into $V_I$ in both Case \# 1 and Case \# 2. 
\end{proof}

\noindent Now we have an immediate corollary that helps us towards proving that we have found partial assignments for which $\Phi$ factorizes:

\begin{corollary}
$\Phi_{O_1} = \Phi_{O_2}$
\end{corollary}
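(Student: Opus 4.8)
The plan is to reduce the claim to two simple facts about the coupling procedure and then finish with a routine bookkeeping argument about simplification. Recall that $\Phi_{O_1}$ and $\Phi_{O_2}$ are, by definition, the simplifications of the \emph{same} formula $\Phi'_O = \wedge_{c \in \mathcal{C}_O} c$ with respect to $\mathcal{A}_1$ and $\mathcal{A}_2$ respectively, so the only possible source of a discrepancy between them is the differing effect of the two partial assignments. Hence it suffices to show that $\mathcal{A}_1$ and $\mathcal{A}_2$ restrict every clause $c \in \mathcal{C}_O$ in exactly the same way.

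First I would observe that $\mathcal{C}_O$ itself is unambiguous: a clause leaves the formula only when it is deleted in Case~\#1 of the WHILE loop, and the test for Case~\#1, together with the updates to $V_I$ and $V_O$, is phrased symmetrically in terms of both partial assignments, so neither the final sets $V_I, V_O$ nor the set of surviving clauses depends on whether we are tracking $\mathcal{A}_1$ or $\mathcal{A}_2$. Second, by the structure of Step~5 a variable is always sampled for $\mathcal{A}_1$ and for $\mathcal{A}_2$ at the same time, so the set of variables that receive a truth value is literally the same in both partial assignments; combining this with Claim~\ref{claim:vi} gives that every variable $v \in V_O$ is either unset in both $\mathcal{A}_1$ and $\mathcal{A}_2$, or else set to the same value in both.

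To conclude, fix $c \in \mathcal{C}_O$; by definition all variables of $c$ lie in $V_O$. Simplifying $c$ under a partial assignment deletes exactly the literals whose variable is set to the falsifying value and deletes $c$ entirely if some literal's variable is set to the satisfying value, so the outcome depends only on the restriction of the assignment to $\mathrm{vars}(c)$. By the previous paragraph this restriction agrees for $\mathcal{A}_1$ and $\mathcal{A}_2$, hence $c$ simplifies to the same clause (or vanishes) under both; taking the conjunction over $c \in \mathcal{C}_O$ yields $\Phi_{O_1} = \Phi_{O_2}$. There is no real obstacle here — the only point that needs a moment's care is checking that ``being set'' is an assignment-independent property of a variable and that $\mathcal{C}_O$ (equivalently $\Phi'_O$) is assignment-independent, both of which are immediate from the way the coupling procedure is written.
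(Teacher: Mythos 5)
Your proof is correct and follows the same route as the paper: invoke Claim~\ref{claim:vi} to conclude that $\mathcal{A}_1$ and $\mathcal{A}_2$ agree on $V_O$, and then observe that simplifying $\Phi'_O$ (whose variables all lie in $V_O$) under two assignments that agree there yields the same formula. Your write-up is just a more detailed version of the paper's two-line argument, spelling out the assignment-independence of $\mathcal{C}_O$ and of which variables are set.
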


\begin{proof}
Recall that $\Phi_{O_1}$ and $\Phi_{O_2}$ come from simplifying $\Phi'_O$ (which contains only variables in $V_O$) according to $\mathcal{A}_1$ and $\mathcal{A}_2$. From Claim~\ref{claim:vi}, we know that  $\mathcal{A}_1$ and $\mathcal{A}_2$ are the same restricted to $V_O$ and thus we get the same formula in both cases. 
\end{proof}

\noindent Now that we know they are equal, we can define $\Phi_O = \Phi_{O_1} = \Phi_{O_2}$. What remains is to show that the subformulas we have are actually factorizations of the original formula $\Phi$:

\begin{lemma}\label{lemma:isfactored}
$\Phi_{\mathcal{A}_1} = \Phi_{I_1} \wedge \Phi_{O}$ and $\Phi_{\mathcal{A}_2} = \Phi_{I_2} \wedge \Phi_{O}$
\end{lemma}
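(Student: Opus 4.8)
The plan is to show that when the coupling procedure halts, the clauses of $\Phi$ split into three disjoint groups --- the clauses deleted during the run, the clauses of $\mathcal{C}_I$, and the clauses of $\mathcal{C}_O$ --- and that simplifying by $\mathcal{A}_1$ (resp.\ $\mathcal{A}_2$) kills exactly the first group and then acts clause-by-clause on the other two, yielding $\Phi_{I_1}\wedge\Phi_{O}$ (resp.\ $\Phi_{I_2}\wedge\Phi_{O}$).

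First I would record the invariant that $V_I$ and $V_O$ always partition $\{x_1,\dots,x_n\}$: it holds at initialization, and every update $V_I\leftarrow V_I\cup S$, $V_O\leftarrow V_O\setminus S$ preserves it, so throughout the run each variable lies in exactly one of $V_I$, $V_O$. Next I would invoke the exit condition of the WHILE loop: once it terminates, no clause still present in the formula has variables in both $V_I$ and $V_O$. Combined with the partition invariant, this forces every surviving (non-deleted) clause to have all of its variables in $V_I$ --- hence to lie in $\mathcal{C}_I$ --- or all of its variables in $V_O$ --- hence to lie in $\mathcal{C}_O$ --- and since $V_I\cap V_O=\emptyset$ and clauses are nonempty, these two alternatives are mutually exclusive. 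Thus $\mathcal{C}_I$ and $\mathcal{C}_O$ partition the set of surviving clauses, so $\Phi'_I\wedge\Phi'_O$ is, up to reordering, exactly the conjunction of all clauses of $\Phi$ that were not deleted. A small point to verify here is that no clause gets stranded ``half in $V_I$'': a clause is examined by the loop only when it has variables on both sides, and in Case~\#2 its entire variable set is moved into $V_I$ in that same iteration while in Case~\#1 it is deleted; and since the WHILE condition is re-checked after each iteration, a clause that only later acquires a $V_I$-variable is eventually re-examined.

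Then I would dispose of the deleted clauses: deletion occurs only in Case~\#1, and Case~\#1 triggers only when the clause is already satisfied by the variables set in both $\mathcal{A}_1$ and $\mathcal{A}_2$; hence each deleted clause is satisfied by $\mathcal{A}_1$ and by $\mathcal{A}_2$, and so it vanishes upon simplification with respect to either partial assignment. Consequently $\Phi_{\mathcal{A}_1}$, the simplification of $\Phi$ under $\mathcal{A}_1$, equals the simplification of $\Phi'_I\wedge\Phi'_O$ under $\mathcal{A}_1$. Finally, simplification (deleting satisfied clauses, removing false literals) acts on each clause independently, so it commutes with $\wedge$; therefore this equals (simplification of $\Phi'_I$ under $\mathcal{A}_1$)$\,\wedge\,$(simplification of $\Phi'_O$ under $\mathcal{A}_1$), i.e.\ $\Phi_{I_1}\wedge\Phi_{O_1}=\Phi_{I_1}\wedge\Phi_O$ using the already-established identity $\Phi_{O_1}=\Phi_{O_2}=\Phi_O$. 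Running the identical argument with $\mathcal{A}_2$ gives $\Phi_{\mathcal{A}_2}=\Phi_{I_2}\wedge\Phi_O$.

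I expect the only real care needed is the bookkeeping at the $V_I$/$V_O$ boundary --- the two points flagged above about Case~\#1 and Case~\#2 --- and keeping straight the distinction between a clause being \emph{deleted by the procedure} (a Case~\#1 event) and a clause being \emph{removed during simplification} (being satisfied by $\mathcal{A}_i$); the argument only ever uses that the former implies the latter, which is precisely the Case~\#1 hypothesis. Everything else follows directly from the definitions of $\mathcal{C}_I$, $\mathcal{C}_O$, $\Phi_{I_i}$, $\Phi_{O_i}$ together with Claim~\ref{claim:vi} and its corollary.
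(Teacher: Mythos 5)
Your proposal is correct and follows essentially the same route as the paper: at termination every clause of $\Phi$ is either deleted (and then, by the Case~\#1 condition, satisfied under both $\mathcal{A}_1$ and $\mathcal{A}_2$, hence removed by simplification anyway) or lies wholly in $\mathcal{C}_I$ or wholly in $\mathcal{C}_O$, after which simplification acts clause-by-clause and Corollary (the identity $\Phi_{O_1}=\Phi_{O_2}$) finishes the argument. You merely spell out the bookkeeping (the $V_I/V_O$ partition invariant and the loop exit condition) that the paper leaves implicit.
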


\begin{proof}
When the WHILE loop terminates, every clause $c$ in the original formula $\Phi$ either has all of its variables in $V_I$ or in $V_O$, or was deleted because it already contains at least one variable in both $\mathcal{A}_1$ and $\mathcal{A}_2$ that satisfies it (although it need not be the same variable). Hence every clause in $\Phi$ that is not already satisfied in both $\mathcal{A}_1$ and $\mathcal{A}_2$ shows up in $\Phi'_I \wedge \Phi'_O$. Some clauses that are already satisfied in both may show up as well. In any case, this completes the proof because the remaining operation just simplifies the formulas according to the partial assignments. 
\end{proof}

\subsection{How Quickly Does the Coupling Procedure Terminate?}

What remains is to bound the probability that the number of variables included in $V_I$ is at most $t$. First we need an elementary definition:

\begin{definition}
When a variable $x_i$ is given different truth assignments in $\mathcal{A}_1$ and $\mathcal{A}_2$, we call it a {\em type $1$} error. When a clause $c$ has all of its marked variables set in both $\mathcal{A}_1$ and $\mathcal{A}_2$, but in at least one of them is not yet satisfied, we call it a {\em type $2$} error.
\end{definition}

\noindent Note that it is possible for a variable to participate in both a type $1$ and type $2$ error. In any case, these are the {\em only} reasons that a variable is included in $V_I$ in an execution of the coupling procedure:

\begin{observation}\label{obs:errors}
All variables in $V_I$ are included either due to a type $1$ error or a type $2$ error, or both. 
\end{observation}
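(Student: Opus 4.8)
The plan is a direct case analysis on why a variable is placed into $V_I$ during an execution of the coupling procedure. Inspecting Algorithm~\ref{alg:iterrefine}, there are exactly three mechanisms by which a variable $v$ can belong to $V_I$ at termination: $(a)$ $v = x$, the initial variable; $(b)$ $v$ is added in the update of Case \#1 for some clause $c$ that triggered the WHILE loop; or $(c)$ $v$ is added in the update of Case \#2 for some such clause $c$. Since once a variable is placed in $V_I$ it is never removed, it suffices to account for these three entry points.

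For $(a)$: by initialization $\mathcal{A}_1(x) = T$ and $\mathcal{A}_2(x) = F$, so $x$ receives different truth assignments in the two partial assignments and is therefore a type $1$ error. For $(b)$: by the definition of the set $S$ in Case \#1, $v$ is a variable of $c$ whose truth value differs in $\mathcal{A}_1$ and $\mathcal{A}_2$, hence $v$ is a type $1$ error. For $(c)$: before the case split the procedure sequentially samples every marked variable of $c$ from $\mathcal{D}_1$ and $\mathcal{D}_2$, so upon reaching Case \#2 all marked variables of $c$ have been set in both $\mathcal{A}_1$ and $\mathcal{A}_2$; moreover the Case \#2 hypothesis is precisely that $c$ is not satisfied by the variables already set in either partial assignment, so in particular $c$ is unsatisfied in at least one of them. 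Hence $c$ is a type $2$ error, and since $v \in S$ is one of the variables of $c$, $v$ is included on account of this type $2$ error.

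Combining the three cases — and noting that $x$ itself falls under case $(a)$ — every variable of $V_I$ is included because of a type $1$ error (namely itself) or because it lies in a clause that caused a type $2$ error, or both, which is exactly the statement. The only point requiring a little care is case $(c)$: one must invoke that step $5$ of the procedure has already sampled all of $c$'s marked variables, so that the condition ``all marked variables of $c$ are set in both $\mathcal{A}_1$ and $\mathcal{A}_2$'' in the definition of a type $2$ error really does hold, which is consistent with Lemma~\ref{lemma:marked} guaranteeing that $c$ has at least $k/4 \ge 1$ marked variables in the first place. Beyond that the argument is pure bookkeeping, so I do not anticipate any genuine obstacle.
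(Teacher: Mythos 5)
Your case analysis is correct and is exactly the reasoning the paper relies on (the paper states this observation without proof, treating it as immediate from the two cases of the coupling procedure and the initialization of $x$). Your only substantive addition — checking that step 5 has already set all marked variables of $c$ before Case \#2 is entered, so that the type $2$ error definition genuinely applies — is the right point to verify and is handled correctly.
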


Now our approach to showing that $V_I$ contains not too many variables with high probability is to show that if it did, there would be a large collection of disjoint errors. First we construct a useful graph underlying the process:

\begin{definition}
Let $G$ be the graph on vertices $V_I$ where we connect variables if and only if they appear in the same clause together ({\em any} clause from the original formula $\Phi$). 
\end{definition}

\noindent The crucial property is that it is connected:

\begin{observation}
$G$ is connected
\end{observation}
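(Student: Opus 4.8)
The plan is to prove that $G$ is connected by tracing through how $V_I$ is built up during the coupling procedure. Recall $V_I$ is initialized to $\{x\}$, which is trivially connected as a single vertex. The key observation is that $V_I$ only ever grows inside the WHILE loop, and it grows in a controlled way: in any iteration, the loop is triggered by a clause $c$ that has variables in \emph{both} $V_I$ and $V_O$, and in both Case~\#1 and Case~\#2 the set $S$ of new variables added to $V_I$ consists entirely of variables of that single clause $c$. So my approach is an induction on the number of WHILE-loop iterations, maintaining the invariant that at every point in the execution, the subgraph of $G$ induced on the current $V_I$ is connected.

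First I would set up the induction. The base case is $V_I = \{x\}$, connected. For the inductive step, suppose $V_I$ is connected before some iteration of the WHILE loop; let $c$ be the clause that triggers it, with at least one variable $v \in V_I$ and at least one variable in $V_O$. After possibly sampling marked variables of $c$, in either case the procedure updates $V_I \leftarrow V_I \cup S$ where $S \subseteq \mathrm{vars}(c)$. I claim the new $V_I \cup S$ is still connected: every vertex of $S$ lies in the clause $c$, hence (by the definition of $G$) is adjacent to $v \in V_I$; so each newly added vertex has an edge into the old (connected) $V_I$, and the union remains connected. Since every variable that ever enters $V_I$ does so through exactly this mechanism, the invariant holds throughout, and in particular $G$ is connected when the procedure terminates.

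One small point I would be careful about: the sampling step (line~5) sets some marked variables of $c$, but whether or not those variables end up in $V_I$ depends on which Case occurs --- in Case~\#1 only the variables of $c$ with differing truth values go into $V_I$, and in Case~\#2 all of $c$'s variables do. In either sub-case the set added is a subset of $\mathrm{vars}(c)$, and $\mathrm{vars}(c) \cap V_I \neq \emptyset$ at the moment the loop fires (that is precisely the loop condition), so the argument above applies uniformly. Note also that edges of $G$ are never removed (the vertex set only grows and $G$ is defined via clauses of the \emph{original} formula $\Phi$), so connectivity, once established, is preserved.

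I do not expect any real obstacle here --- this is a structural bookkeeping argument about the algorithm. The only thing requiring mild attention is making sure the loop-invariant is stated at the right granularity (after each WHILE iteration, not after each variable-sampling sub-step, since intermediate states during sampling do not yet reflect the $V_I$ update). Once phrased that way, the proof is a one-line induction.

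\begin{proof}
We show by induction on the number of completed iterations of the WHILE loop that the subgraph of $G$ induced on the current set $V_I$ is connected. Initially $V_I = \{x\}$ is a single vertex, so the claim holds. Now consider an iteration triggered by a clause $c$; by the loop condition, $c$ has a variable $v \in V_I$ and a variable in $V_O$. In both Case~\#1 and Case~\#2 the update is $V_I \leftarrow V_I \cup S$ for some $S \subseteq \mathrm{vars}(c)$. Every variable $u \in S$ appears in the clause $c$ together with $v \in V_I$, so $u$ is adjacent to $v$ in $G$. Hence each newly added vertex is connected to the (connected, by induction) old $V_I$, and $V_I \cup S$ is connected. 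Since edges of $G$ are defined by clauses of the original formula $\Phi$ and the vertex set $V_I$ only grows, connectivity is preserved, and in particular $G$ is connected when the procedure terminates.
\end{proof}
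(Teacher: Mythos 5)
Your proof is correct and is essentially the paper's own argument: an induction over WHILE-loop iterations using the fact that the triggering clause $c$ already has a variable in $V_I$, so every newly added variable is adjacent in $G$ to that variable. You simply spell out the bookkeeping in more detail than the paper does.
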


\begin{proof}
This property holds by induction. Assume that at the start of the WHILE loop, the property holds. Then at the end of the loop, any variable $x_i$ added to $V_I$ must have been contained in a clause $c$ that at the outset had one of its variables in $V_I$. This completes the proof. 
\end{proof}

Now by Observation~\ref{obs:errors}, for every variable in $V_I$ we can blame it on either a type $1$ or a type $2$ error. Both of these types of errors are unlikely. But for each variable, charging it to an error is problematic because of overlaps in the events. In particular, suppose we have two variables $x_i$ and $x_j$ that are both included in $V_I$. It could be that both variables are in the same clause $c$ which resulted in a type $2$ error, in which case we could only charge one of the variables to it. This turns out not to be a major issue. 

The more challenging type of overlap is when two clauses $c$ and $c'$ both experience type $2$ errors and overlap. In isolation, each clause would be unlikely to experience a type $2$ error. But it could be that $c$ and $c'$ share all but one of their marked variables, in which case once we know that $c$ experiences a type $2$ error, then $c'$ has a reasonable chance of experiencing one as well. We will get around this issue by building what we call a $3$-tree. This approach is inspired by Noga Alon's parallel algorithmic local lemma \cite{A} where he uses a $2,3$-tree. 

\begin{definition}
We call a graph $T$ on subset of $V_I$ a $3$-tree if each vertex is distance at least $3$ from all the others, and when we add edges between vertices at distance exactly $3$ the tree is connected. 
\end{definition}

Next we show that $G$ contains a large $3$-tree:

\begin{lemma}\label{lemma:3tree}
Suppose that any clause contains between $k$ and $6k$ variables. Then any maximal $3$-tree contains at least $\frac{|V_I|}{2(6dk)^2}$ vertices.
\end{lemma}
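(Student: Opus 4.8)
The plan is to build a maximal $3$-tree greedily and then bound how many vertices of $G$ are ``used up'' by each vertex we add, using the bounded-degree structure of $\Phi$.

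First I would set up the relevant degree count. Recall $G$ is the graph on $V_I$ where two variables are adjacent iff they co-occur in some clause of $\Phi$. Since each clause has at most $6k$ variables and each variable lies in at most $d$ clauses, every vertex of $G$ has degree at most $6dk - 1 < 6dk$ (a variable sees at most $6k-1$ others per clause, across at most $d$ clauses). Consequently the number of vertices within distance $\le 2$ of any fixed vertex $v$ in $G$ is at most $1 + 6dk + (6dk)^2 \le 2(6dk)^2$ (absorbing lower-order terms; this is where the constant $2(6dk)^2$ comes from, and it should be double-checked that the rounding is generous enough, but it clearly is).

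Next I would run the greedy/maximality argument. Let $T$ be a maximal $3$-tree in $G$, i.e. we cannot add any further vertex of $V_I$ to $T$ while keeping all pairwise distances $\ge 3$ and keeping the auxiliary graph (edges between $T$-vertices at distance exactly $3$) connected. I claim every vertex $w \in V_I$ is within distance $2$ of some vertex of $T$. Suppose not; pick such a $w$ at minimum distance from $T$ in $G$ — since $G$ is connected (proved earlier) this distance is finite, and by assumption it is $\ge 3$. Walk along a shortest path from $w$ toward $T$ and let $u$ be the first vertex on this path whose distance to $T$ is exactly $3$ (such a vertex exists: distances decrease by exactly $1$ along the path and eventually hit $0$, so they pass through $3$). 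Then $u$ is at distance $\ge 3$ from every vertex of $T$ and at distance exactly $3$ from at least one vertex of $T$; adding $u$ to $T$ keeps all pairwise distances $\ge 3$ and keeps the auxiliary graph connected (the new vertex attaches via its distance-$3$ neighbor in $T$). This contradicts maximality of $T$.

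Finally I would combine the two bounds: every vertex of $V_I$ lies in the distance-$\le 2$ ball of some $T$-vertex, each such ball has size at most $2(6dk)^2$, so $|V_I| \le |T| \cdot 2(6dk)^2$, giving $|T| \ge |V_I| / (2(6dk)^2)$ as claimed. The main obstacle is the maximality step — getting the definition of ``maximal $3$-tree'' to interact correctly with the greedy extension, in particular verifying that the vertex $u$ we find really can be attached while preserving connectivity of the auxiliary (distance-exactly-$3$) graph; the degree-counting part is routine. I would also pay attention to whether the intended notion of maximality is ``no vertex can be added'' versus ``inclusion-maximal among all $3$-trees,'' since the extension argument needs the former, and note that any inclusion-maximal $3$-tree is maximal in that stronger sense precisely because of the attach-via-distance-$3$-neighbor move.
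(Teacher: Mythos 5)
Your proof is correct and follows essentially the same route as the paper: maximality of the $3$-tree forces every vertex of $V_I$ to lie within distance $2$ of some tree vertex (via the shortest-path extension argument), and the degree bound $6dk$ on $G$ makes each distance-$2$ ball have size at most $2(6dk)^2$. The paper's version is just a terser rendering of the same two steps, so nothing further is needed.
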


\begin{proof}
Consider a maximal $3$-tree $T$. We claim that every vertex $x_i \in V_I$ must be distance at most $2$ from some $x_j$ in $T$. If not, then we could take the shortest path from $x_i$ to $T$ and move along it, and at some point we would encounter a vertex that is also not in $T$ whose distance from $T$ is exactly $3$, at which point we could add it, contradicting $T$'s maximality. Now for every $x_i$ in $T$, we remove from consideration at most $(6dk)^2 + (6dk)$ other variables (all those at distance at most $2$ from $x_i$ in $G$). This completes the proof. 
\end{proof}

Now we can indeed charge every variable in $T$ to a disjoint error:

\begin{claim}\label{claim:disj}
If two variables $x_i$ and $x_j$ in $T$ are the result of type $2$ errors for $c$ and $c'$, then 
$$\mbox{vars}(c_i) \cap \mbox{vars}(c_j) = \emptyset$$
\end{claim}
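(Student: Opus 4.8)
The plan is to exploit the defining property of a $3$-tree, namely that $x_i$ and $x_j$ are at distance at least $3$ in the graph $G$. I would argue by contraposition: suppose $\mathrm{vars}(c)$ and $\mathrm{vars}(c')$ share a variable $y$, where $c$ is the clause responsible for the type $2$ error charged to $x_i$ and $c'$ is responsible for $x_j$. First I would record that since $x_i$ is charged to a type $2$ error for $c$, the variable $x_i$ lies in $\mathrm{vars}(c)$ (this is exactly how variables enter $V_I$ via Case \#2, and is how the charging is set up), and similarly $x_j \in \mathrm{vars}(c')$. Then $x_i$ and $y$ both lie in the clause $c$, so they are adjacent in $G$ (or equal); likewise $y$ and $x_j$ both lie in $c'$, so they are adjacent in $G$ (or equal). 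Concatenating, $x_i$ and $x_j$ are joined by a path of length at most $2$ in $G$, so their distance is at most $2$, contradicting the assumption that they belong to a $3$-tree. Hence no such shared variable $y$ exists, i.e. $\mathrm{vars}(c) \cap \mathrm{vars}(c') = \emptyset$.

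The one point needing a little care is whether $y$ is actually a vertex of $G$ — recall $G$ is defined only on $V_I$ — so that the adjacency statements make sense. But $c$ experienced a type $2$ error, which by the procedure places \emph{all} of $\mathrm{vars}(c)$ into $V_I$ (Case \#2 sets $S$ to be all variables of $c$ and moves them into $V_I$); the same holds for $c'$. Therefore every variable of $c$ and of $c'$ — in particular $y$, and also $x_i, x_j$ themselves — is a vertex of $G$, and the "appear in a common clause" edges of $G$ are present as claimed. A harmless degenerate case is when $y$ equals $x_i$ or $x_j$, or when $x_i = x_j$; in each of these the distance bound is only easier, and in any case two distinct vertices of a $3$-tree cannot coincide.

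I do not expect a genuine obstacle here: the statement is essentially unwinding the definition of distance-$3$ separation against the fact that membership in a common clause is an edge of $G$. The only thing to be vigilant about is bookkeeping — making sure the notation $c_i, c_j$ in the claim is read as "the clause whose type $2$ error is charged to $x_i$ (resp.\ $x_j$)", and that this charging indeed forces $x_i \in \mathrm{vars}(c_i)$ and $x_j \in \mathrm{vars}(c_j)$, which is immediate from how Case \#2 of the coupling procedure populates $V_I$.
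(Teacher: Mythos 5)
Your argument is exactly the paper's: assume the clauses share a variable, note that a type $2$ error puts all variables of the offending clause into $V_I$, deduce a path of length at most $2$ from $x_i$ to $x_j$ in $G$, and contradict the distance-$3$ property of the $3$-tree. The extra care you take about $y$ being a vertex of $G$ is the same observation the paper makes implicitly, so the proposal is correct and matches the paper's proof.
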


\begin{proof}
For the sake of contradiction, suppose that $\mbox{vars}(c_i) \cap \mbox{vars}(c_j) \neq \emptyset$. Then since $c$ and $c'$ experience type $2$ errors, all of their variables are included in $V_I$. This gives a length $2$ path from $x_i$ to $x_j$ in $G$, which if they were both included in $T$, would contradict the assumption that $T$ is a $3$-tree. 
\end{proof}

We are now ready to prove the main theorem of this section:

\begin{theorem}\label{thm:couplingterminates}
Suppose that every clause contains between $k$ and $6k$ variables and that $\log d \geq 5 \log k + 20$ and $k \geq 50 \log d + 50 \log k + 250$.
Then $$\mbox{Pr}[|V_I| \geq 2 (6dk)^2 t] \leq \Big (\frac{1}{2} \Big )^t$$
\end{theorem}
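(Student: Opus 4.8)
The plan is to show that if $|V_I| \geq 2(6dk)^2 t$, then the coupling procedure must have generated at least $t$ pairwise-disjoint "errors" — and then union-bound over the (few) possible shapes such a cluster of disjoint errors can take, using that each error has probability at most $c_0^{-k}$-ish, which by the hypothesis on $k$ is much smaller than the number of shapes. Concretely: by Lemma~\ref{lemma:3tree}, a maximal $3$-tree $T$ in the connectivity graph $G$ has at least $|V_I|/(2(6dk)^2) \geq t$ vertices. Each vertex $x_i \in T$ lies in $V_I$, so by Observation~\ref{obs:errors} it is there because of a type $1$ or type $2$ error; pick one such error for each $x_i \in T$, so we get $\geq t$ errors indexed by vertices of $T$. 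Claim~\ref{claim:disj} (together with the analogous, easier fact for type $1$ errors: a type $1$ error at $x_i$ is an event depending only on the value of $x_i$) shows that errors attached to distinct vertices of $T$ depend on disjoint variable sets, hence — since the variables are independent under $\Pr$ and, more importantly, the coupling's choices at each variable respect the conditional marginals — these error events can be controlled almost independently.

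The key quantitative step is bounding the probability of a single error. A type $1$ error at $x_i$ means the best coupling of $\mathcal{D}_1$ and $\mathcal{D}_2$ at $x_i$ failed to agree; since after marking every clause still has at least $k/4$ marked and at least $k/4$ unmarked variables available to satisfy it (Lemma~\ref{lemma:marked}), Corollary~\ref{corr:uniform} applies with $s \approx$ something like $2^{k/4}/(eD)$ to give that both $\mathcal{D}_1[x_i=T]$ and $\mathcal{D}_2[x_i=T]$ are within $O(2^{-k/4})\cdot$poly of $1/2$, so the total variation distance between the two marginals — hence the failure probability of the optimal one-variable coupling — is at most $O(D \cdot 2^{-k/4})$, which is $\leq 2^{-\Omega(k)}$ under the stated relation $k \geq 50\log d + 50\log k + 250$. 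A type $2$ error for a clause $c$ means all of $c$'s marked variables ($\geq k/4$ of them) got set but $c$ is still unsatisfied in one of the two assignments; since each marked variable, when set, takes each value with probability at most $1/2 + 2/s$, the probability the $\geq k/4$ marked literals all come out "wrong" is at most $(1/2 + O(2^{-k/4}))^{k/4} \leq 2^{-k/4 + o(k)}$. So in both cases a single error has probability $\leq \beta$ for some $\beta \leq 2^{-k/5}$, say.

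Next I would set up the union bound over configurations. Condition on a realization; if $|V_I| \geq 2(6dk)^2 t$, there exist $t$ vertices forming (a subset of) a maximal $3$-tree whose errors are pairwise variable-disjoint. The number of ways to choose such a structure is bounded as in Alon's $2,3$-tree argument: a connected-when-closed-up-at-distance-$3$ subtree on $t$ vertices in a graph of max degree $D' = 6dk$ (so distance-$\leq 3$ neighborhoods have size $\leq (D')^3$) can be specified, rooted at one of its $\leq n$ vertices, by a walk, giving at most $n \cdot \big((D')^3\big)^{t} \cdot 4^t$ or so choices; and for each chosen vertex we pick one of $\leq 2$ error types and, for type $2$, one of $\leq D'$ incident clauses — another factor $(2D')^t$. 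Because the $t$ errors are variable-disjoint and the coupling processes variables in some order each respecting its conditional marginal, the probability that all $t$ chosen errors occur is at most $\beta^t$ (here one must be slightly careful: argue sequentially along the reveal order that, conditioned on everything revealed so far, the next error still has conditional probability $\leq \beta$, using that disjointness of variable sets plus the LLL marginal bound of Corollary~\ref{corr:uniform} holds in every conditional formula since marking is chosen in advance). Putting it together, $\Pr[|V_I| \geq 2(6dk)^2 t] \leq n \cdot \big(\mathrm{poly}(dk)\big)^{t}\, \beta^{t}$, and since $\beta \leq 2^{-k/5}$ with $k \geq 50\log d + 50\log k + 250$, each factor $\mathrm{poly}(dk)\cdot\beta \leq 1/4$, absorbing the leading $n$ (and the constant in "maximal $3$-tree has $\geq t$ vertices") to conclude the bound $(1/2)^t$.

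I expect the main obstacle to be the last sequential-conditioning point: making rigorous that, for a set of $t$ errors attached to distance-$\geq 3$ vertices and depending on disjoint variable sets, the events are "negatively-ish correlated enough" that their joint probability is at most $\beta^t$ — because the coupling procedure's randomness is not a product measure but is generated adaptively through the oracle $F$. The clean way is: process the errors one at a time in the order the coupling would encounter the relevant variables/clauses; at each step the remaining formula (restricted by all prior settings) still satisfies the hypothesis of Corollary~\ref{corr:uniform} with the same $s$, because the marked/unmarked split was fixed in advance by Lemma~\ref{lemma:marked} and is untouched by conditioning, so each clause retains $\geq k/4$ marked and $\geq k/4$ unmarked variables; hence the conditional probability of the next error is still $\leq \beta$, regardless of the history. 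The role of the $3$-tree distance-$3$ separation is exactly to guarantee the disjointness (Claim~\ref{claim:disj}) that makes this telescoping valid; a parallel, slightly easier version of Claim~\ref{claim:disj} handles mixed type-$1$/type-$2$ errors and type-$1$/type-$1$ pairs.
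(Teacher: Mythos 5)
Your proposal follows essentially the same route as the paper's proof: extract a $3$-tree of size $\geq t$ via Lemma~\ref{lemma:3tree}, charge each of its vertices to a type $1$ or type $2$ error made disjoint by Claim~\ref{claim:disj}, bound each error's probability via Corollary~\ref{corr:uniform}, and union-bound over $3$-trees using the $(e\Delta)^t$ tree-counting bound on the distance-$3$ graph. Your extra care about the sequential-conditioning step (justifying that the $t$ disjoint errors' probabilities multiply despite the adaptive randomness) is a detail the paper states without elaboration, not a different approach.
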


\begin{proof}
First note that the conditions on $k$ and $d$ imply that the condition in Lemma~\ref{lemma:marked} holds. Now suppose that $|V_I| \geq 2 (6dk)^2 t$. Then by Lemma~\ref{lemma:3tree} we can find a $3$-tree $T$ with at least $t$ vertices. First we will work towards bounding the probability of any particular $3$-tree on $t$ vertices. We note that since each clause has at least $k/4$ marked variables and has at most $6k$ total variables, we can bound the probability of a type $1$ error as
$$\mbox{Pr}[x \mbox{ has type $1$ error}] \leq \frac{4}{d^{12}}$$
This uses the assumption $e d^{12} (6dk) \leq 2^{k/4}$ which allows us to choose $s = d^{12}$ in Corollary~\ref{corr:uniform}. Also we can bound the probability of a variable participating in a type $2$ error as
$$\mbox{Pr}[x \mbox{ participates in a type $2$ error}] \leq d\Big (\frac{49}{100} \Big)^{k/4} $$
which follows from Corollary~\ref{corr:uniform} using the condition that $s \geq d^3 \geq 100$ and that each variable belongs to at most $d$ clauses and each clause has at least $k/4$ marked variables. Now by Claim~\ref{claim:disj} we know that clauses that cause the type $2$ errors for each vertex in $T$ are disjoint. Thus putting it all together we can bound the probability of any particular $3$-tree on $t$ vertices as:
$$\Big( \frac{4}{d^{12}} + d \Big (\frac{49}{100} \Big)^{k/4} \Big)^t$$

Now it is well-known (see \cite{K, A}) that the number of trees of size $t$ in a graph of degree at most $\Delta$ is at most $(e\Delta)^t$. Moreover if we connect pairs of vertices in $G$ that are distance exactly $3$ from each other, then we get a new graph $H$ whose maximum degree is at most $\Delta = 3(6dk)^3$. Thus putting it all together we have that the probability that $|V_I| > 2 (6dk)^2 t$ can be bounded by 
$$ \Big( \frac{4}{d^{12}} + d \Big (\frac{49}{100} \Big )^{k/4} \Big)^t \Big ( 3 e (6dk)^3 \Big )^t \leq \Big( \frac{1}{2}  \Big)^t$$
where the last inequality follows from the assumptions that $\log d \geq 5 \log k + 20$ and $k \geq 50 \log d + 50 \log k + 250$. 
\end{proof}

Thus we can conclude that with high probability, the number of variables in $V_I$ is at most logarithmic. We can now brute-force search over all assignments to count the number of satisfying assignments to either $\Phi_{I_1}$ or $\Phi_{I_2}$. The trouble is that we do not have access to the marginal probabilities, so we cannot actually execute the coupling procedure. We will need to circumvent this issue next.

\section{Implications of the Coupling Procedure}

In this section, we give an abstraction that allows us to think about the coupling procedure as a randomly chosen root-to-leaf path in a certain tree whose nodes represent states. First, we make an elementary observation that will be useful in discussing how this tree is constructed. Recall that the coupling procedure chooses {\em any} clause that contains variables in both $V_I$ and $V_O$ and then samples {\em all} marked variables in it. We will assume without loss of generality that the choices it makes are done in lexicographic order. So if the clauses in $\Phi$ are ordered arbitrarily as $c_1, c_2, \dots, c_m$ and the variables are ordered as $x_1, x_2, \dots, x_n$ when executing the WHILE loop, if it has a choice of more than one clause it chooses among them the clause $c_i$ with the lowest subscript $i$. Similarly, given a choice of which marked variable to sample next, it chooses among them the $x_j$ with the lowest subscript $j$. 

The important point is that now we can think of a state associated with the coupling procedure, which we will denote by $\sigma$. 

\begin{definition}\label{def:states}
The state $\sigma$ of the coupling procedure specifies the following:
\begin{enumerate}

\item The set of remaining clauses $\mathcal{C}'$ \--- i.e. that have not yet been deleted

\item The partition of the variables into $V_I$ and $V_O$

\item The set $S$ of variables whose values have been set, along with their values in both $\mathcal{A}_1$ and $\mathcal{A}_2$

\item The current clause $c^*$ being operated on in the while loop, if any

\end{enumerate}
\end{definition}

\noindent We will assume that the set $\mathcal{M}$ of marked variables is fixed once and for all. Now the transition rules are that if $c^*$ has any marked variables that are unset, it chooses the lexicographically first and sets it. And when $c^*$ has no remaining marked variables to set, it updates $\mathcal{C}'$, $V_I$ and $V_O$ according to whether it falls into Case \# 1 or Case \#2 and sets the current clause to empty. Finally, if the current clause is empty then it chooses the lexicographically first clause from $\mathcal{C}'$ which has at least one variable in each of $V_I$ and $V_O$ to be $c^*$. 

Finally, we can define the next variable operation:

\begin{definition}\label{def:nextset}
Let $\mathcal{R}: \Sigma \rightarrow \{x_i\}_{i =1}^n \cup \{\emptyset\} \times \Sigma$ be the function that takes in a state $\sigma$, transitions to the next state $\sigma'$ that sets some variable $y$ and outputs $(y, \sigma')$. 
\end{definition}

Note that some states $\sigma$ do not immediately set a variable \--- e.g.  if the next operation is to choose the next clause, or update $\mathcal{C}'$, $V_I$ and $V_O$. These latter transitions are deterministic, so we let $\sigma'$ be the end resulting state and $y$ be the variable that it sets. Now we can define the stochastic decision tree underlying the coupling procedure:

\begin{definition}\label{def:twosided}
Given a conditional distribution oracle $F$, the function $\mathcal{R}$ and a stopping threshold $s$, the associated {\em stochastic decision tree} is the following:
\begin{enumerate}

\item[(1)] The root node corresponds to the state where only $x$ is set, $\mathcal{A}_1(x) = T$, $\mathcal{A}_2(x) = F$, $V_I = \{x\}$ and $V_O = \{x_i\}_{i = 1}^n \setminus \{x\}$. 

\item[(2)] Each node has either zero or four descendants. If the current node corresponds to state $\sigma$, let $(y, \sigma') = \mathcal{R}(\sigma)$. Then if $y = \emptyset$ or if $|V_I| = \tau$ there are no descendants and the current node is a leaf corresponding to the termination of the coupling procedure or $|V_I|$ being too large. Otherwise the four descendants correspond to the four choices for how to set $y$ in $\mathcal{A}_1$ and $\mathcal{A}_2$, and are marked with the state $\sigma''$ which incorporates their respective choices into $\sigma'$. 

\item[(3)] Moreover the probability on an edge from a state $\sigma'$ to a state $\sigma''$ where $y$ has been set as $\mathcal{A}_1(y) = T$ and $\mathcal{A}_2(y) = T$ is equal to 
$$\min(\mathcal{D}_1(y), \mathcal{D}_2(y))$$
and the transition to the state where $\mathcal{A}_1(y) = F$ and $\mathcal{A}_2(y) = F$ has probability
$$\min(1-\mathcal{D}_1(y), 1-\mathcal{D}_2(y))$$
Finally if $\mathcal{D}_1(y) > \mathcal{D}_2(y)$ then the transition to $\mathcal{A}_1(y) = T$ and $\mathcal{A}_2(y) = F$ is non-zero and is assigned all the remaining probability. Otherwise the transition to $\mathcal{A}_1(y) = F$ and $\mathcal{A}_2(y) = T$ is non-zero and is assigned all the remaining probability.

\end{enumerate}
\end{definition}

Now we can use the stochastic decision tree to give an alternative procedure to sample a uniformly random satisfying assignment of $\Phi$. We will refer to the process of starting from the root, and choosing a descendant with the corresponding transition probability, until a leaf node is reached as ``choosing a random root-to-leaf path". 

\begin{fragment*}[t]
\caption{
\label{alg:iterrefine}{\sc Decision Tree Sampling}, \\ \textbf{Input:} Monotone CNF $\Phi$, stochastic decision tree $S$  \vspace*{0.01in}
}

\begin{enumerate} \itemsep 0pt
\small 
\item Choose a random root-to-leaf path in $S$
\item Choose a uniformly random assignment $A_1$ consistent with $\mathcal{A}_1$
\item Choose a uniformly random assignment $A_2$ consistent with $\mathcal{A}_2$
\item Output $A_1$ with probability $q = \mbox{Pr}_{\mathcal{D}}[x = T]$, and otherwise output $A_2$

\end{enumerate} 

\end{fragment*}

\begin{claim}\label{lemma:altgen}
The decision tree sampling procedure outputs a uniformly random satisfying assignment of $\Phi$. 
\end{claim}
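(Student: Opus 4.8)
The plan is to condition on a fixed root-to-leaf path and show that, conditioned on that path, the output is uniform on the satisfying assignments consistent with the corresponding partial assignment — then average over paths with the correct weights. Fix a leaf $\ell$ of the stochastic decision tree, reached with probability $p_\ell$ along the unique root-to-leaf path, and let $\mathcal{A}_1, \mathcal{A}_2$ be the two partial assignments stored at $\ell$ (both setting exactly the same variable set $S$, with $\mathcal{A}_1(x)=T$, $\mathcal{A}_2(x)=F$). First I would invoke Lemma~\ref{lemma:isfactored} to write $\Phi_{\mathcal{A}_1}=\Phi_{I_1}\wedge\Phi_O$ and $\Phi_{\mathcal{A}_2}=\Phi_{I_2}\wedge\Phi_O$ with $\Phi_O$ sharing no variables with $\Phi_{I_1}$ or $\Phi_{I_2}$; so a uniformly random assignment $A_1$ consistent with $\mathcal{A}_1$ is a uniformly random satisfying assignment of $\Phi_{I_1}$ crossed with a uniformly random satisfying assignment of $\Phi_O$ (and likewise for $A_2$). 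In particular $A_1$ is a uniformly random satisfying assignment of $\Phi$ conditioned on $x=T$, and $A_2$ is a uniformly random satisfying assignment of $\Phi$ conditioned on $x=F$, \emph{regardless of which leaf $\ell$ we are at}.

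Next I would handle the path probabilities. The key structural fact is that, by the ``best coupling'' construction of the edge weights in Definition~\ref{def:twosided}, the marginal weight the tree assigns to the $\mathcal{A}_1$-side of any single variable-setting step is exactly $\mathcal{D}_1(y)$, and to the $\mathcal{A}_2$-side is exactly $\mathcal{D}_2(y)$: indeed for the two ``agree'' edges the weights are $\min(\mathcal{D}_1(y),\mathcal{D}_2(y))$ and $\min(1-\mathcal{D}_1(y),1-\mathcal{D}_2(y))$, and the single ``disagree'' edge picks up exactly the remaining mass, which on the $\mathcal{A}_1$-side sums to $\mathcal{D}_1(y)$ and on the $\mathcal{A}_2$-side to $\mathcal{D}_2(y)$ (the deterministic, non-variable-setting transitions contribute weight $1$ and can be ignored). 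Consequently, if we marginalize the tree to the $\mathcal{A}_1$-coordinate alone, the induced process is precisely: sequentially set each chosen variable $y$ according to $\mathcal{D}_1(y)$, i.e. sample the remaining marked variables of the current clause from the true conditional distribution on satisfying assignments consistent with $x=T$. Therefore the distribution over partial assignments $\mathcal{A}_1$ produced by ``choose a random root-to-leaf path, then read off $\mathcal{A}_1$'' is exactly the distribution over partial assignments induced by running this prefix-sampling process under $\mathcal{D}(\,\cdot\mid x=T)$, and symmetrically for $\mathcal{A}_2$ under $\mathcal{D}(\,\cdot\mid x=F)$.

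Putting the pieces together: with probability $q=\Pr_{\mathcal{D}}[x=T]$ the algorithm outputs $A_1$, which by the previous two paragraphs is distributed as follows — pick a random prefix $\mathcal{A}_1$ from the $\mathcal{D}(\cdot\mid x=T)$-process, then complete it uniformly at random to a satisfying assignment consistent with $\mathcal{A}_1$; since at every step the prefix-sampling process uses exactly the true conditional marginals, the law of total probability (peeling off one variable at a time) shows the resulting full assignment $A_1$ is exactly distributed as $\mathcal{D}(\cdot\mid x=T)$. Likewise with probability $1-q$ we output $A_2\sim\mathcal{D}(\cdot\mid x=F)$. Hence the output has law $q\cdot\mathcal{D}(\cdot\mid x=T)+(1-q)\cdot\mathcal{D}(\cdot\mid x=F)=\mathcal{D}$, the uniform distribution on satisfying assignments of $\Phi$, as claimed.

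I expect the main obstacle to be the bookkeeping that the coupled tree's one-coordinate marginals really are the honest conditional-sampling processes — i.e. that the deterministic (clause-choosing, $V_I/V_O$-updating, Case~1/Case~2) transitions do not disturb the argument and that, at a variable-setting node, the edge weights sum correctly on each side. Once that ``marginalization lemma'' is in hand, the rest is the routine self-reducibility/total-probability unrolling plus the factorization already established in Lemma~\ref{lemma:isfactored}. A minor point to dispatch along the way: leaves reached because $|V_I|$ hit the threshold $\tau$ are handled identically — the argument never used that the coupling terminated ``successfully'', only that at a leaf we have \emph{some} valid partial assignments $\mathcal{A}_1,\mathcal{A}_2$ refining $x=T$ and $x=F$ respectively, which is true at every leaf.
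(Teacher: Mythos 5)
Your proof is correct and follows essentially the same route as the paper's: marginalize the coupled tree to the $\mathcal{A}_1$-coordinate (where the edge weights at each variable-setting step sum to exactly $\mathcal{D}_1(y)$ and $1-\mathcal{D}_1(y)$), observe that this is exact sequential conditional sampling so the completed $A_1$ has law $\mathcal{D}(\cdot\mid x=T)$, and then mix with weights $q$ and $1-q$. One small misstatement to fix: conditioned on a fixed leaf, $A_1$ is uniform only over satisfying assignments consistent with the full partial assignment $\mathcal{A}_1$ at that leaf, not over all satisfying assignments with $x=T$ ``regardless of which leaf we are at'' --- but this overreach in your first paragraph (and the unnecessary appeal to Lemma~\ref{lemma:isfactored}) is never used, since your second and third paragraphs carry the actual argument.
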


\begin{proof}
We could alternatively think of the decision tree sampling procedure as deciding on whether (at the end) to output $A_1$ or $A_2$ with probability $q$ vs. $1-q$ at the outset. Then if we choose to output $A_1$, and we only keep track of the choices made for $\mathcal{A}_1$, marginally these correspond to sequentially sampling the assignment of variables from $\mathcal{D}_1$. And when we reach a leaf node in $S$ we can interpret the remaining choices to $A_1$ as sampling all unset variables from $\mathcal{D}_1$. Thus the output in this case is a uniformly random satisfying assignment with $x = T$. An identical statement holds for when we choose to output $A_2$, and because we decided between them at the outset with the correct probability, this completes the proof of the claim. 
\end{proof}

Now let $\sigma$ be the state of a leaf node $u$ and let $\mathcal{A}_1$ and $\mathcal{A}_2$ be the resulting partial assignments. Let $p_1$ be the product of certain probabilities along the root-to-leaf path. In particular, suppose along the path there is a transition with $y$ being set. Let $q_1$ be the probability of the transition to $(\mathcal{A}_1(y), \mathcal{A}_2(y))$ \--- i.e. along the branch that it actually went down. And let $q_2$ be the probability of the transition to $(\mathcal{A}_1(y), \overline{\mathcal{A}_2(y)})$ \--- i.e. where $y$ is set the same in $\mathcal{A}_1$ but is set to the opposite value as it was in $\mathcal{A}_2$. We let $p_1$ be the product of all $\frac{q_1}{q_1 + q_2}$ over all such decision on the root-to-leaf path. 

\begin{lemma}\label{lemma:paths}
Let $A$ be an assignment that agrees with $\mathcal{A}_1$. Then for the {\sc Decision Tree Sampling} procedure
$$\mbox{Pr}\Big [ \mbox{terminates at leaf $u$} \Big | \mbox{outputs assignment $A$}\Big ] = p_1$$
\end{lemma}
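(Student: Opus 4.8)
The plan is to condition on the event that the procedure outputs a fixed assignment $A$ consistent with $\mathcal{A}_1$, and to compute the probability that the root-to-leaf path taken is exactly the one leading to $u$. By the reformulation in the proof of Claim~\ref{lemma:altgen}, conditioning on output $A$ (which has $x = T$) is the same as conditioning on the event that at the outset we decided to output $A_1$, and then that the sequence of choices recorded in $\mathcal{A}_1$ together with the subsequent completion all coincide with $A$. So I would fix $A$, and think of the path to $u$ as a sequence of transitions; for each transition where a variable $y$ is set, the branch actually taken sets $\mathcal{A}_1(y) = A(y)$ and sets $\mathcal{A}_2(y)$ to some value $v_y$. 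What we want is $\Pr[\text{path to } u \mid \text{output } A]$, and the claim is that this equals $p_1 = \prod \frac{q_1}{q_1+q_2}$, where the product is over these transitions and $q_1$ is the probability of the branch taken while $q_2$ is the probability of the sibling branch with $\mathcal{A}_1(y) = A(y)$ but $\mathcal{A}_2(y) = \overline{v_y}$.

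**The main computation: factoring the conditional probability transition by transition.**

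The key step is to process the root-to-leaf path one transition at a time and argue that conditioning on "$\mathcal{A}_1$ is consistent with $A$" decouples the two coordinates of each transition in the following sense. At a transition setting variable $y$, there are (generically) three branches with positive probability: the one taken, its sibling with the same $\mathcal{A}_1$-value but opposite $\mathcal{A}_2$-value, and one more branch where $\mathcal{A}_1(y) = \overline{A(y)}$. Conditioning on the whole run producing $A$ rules out that third branch entirely (and any branch downstream that disagrees with $A$). Among the two surviving branches at this transition, their probabilities are $q_1$ and $q_2$, and — crucially — the conditional distribution of $\mathcal{A}_2$ going forward does not feed back into whether the final output equals $A$, because the output $A$ only depends on the $\mathcal{A}_1$-choices and the final uniform completion of $A_1$. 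So, writing the joint probability of (path to $u$, output $A$) as a product over transitions of the raw edge probabilities times the completion probability, and likewise writing $\Pr[\text{output } A]$ as a sum over all paths consistent with $A$, the ratio telescopes: at each transition, the contribution from the branch taken divided by the summed contribution from all branches consistent with $A$ at that transition equals $\frac{q_1}{q_1 + q_2}$, since the downstream factors (the $\mathcal{D}_1$-completion weights and all later transitions) are identical for the two surviving siblings. I would make this precise by induction on the length of the path, peeling off the first transition: condition on $A$, observe the first transition contributes a factor $\frac{q_1}{q_1+q_2}$ and leaves a sub-instance of exactly the same form rooted at the child.

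**What to be careful about.**

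The step I expect to be the main obstacle is justifying cleanly that "the downstream factors are the same for the two siblings," i.e. that changing $\mathcal{A}_2(y)$ from $v_y$ to $\overline{v_y}$ while keeping $\mathcal{A}_1(y) = A(y)$ does not change the probability of subsequently generating $A$ from the $\mathcal{A}_1$-side. This is where I'd lean on the structure established in Section~\ref{sec:couple}: the $\mathcal{D}_1$-marginals used along the $\mathcal{A}_1$-branch, and the eventual completion of $A_1$, depend only on $\mathcal{A}_1$, not on $\mathcal{A}_2$; the transition probabilities like $\min(\mathcal{D}_1(y),\mathcal{D}_2(y))$ do involve $\mathcal{D}_2$, but once we sum $q_1 + q_2$ over the two siblings sharing the same $\mathcal{A}_1(y)$ we exactly recover $\mathcal{D}_1(y)$ (resp. $1 - \mathcal{D}_1(y)$), which is what the pure $\mathcal{A}_1$-process would use — so the marginal process on the $\mathcal{A}_1$-coordinate is genuinely governed by $\mathcal{D}_1$ alone. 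I would also need to handle the degenerate transitions flagged in Definition~\ref{def:nextset} where $\mathcal{R}$ makes a deterministic move without setting a variable (choosing the next clause, or moving variables between $V_I$ and $V_O$): these contribute probability $1$ on both the numerator and the $\mathcal{A}_1$-marginal side and so drop out of the product, consistent with $p_1$ being a product only over transitions that actually set a variable. Finally I should note the edge case $\mathcal{D}_1(y) = \mathcal{D}_2(y)$, where the sibling probability $q_2$ is zero and the factor is $1$, matching the fact that no coupling failure can occur at that step.
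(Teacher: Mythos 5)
Your argument is correct and is essentially the paper's proof: both rest on splitting each four-way transition into an $\mathcal{A}_1$-choice followed by an $\mathcal{A}_2$-choice, observing that conditioning on the output $A$ determines every $\mathcal{A}_1$-side choice (and the completion, and the $A_1$-vs-$A_2$ decision), so the only remaining randomness is the $\mathcal{A}_2$-side choices whose conditional law is the product of the ratios $\frac{q_1}{q_1+q_2}$ — the paper phrases this as a global re-ordering of the random choices into two phases, while you phrase it as a transition-by-transition telescoping, which is the same idea. You also correctly identify and resolve the one delicate point (the two sibling subtrees are not literally identical, but the marginal $\mathcal{A}_1$-process plus uniform completion outputs $A$ with the same probability in either, since $q_1+q_2$ recovers the $\mathcal{D}_1$-marginal).
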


\begin{proof}
The idea behind this proof is to think of the random choice of which of the four descendants to transition to as being broken down into two separate random choices where we first choose $\mathcal{A}_1(y)$ and then we choose $\mathcal{A}_2(y)$. See Figure~\ref{fig:releveling}.
%In particular, suppose we are at a step where $y$ is set. Let $r_1$, $r_2$, $r_3$ and $r_4$ be the probabilities of setting $y$ to $T/T$, $T/F$, $F/T$ and $F/F$ in $\mathcal{A}_1/\mathcal{A}_2$ respectively. Now we can think of this instead as transitioning to $\mathcal{A}_1(y) = T$ with probability $r_1 + r_2$ and otherwise to $\mathcal{A}_1(y) = F$ with the remaining probability $r_3 + r_4$. In the next step, we transition from $\mathcal{A}_1(y) = T$ to $\mathcal{A}_2(y) = T$ with probability $\frac{r_1}{r_1 + r_2}$ and otherwise to $\mathcal{A}_2(y) = F$, and similarly in the other case. This is just an identical way of sampling a root-to-leaf path, but makes it easier to reason about conditioning on the output being $A$. 
Now we can make the random choices in the {\sc Decision Tree Sampling} procedure in an entirely different order. Instead of choosing the transition in the first layer, then the second layer and so on, we instead make {\em all} of the choices in the odd layers. Moreover at each leaf, we choose which assignment consistent with $\mathcal{A}_1$ we would output. This is the first phase. Next we choose whether to output the assignment consistent with $\mathcal{A}_1$ or with $\mathcal{A}_2$. Finally, we make all the choices in the even layers which fixes the root-to-leaf path and then we choose an assignment consistent with $\mathcal{A}_2$. This is the second phase. 

The key point is that once the output $A$ is fixed, all of the choices in the first phase are determined, because every time a variable $y$ is set it must agree with its setting in $A$. Moreover each leaf node must choose $A$ for its assignment consistent with $\mathcal{A}_1$. And finally, we know that the sampling procedure must output the assignment consistent with $\mathcal{A}_1$ because $A$ agrees with $\mathcal{A}_1$ and not $\mathcal{A}_2$ (because they differ on how they set $x$). Thus conditioned on outputting $A$ the only random choices left are those in the second phase. Now the lemma follows because the probability of reaching leaf node $u$ is exactly the probability along the path of all of the even layer choices, which is how we defined $p_1$. 
\end{proof}

\begin{figure}
\begin{center}
    \includegraphics[width=12cm]{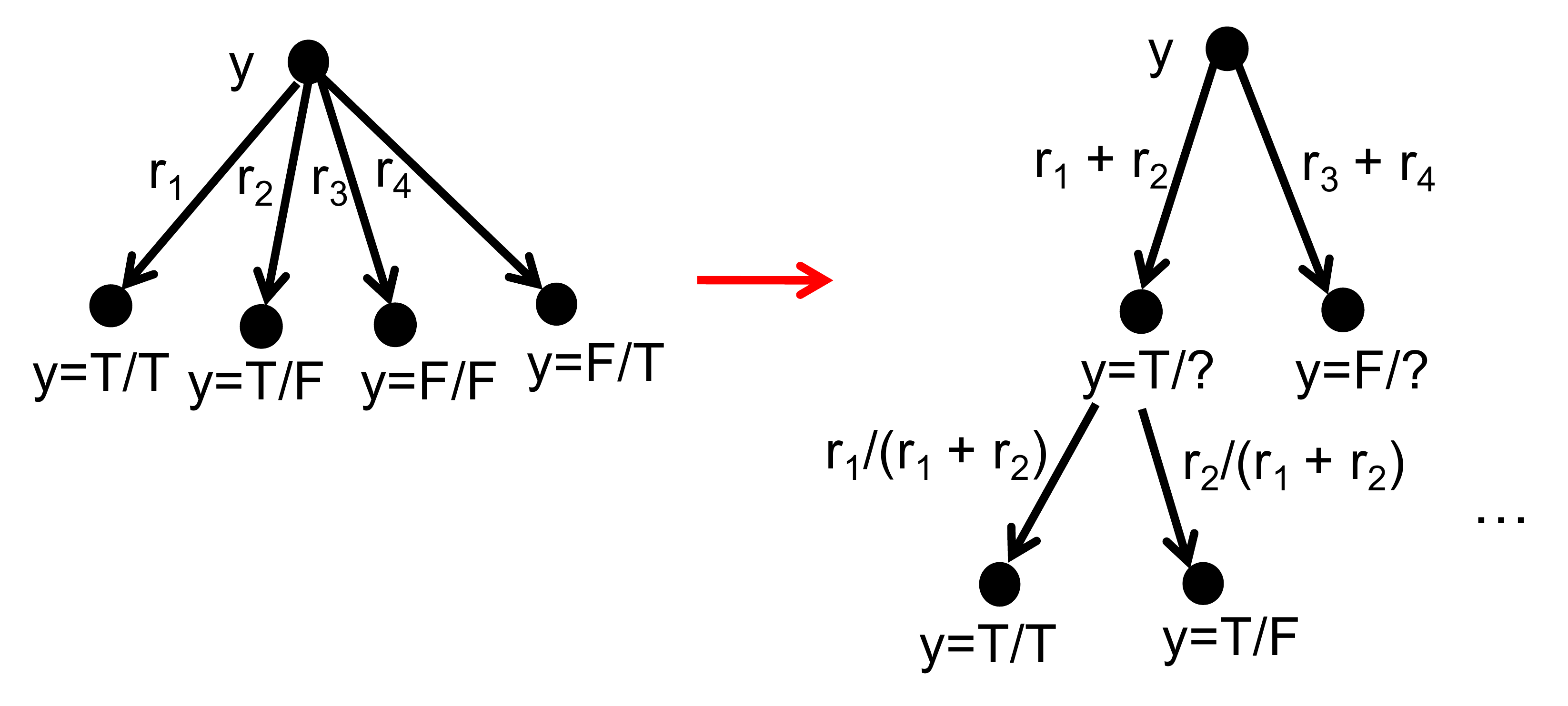}
    \caption{
      A transformation on the stochastic decision tree that makes it easier to understand what happens when we condition on the assignment $A$ that is output by {\sc Decision Tree Sampling}. 
    } \label{fig:releveling}
  \end{center}
\end{figure}

We can define $p_2$ in an analogous way to how we defined $p_1$ (i.e. as the product of certain probabilities along the root-to-leaf path), and the lemma above shows that $p_2$ is exactly the  probability of all the decisions made along the root-to-leaf path conditioned on the output being $A$ where $A$ agrees with $\mathcal{A}_2$. 

The key lemma is the following:

\begin{lemma}\label{lemma:balance}
Let $N_1$ be the number of satisfying assignments consistent with $\mathcal{A}_1$ and let $N_2$ be the number of satisfying assignments consistent with $\mathcal{A}_2$. Then 
$$\frac{p_1 N_1}{p_2 N_2} = \frac{q}{1-q}$$
\end{lemma}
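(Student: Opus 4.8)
The plan is to compute the probability that the {\sc Decision Tree Sampling} procedure \emph{both} terminates at the fixed leaf $u$ \emph{and} outputs an assignment consistent with $\mathcal{A}_1$ in two different ways, do the same for $\mathcal{A}_2$, and then divide.

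First, I would use Claim~\ref{lemma:altgen}: the output is uniform over the set of all $N$ satisfying assignments of $\Phi$, so each particular satisfying assignment $A$ is produced with probability $1/N$. If $A$ is consistent with $\mathcal{A}_1$, Lemma~\ref{lemma:paths} gives $\Pr[\text{terminate at }u \mid \text{output }A] = p_1$, hence $\Pr[\text{terminate at }u \text{ and output }A] = p_1/N$. The key observation is that $p_1$ is a product of transition probabilities along the fixed root-to-$u$ path and therefore does not depend on \emph{which} of the $N_1$ satisfying assignments consistent with $\mathcal{A}_1$ we chose; summing over those $N_1$ assignments gives $\Pr[\text{terminate at }u\text{ and output an assignment consistent with }\mathcal{A}_1] = N_1 p_1/N$. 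Running the same argument with the analogue of Lemma~\ref{lemma:paths} for $p_2$ yields $\Pr[\text{terminate at }u\text{ and output an assignment consistent with }\mathcal{A}_2] = N_2 p_2/N$.

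Second, I would read these same two probabilities directly off the procedure. Conditioned on the random root-to-leaf path ending at $u$, the procedure outputs $A_1$ (which is consistent with $\mathcal{A}_1$) with probability exactly $q$ and $A_2$ (consistent with $\mathcal{A}_2$) with probability exactly $1-q$; note no assignment can be consistent with both $\mathcal{A}_1$ and $\mathcal{A}_2$ since they disagree on $x$. So, writing $P(u)$ for the probability the random path reaches $u$, the two events have probabilities $P(u)q$ and $P(u)(1-q)$. Equating the two computations gives the identities $N_1 p_1/N = P(u)q$ and $N_2 p_2/N = P(u)(1-q)$; dividing these (and using that $N$ is finite and positive and that $q\in(0,1)$, which holds by Corollary~\ref{corr:uniform}) yields $\frac{p_1 N_1}{p_2 N_2} = \frac{q}{1-q}$.

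The step I expect to need the most care is the constancy of $p_1$ across all assignments consistent with $\mathcal{A}_1$ — but this is immediate once $p_1$ is unwound to its definition, since nothing in it references $A$, and Lemma~\ref{lemma:paths} applies uniformly to every such $A$. A minor bookkeeping issue is the degenerate case where $u$ is unreachable ($P(u)=0$): working with the two identities above rather than directly with ratios sidesteps it, since $q$ and $1-q$ are bounded away from $0$.
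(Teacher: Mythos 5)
Your proof is correct and follows essentially the same double-counting argument as the paper: both rest on Claim~\ref{lemma:altgen} (uniformity of the output) and Lemma~\ref{lemma:paths} (that $p_i$ is the conditional probability of reaching $u$ given an output consistent with $\mathcal{A}_i$). The only cosmetic difference is that you equate two computations of $\Pr[\text{reach } u \wedge \text{output consistent with } \mathcal{A}_i]$ directly, obtaining the identities $N_i p_i/N = P(u)q$ and $P(u)(1-q)$, whereas the paper packages the same cancellation as the vanishing expectation of a signed random variable $Z_u$.
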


\begin{proof}
Let $u$ be a leaf node. Consider a random variable $Z_u$ that when we run the decision tree sampling procedure is non-zero if and only if we end at $u$. Moreover let $Z_u = (1-q)$ if an assignment with $x = T$ is output, and $Z_u = -q$ if an assignment with $x = F$ is output. Then clearly $\E[Z_u] = 0$. Now alternatively we can write:
$$\E[Z_u] = \E_A[ \E[Z_u | A \mbox{ is output}]]$$
where $A$ is a uniformly random satisfying assignment of $\Phi$, precisely because of Lemma~\ref{lemma:altgen}. Let $N$ be the total number of such assignments. Then
$$\E_A[ \E[Z_u | A]] = \Big (\frac{N_1}{N} \Big ) (p_1) (1-q) + \Big (\frac{N_2}{N} \Big ) (p_2) (-q)$$
This follows because the only assignments $A$ that can be output at $u$ must be consistent with either $\mathcal{A}_1$ or $\mathcal{A}_2$. Note that these are disjoint events because in one of them $x = T$ while in the other $x = F$. Then once we know that $A$ is consistent with $\mathcal{A}_1$ (which happens with probability $\frac{N_1}{N}$) the probability for the decisions made in $\mathcal{A}_2$ being such that we reach $u$ is exactly $p_1$, as this was how it was defined. The final term in the product of three terms is just the value of $Z_u$. An identical argument justifies the second term. Now using the fact that the above expression evaluates to zero and rearranging completes the proof. 
\end{proof}

\section{Certifying the Marginal Distribution}

\subsection{One-Sided Stochastic Decision Trees}

The stochastic decision tree that we defined in the previous section is a natural representation of the trajectory of the coupling procedure. However it has an important drawback that we will remedy here. Its crucial property is captured in Lemma~\ref{lemma:balance} which gives a relation between 
\begin{enumerate}

\item[(1)] $p_i$ \--- the conditional probability of an assignment consistent with $\mathcal{A}_i$ reaching $u$ and

\item[(2)] $N_i$ \--- the number of assignments consistent with $\mathcal{A}_i$

\end{enumerate}

\noindent for $i = 1, 2$. However $p_1$ is the product of various ratios of probabilities along the root-to-leaf path. This means that if we think of the transition probabilities as variables, the constraint imposed by Lemma~\ref{lemma:balance} is far from linear\footnote{What's worse is that the contribution of a particular decision to $p_1$ and $p_2$ is a multiplication by one of two ratios of probabilities, {\em which have different denominators}. For reasons that we will not digress into, this makes it challenging to encode the total probability $p_1$ as a flow in a tree.}.

In this section, we will transform a stochastic decision tree into two separate trees, that we call {\em one-sided stochastic decision trees}. These will have the property that the constraint imposed by Lemma~\ref{lemma:balance} will be linear in the unknown probabilities that we think of as variables. Ultimately we will show that any such pair can $(1)$ certify that a given value $q$ is within an additive inverse polynomial factor of $\mbox{Pr}_{\mathcal{D}}[x = T]$ and $(2)$ can be constructed in polynomial time through linear programming. First we explain the transformation from a stochastic decision tree to a one-sided stochastic decision tree. We will then formally define its properties and what we require of it. 

Now suppose we are given a stochastic decision tree $S$. Let's construct the one-sided stochastic decision tree $S_1$ that represents the trajectory of the partial assignment $\mathcal{A}_1$. When we start from the starting state $\sigma$ (see Definition~\ref{def:states}), the four descendants of it in $S$ will now be four grand-children. Its immediate descendants will be two nodes $u$ and $u'$, one representing the choice $\mathcal{A}_1(y) = T$ and one representing $\mathcal{A}_1(y) = F$, where $y$ is the next variable set (see Definition~\ref{def:nextset}). The two children of $\sigma$ in $S$ that correspond to $\mathcal{A}_1(y) = T$ will now be the children of $u$ and the other two children will now be the children of $u'$. We will continue in this way so that alternate layers represent nodes present in $S$ and new nodes. %This transformation is illustrated in Figure~\ref{}.

This alone does not change much the semantics of the trajectory. All we are doing is breaking up the decision of which of the four children to proceed to, into two separate decisions. The first decision is based on just $\mathcal{A}_1$ and the second is based on $\mathcal{A}_2$. However we will change the semantics of what probabilities we associate with different transitions. For starters, we will work with total probabilities. So the total probability incoming into the starting node is $1$. Let's see how this works inductively. Let's now suppose that $\sigma$ represents the state of some node in $S$ (not necessarily the starting state) and $u$ and $u'$ are its descendants in $S_1$. Then if the total probability into $\sigma$ in $S_1$ is $z$, we place $z$ along both the edges to $u$ and to $u'$. This is because the decision tree is now from the perspective of $\mathcal{A}_1$, who perhaps has already chosen his assignment uniformly at random from the satisfying assignments with $x = T$ but has not set all of those values in $\mathcal{A}_1$. Hence his decision is not a random variable, since given the option of transition to $u$ or $u'$ he must go to whichever one is consistent with his hidden values. 

However from this perspective, the choices corresponding to $\mathcal{A}_2$ are random because he has no knowledge of the assignment that the other player is working with. If we have $z$ total probability coming into $u$, then the total probability into its two descendants will be $(\frac{q_1}{q_1 + q_2}) z$ and  $(\frac{q_2}{q_1 + q_2}) z$ respectively, where $q_1$ and $q_2$ were the probabilities on the transitions in $S$ into the two corresponding descendants. %See Figure~\ref{} for an illustration.
In particular, if $q_1$ is the probability of setting $\mathcal{A}_1(y) = T$ and $\mathcal{A}_2(y) = T$ and $q_2$ is the probability of setting $\mathcal{A}_1(y) = T$ and $\mathcal{A}_2(y) = F$ then $(\frac{q_1}{q_1 + q_2}) z$ is the total probability on the transition from $u$ to the descendant where $\mathcal{A}_2(y) = T$ and $(\frac{q_2}{q_1 + q_2}) z$ is the total probability on the transition from $u$ to the descendant where $\mathcal{A}_2(y) = F$. Note that from Corollary~\ref{corr:uniform} we have that 
$$\Big (\frac{q_2}{q_1 + q_2} \Big ) z \leq  \Big (\frac{4}{s} \Big ) z$$
This is an important property that we will make crucial use of later. Notice that it is a linear constraint in the total probability. Now we are ready to define a one-sided stochastic decision tree, which closely mirrors Definition~\ref{def:twosided}.

\begin{definition}\label{def:onesided}
Given the function $\mathcal{R}$ and a stopping threshold $\tau$, the associated {\em one-sided stochastic decision tree} for $\mathcal{A}_1$ is the following:
\begin{enumerate}

\item[(1)] The root node corresponds to the state where only $x$ is set, $\mathcal{A}_1(x) = T$, $\mathcal{A}_2(x) = F$, $V_I = \{x\}$ and $V_O = \{x_i\}_{i = 1}^n \setminus \{x\}$. 

\item[(2)] Each node has either two descendants and four {\em grand}-descendants or zero descendants. If the current node $a$ corresponds to state $\sigma$, let $(y, \sigma') = \mathcal{R}(\sigma)$. Then if $y = \emptyset$ or if $|V_I| = \tau$ there are no descendants and the current node is a leaf corresponding to the termination of the coupling procedure or $|V_I|$ being too large. Otherwise the two descendants corresponds to the two choices for how to set $y$ in $\mathcal{A}_1$. Each of their two descendants correspond to the two choices for how to set $y$ in $\mathcal{A}_2$. Each grand-descendant is marked with the state $\sigma'$ which incorporates their respective choices. 

\item[(3)] Let $z$ be the total probability into $a$. Then the total probability into each descendant is $z$. Moreover let the total probability into the grand-descendants with states $\mathcal{A}_1(y) = T$ and $\mathcal{A}_2(y) = T$ and $\mathcal{A}_1(y) = T$ and $\mathcal{A}_2(y) = F$ be $z_1$ and $z_2$ respectively. Then $z_1$ and $z_2$ are nonnegative, sum to $z$ and satisfy
$z_2 \leq  (\frac{4}{s}  ) z$. Similarly, let the total probability into the grand-descendants with states $\mathcal{A}_1(y) = F$ and $\mathcal{A}_2(y) = F$ and $\mathcal{A}_1(y) = F$ and $\mathcal{A}_2(y) = T$ be $z_3$ and $z_4$ respectively. Then $z_3$ and $z_4$ are nonnegative, sum to $z$ and satisfy
$z_4 \leq  (\frac{4}{s}  ) z$.

\end{enumerate}
\end{definition}

The one-sided stochastic decision tree for $\mathcal{A}_2$ is defined analogously, in the obvious way. Finally we record an elementary fact:

\begin{claim}\label{claim:match}
There is a perfect matching between the root-to-leaf paths in $S_1$ and $S_2$, so that any pair of assignments $A_1$ and $A_2$ that takes a root-to-leaf path $p$ in $S_1$, must also take the root-to-leaf path in $S_2$ to which $p$ is matched.
\end{claim}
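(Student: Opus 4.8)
The plan is to exhibit the matching explicitly by describing how a root-to-leaf path in $S_1$ is built from the same sequence of variable settings as the corresponding path in $S_2$, and then to verify the claimed compatibility property. First I would recall that both $S_1$ and $S_2$ are obtained from the \emph{same} underlying stochastic decision tree $S$ by the ``releveling'' transformation described above: each four-way branching node of $S$ is split into a depth-two gadget, in $S_1$ first deciding $\mathcal{A}_1(y)$ then $\mathcal{A}_2(y)$, and in $S_2$ first deciding $\mathcal{A}_2(y)$ then $\mathcal{A}_1(y)$. In particular, the nodes of $S$ sit at the even layers of both $S_1$ and $S_2$, and a node of $S$ at depth $j$ is entirely determined by the sequence of pairs $(\mathcal{A}_1(y_1),\mathcal{A}_2(y_1)), \dots, (\mathcal{A}_1(y_j),\mathcal{A}_2(y_j))$ of decisions made so far, where $y_1, y_2, \dots$ is the (deterministic, lexicographically-ordered) sequence of variables that $\mathcal{R}$ sets along that branch.

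Next I would define the matching: a root-to-leaf path $p$ in $S_1$ visits a sequence of $S$-nodes $\sigma_0 = \sigma_{\mathrm{root}}, \sigma_1, \dots, \sigma_\ell$ (the even-layer nodes along $p$, where $\sigma_\ell$ is a leaf of $S$), and I map $p$ to the unique root-to-leaf path in $S_2$ that visits exactly this same sequence $\sigma_0, \dots, \sigma_\ell$. This is well-defined and a bijection because: (i) in either tree, consecutive $S$-nodes $\sigma_{i}, \sigma_{i+1}$ are joined by a unique depth-two gadget, and the intermediate (odd-layer) node is forced once $\sigma_i$ and $\sigma_{i+1}$ are both known — it is the node recording the first of the two decisions in whichever order that tree uses; (ii) whether $\sigma_\ell$ is a leaf depends only on $\mathcal{R}(\sigma_\ell)$ and on $|V_I|$, both of which are properties of the $S$-node $\sigma_\ell$ itself, so $p$ ends at a leaf of $S_1$ iff the matched path ends at a leaf of $S_2$. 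Thus the map $p \mapsto p'$ is a perfect matching between root-to-leaf paths of $S_1$ and $S_2$, where $p$ and $p'$ pass through the same sequence of $S$-states, hence record the same sequence of joint decisions $(\mathcal{A}_1(y_i),\mathcal{A}_2(y_i))$.

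It then remains to check the stated compatibility property. Fix a pair of assignments $A_1, A_2$ and suppose $A_1, A_2$ ``takes'' the root-to-leaf path $p$ in $S_1$ — meaning that at every branching, the decisions $\mathcal{A}_1(y) = A_1(y)$ and $\mathcal{A}_2(y) = A_2(y)$ are the ones taken. Since the variable sequence $y_1, y_2, \dots$ produced by $\mathcal{R}$ depends only on the current $S$-state, and since $p$ and its match $p'$ visit the same $S$-states in the same order, the same variables $y_i$ are set in the same order along $p'$, and at each one the required decisions are again $\mathcal{A}_1(y_i) = A_1(y_i)$, $\mathcal{A}_2(y_i) = A_2(y_i)$ — exactly the decisions recorded along $p'$ (only their internal ordering within each gadget differs). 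Hence $A_1, A_2$ takes $p'$ in $S_2$, which is what we wanted. The one point that needs a little care — and the only place I expect any real work — is item (i) above: arguing that the odd-layer (intermediate) node is genuinely forced by the adjacent $S$-nodes, i.e. that the two decisions in a gadget, though reordered between $S_1$ and $S_2$, together carry the same information, so that no collapsing or splitting of paths occurs. This is really just unwinding Definition~\ref{def:onesided} and the releveling construction, but it is where one must be precise to be sure the matching is a \emph{bijection} and not merely a relation.
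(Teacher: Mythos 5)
Your proof is correct and follows essentially the same route as the paper's: both identify a root-to-leaf path in $S_1$ with the underlying root-to-leaf path in $S$ (read off from the alternate layers common to $S_1$ and $S_2$), which then uniquely determines the matched path in $S_2$. Your write-up is more detailed — in particular you explicitly verify that the intermediate gadget nodes are forced and that the compatibility property for a pair of assignments carries over — but the key idea is identical.
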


\begin{proof}
Recall that the odd levels in $S_1$ and $S_2$ correspond to the nodes in $S$. Therefore from a root-to-leaf path $p$ in $S_1$ we can construct the root-to-leaf path in $S$, which in turn uniquely defines a root-to-leaf path in $S_2$ (because it specifies which nodes are visited in odd layers, and all paths end on a node in an odd layer). 
\end{proof}

\subsection{An Algorithm for Finding a Valid $S_1$ and $S_2$}

We are now ready to prove one of the two main theorems of this section:

\begin{theorem}\label{thm:constructive}
Let $q = \mbox{Pr}_{\mathcal{D}}[x = T]$ and $q' \leq q \leq q''$. Then there are two one-sided stochastic decision trees $S_1$ and $S_2$ that for any pair of matched root-to-leaf paths terminating in $u$ and $u'$ respectively satisfy
$$\Big ( \frac{q'}{1-q'} \Big ) p_2 N_2 \leq p_1 N_1 \leq \Big ( \frac{q''}{1-q''} \Big ) p_2 N_2$$
where $N_1$ and $N_2$ are number of satisfying assignments consistent with $\mathcal{A}_1$ and $\mathcal{A}_2$ respectively, and $p_1$ and $p_2$ are the total probability into $u$ and $u'$ respectively. 

Moreover given $q'$ and $q''$ that satisfy $q' \leq q \leq q''$ there is an algorithm to construct two one-sided stochastic decision trees $S_1$ and $S_2$ that satisfy the above condition on all matched leaf nodes corresponding to a termination of the coupling procedure, which runs in time polynomial in $m$ and $4^\tau$ where $\tau$ is the stopping size. 
\end{theorem}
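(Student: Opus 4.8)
The plan is to prove the existential statement first and then observe that the same argument, phrased as a feasibility question for a linear program, gives the algorithm. For the existential part, I would simply take $S_1$ and $S_2$ to be the honest one-sided decision trees induced by the true conditional distribution oracle $F$, with stopping threshold $\tau$. By construction (Definition~\ref{def:onesided}) the total probability $p_1$ into a leaf $u$ of $S_1$ equals the quantity $p_1$ from Section~4 — the product over the root-to-leaf path of the ratios $q_1/(q_1+q_2)$ — and similarly $p_2$ is the analogous product for $S_2$; here I use Claim~\ref{claim:match} to match the leaves. Then Lemma~\ref{lemma:balance} gives exactly $p_1 N_1/(p_2 N_2) = q/(1-q)$, and since $q' \le q \le q''$ and $t \mapsto t/(1-t)$ is increasing on $(0,1)$, the two-sided inequality follows immediately. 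So the existential half is essentially a restatement of Lemma~\ref{lemma:balance} together with the fact that the honest trees realize the $p_i$.

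For the algorithmic half, the idea is that we do \emph{not} need to know $F$: we just need \emph{some} assignment of nonnegative numbers to the edges of $S_1$ and $S_2$ that (a) obeys the local flow constraints in item (3) of Definition~\ref{def:onesided} — conservation $z_1+z_2 = z$, $z_3+z_4=z$, and the one-sided smallness bounds $z_2 \le (4/s)z$, $z_4 \le (4/s)z$ — and (b) makes the desired inequality $\big(\tfrac{q'}{1-q'}\big) p_2 N_2 \le p_1 N_1 \le \big(\tfrac{q''}{1-q''}\big) p_2 N_2$ hold at every matched pair of leaves that corresponds to an actual termination of the coupling procedure. The structure of the trees $S_1, S_2$ (which states appear, which are leaves, the matching of leaves) is determined purely by $\mathcal{R}$, the marked set $\mathcal{M}$, and $\tau$, so it can be written down explicitly; its size is polynomial in $m$ times $4^\tau$ because $|V_I| \le \tau$ bounds the depth and each level branches by a constant. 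The numbers $N_1, N_2$ at a terminating leaf are the counts of satisfying assignments to the logarithmic-sized subformulas $\Phi_{I_1}, \Phi_{I_2}$, which by brute force over the at most $\tau$ variables in $V_I$ can be computed in time $2^\tau \cdot \poly(m)$. Feeding all of this into a linear program — variables are the edge-probabilities, constraints are the affine ones from Definition~\ref{def:onesided} plus the two affine inequalities $p_1 N_1 - \big(\tfrac{q''}{1-q''}\big)p_2 N_2 \le 0$ and $\big(\tfrac{q'}{1-q'}\big)p_2 N_2 - p_1 N_1 \le 0$ at each terminating leaf — we get a polynomial-size LP. The existential half guarantees it is feasible (the honest trees are a solution), so any LP solver returns a valid $S_1, S_2$.

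The one point that needs care, and which I expect to be the main obstacle, is making sure $p_1$ and $p_2$ really are \emph{linear} functions of the edge variables so that the final two inequalities are linear constraints. This is exactly the issue flagged in the footnote of Section~5: in the two-sided tree $p_1$ was a product of ratios with varying denominators, which is not linear. The fix is structural: because each grand-descendant in the one-sided tree carries a total (unnormalized) probability, and the passage from a node to its descendant in $S_1$ multiplies the incoming total by $1$ (the $\mathcal{A}_1$-layer) and then by $q_i/(q_1+q_2)$ (the $\mathcal{A}_2$-layer), the total probability $p_1$ into leaf $u$ is literally the product along the path of the edge-variables themselves, read in the ``total probability'' normalization — i.e. it equals the single edge-variable on the last edge into $u$ once we impose conservation at every internal node. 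Concretely, I would let the LP variable on each edge be the total probability flowing along it; conservation ($z_1 + z_2 = z$ etc.) then forces the total into any leaf to be an affine — in fact, after eliminating internal variables, a single — coordinate, so $p_1 N_1$ and $p_2 N_2$ are linear in the LP variables and the two leaf inequalities are genuine linear constraints. I would also double-check that the smallness constraints $z_2, z_4 \le (4/s) z$ are consistent with feasibility (they are, since the honest trees satisfy them by Corollary~\ref{corr:uniform}), and that only the leaves marked as genuine terminations — not the ones cut off by $|V_I| = \tau$ — carry the $N_i$-inequalities, since at the truncated leaves we have no control and impose nothing. Once linearity is established, polynomial running time is immediate from the size bounds above and the polynomial-time solvability of linear programming.
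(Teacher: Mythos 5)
Your proposal is correct and follows essentially the same route as the paper: the existential half is the honest transformation of the two-sided tree $S$ into $S_1, S_2$ combined with Claim~\ref{claim:match} and Lemma~\ref{lemma:balance} to get $p_1 N_1 / (p_2 N_2) = q/(1-q)$, and the algorithmic half is a feasibility LP over total-probability edge variables with the ratios $N_1/N_2$ precomputed by brute force at the $\le 4^\tau$ terminating leaves. Your explicit resolution of the linearity issue --- that in the total-probability normalization $p_1$ is itself a single LP variable rather than a product of ratios --- is precisely the point of the one-sided construction in Definition~\ref{def:onesided}, which the paper states more tersely.
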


\begin{proof}
The first part of the theorem follows from the transformation we gave from a stochastic decision tree to two one-sided stochastic decision trees. Then Claim~\ref{claim:match} combined with Lemma~\ref{lemma:balance} implies $\frac{q}{1-q} = \frac{p_1 N_1}{p_2 N_2}$, which then necessarily satisfies $\frac{q'}{1-q'} \leq \frac{p_1 N_1}{p_2 N_2} \leq \frac{q''}{1-q''}$. Rearranging completes the proof of the first part. 

To prove the second part of the theorem, notice that if $\tau$ is the stopping size, then the number of leaf nodes in $S_1$ and in $S_2$ is bounded by $4^\tau$. At each leaf node that corresponds to a termination of the coupling procedure, from Lemma~\ref{lemma:isfactored} we can compute the ratio of $N_1$ to $N_2$ as the ratio of the number of satisfying assignments to $\Phi_{I_1}$ to the number of satisfying assignments to $\Phi_{I_2}$. This can be done in polynomial in $m$ and $2^\tau$ time by brute-force. Finally, the constraints in Definition~\ref{def:onesided} are all linear in the variables that represent total probability (if we treat $\frac{4}{s}$, $\frac{q'}{1-q'}$, $\frac{q''}{1-q''}$ and all ratios $\frac{N_1}{N_2}$ as given constants). Thus we can find a valid choice of the total probability variables by linear programming. This completes the proof of the second part. 
\end{proof}

\noindent Recall that we will be able to choose $\tau = 2 c (6dk)^2 \log n$ and Theorem~\ref{thm:couplingterminates} will imply that at most a $1/n^c$ fraction of the distribution fails to couple. Thus the algorithm above runs in polynomial time for any constants $d$ and $k$. What remains is to show that any valid choice of total probabilities {\em certifies} that $q' \leq \mbox{Pr}_{\mathcal{D}}[x = T] \leq q''$.

\subsection{A Fractional Matching to Certify $q$}

We are now ready to prove the second main theorem of this section. We will show that having any two one-sided stochastic decision trees that meet the constraints on the leaves imposed by Theorem~\ref{thm:constructive} is enough to certify that $\mbox{Pr}_{\mathcal{D}}[x = T]$ is approximately between $q'$ and $q''$. This result will rest on two facts. Fix any assignment $A$. Then either
\begin{enumerate}

\item[(1)] The assignment has too many clauses that restricted to marked variables are all $F$ or

\item[(2)] The total probability of $A$ reaching a leaf node $u$ where the coupling procedure failed to terminate before reaching size $\tau$ is at most $O(\frac{1}{n^c})$. 

\end{enumerate}

\begin{theorem}\label{thm:certify}
Suppose that every clause contains between $k$ and $6k$ variables and that $\log d \geq 5 \log k + 20$ and $k \geq 50 \log d + 50 \log k + 250$. Then any two one-sided stochastic decision trees $S_1$ and $S_2$ that meet the constraints on the leaves imposed by Theorem~\ref{thm:constructive} and satisfy $\tau =  20 c (6dk)^2 \log n$ imply that $$q' - O\Big(\frac{1}{n^c}\Big)\leq \mbox{Pr}_{\mathcal{D}}[x = T] \leq q'' + O\Big(\frac{1}{n^c}\Big)$$
\end{theorem}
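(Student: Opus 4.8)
The plan is to sum the leaf-level inequalities from Theorem~\ref{thm:constructive} over all matched pairs of leaves, treating separately the leaves where the coupling procedure terminated and the leaves where it was cut off at size $\tau$. First I would set up the accounting: for a uniformly random satisfying assignment $A$ of $\Phi$, run {\sc Decision Tree Sampling}; by Lemma~\ref{lemma:altgen} this produces a uniform sample, and by Claim~\ref{claim:match} the path taken in $S_1$ determines a matched path in $S_2$. By Lemma~\ref{lemma:paths} (and its analogue for $p_2$), conditioned on the output being some $A$ consistent with $\mathcal{A}_1$, the probability of landing at the matched leaf pair $(u,u')$ is exactly $p_1$ if $A\sim\mathcal{A}_1$ and exactly $p_2$ if $A\sim\mathcal{A}_2$. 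Summing $p_1 N_1$ over all leaves (weighted appropriately) reconstructs $N\cdot\Pr_{\mathcal D}[x=T]$, and summing $p_2 N_2$ reconstructs $N\cdot\Pr_{\mathcal D}[x=F]$; so $\sum_u p_1 N_1 / \sum_u p_2 N_2 = q/(1-q)$ where the sums run over \emph{all} leaves. This is just the global version of Lemma~\ref{lemma:balance}.

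Next I would split each sum into the ``good'' leaves $\mathcal{L}_{\mathrm{term}}$ (coupling terminated, $|V_I|<\tau$) and the ``bad'' leaves $\mathcal{L}_{\mathrm{cut}}$ ($|V_I|=\tau$). On the good leaves the hypothesis of the theorem gives $(\frac{q'}{1-q'}) p_2 N_2 \le p_1 N_1 \le (\frac{q''}{1-q''}) p_2 N_2$ pointwise, so summing over $\mathcal{L}_{\mathrm{term}}$ yields the same two-sided bound for $\sum_{\mathcal{L}_{\mathrm{term}}} p_1 N_1$ against $\sum_{\mathcal{L}_{\mathrm{term}}} p_2 N_2$. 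It then remains to show the bad leaves contribute a negligible fraction: concretely, that $\sum_{\mathcal{L}_{\mathrm{cut}}} p_1 N_1 \le O(1/n^c)\cdot N_{\mathrm{tot},1}$ where $N_{\mathrm{tot},1}=\sum_{\text{all leaves}} p_1 N_1 = N\,\Pr_{\mathcal D}[x=T]$, and likewise for the $p_2$ side. Given that, a short arithmetic manipulation — adding an $O(1/n^c)$ slack to numerator and denominator of the ratio $q/(1-q)$ — converts the bound on $\sum_{\mathcal{L}_{\mathrm{term}}}$ into $q' - O(1/n^c) \le q \le q'' + O(1/n^c)$, since $q$ and $1-q$ are both $\Theta(1)$ by Corollary~\ref{corr:uniform}.

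The main obstacle is the bad-leaf bound, which is exactly where the two facts flagged before the theorem statement come in. For a fixed satisfying assignment $A$: if $A$ has too many clauses whose marked-variable restriction is all $F$, then $A$ is an exceptional assignment; I would argue via a union bound over $3$-trees of such all-$F$ clauses (mirroring the proof of Theorem~\ref{thm:couplingterminates}, now with the event ``$c$ is all-$F$ on its $\ge k/4$ marked variables'' having probability at most $(1/2)^{k/4}$ under the uniform measure, \emph{not} under $\mathcal D$) that the total $\mathcal D$-mass of such $A$ is $O(1/n^c)$. Otherwise — $A$ is ``nice'' — I claim the probability that $A$'s root-to-leaf path in the coupling reaches size $\tau$ is $O(1/n^c)$: here I would re-run the $3$-tree charging argument of Theorem~\ref{thm:couplingterminates} but \emph{conditioned on the output being} $A$, using Lemma~\ref{lemma:paths} to say that conditioning on $A$ only affects the even-layer ($\mathcal A_2$) random choices, whose transition probabilities are still governed by the $\frac{4}{s}$ and $d(49/100)^{k/4}$ bounds built into Definition~\ref{def:onesided}. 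The delicate point is that, having conditioned on $A$, type-$1$ errors are controlled by how often $\mathcal A_2$ disagrees with $A$ (still $O(1/s)$ per step), while type-$2$ errors for a clause $c$ require $c$ to be unsatisfied in $\mathcal A_2$ — which, since $A$ satisfies $c$ and $A$ is nice, forces the unmarked variables of $c$ to be $F$ in $A$ with $c$'s satisfying literal among the marked variables, an event whose probability is again exponentially small in $k/4$. Summing over the $4^\tau$ leaves of $\mathcal{L}_{\mathrm{cut}}$ against this per-assignment bound, and averaging over nice $A$, gives the required $O(1/n^c)$ total mass with the choice $\tau = 20c(6dk)^2\log n$, completing the proof.
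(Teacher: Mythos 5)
Your high-level architecture is the same as the paper's: split the leaf sum into coupled and truncated leaves, use the Theorem~\ref{thm:constructive} inequality pointwise on the coupled leaves, and show the truncated leaves carry $O(1/n^c)$ of the mass by combining a set of exceptional assignments (fact (1)) with a per-nice-assignment bound on the flow into truncated leaves (fact (2)), both via the $3$-tree union bound of Theorem~\ref{thm:couplingterminates}. Your global identity $\sum_u p_1 N_1 = N\Pr_{\mathcal D}[x=T]$ is correct, though you should justify it from the flow-conservation constraints of Definition~\ref{def:onesided} rather than from Lemmas~\ref{lemma:altgen} and~\ref{lemma:paths}: those lemmas concern the \emph{true} stochastic decision tree, whereas in Theorem~\ref{thm:certify} the trees $S_1,S_2$ are arbitrary LP solutions, and the whole point is to certify $q$ using only the constraints they are known to satisfy. (The identity does survive: each assignment $A$ deterministically selects one descendant at each odd layer, each carrying the full incoming flow $z$, and the even-layer splits satisfy $z_1+z_2=z$, so the flow consistent with $A$ sums to $1$ over leaves.)

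The genuine gap is in your treatment of type-$2$ errors at truncated leaves for a nice assignment $A$. You assert that, conditioned on $A$, the remaining (even-layer) transitions are ``governed by the $\frac{4}{s}$ and $d(49/100)^{k/4}$ bounds built into Definition~\ref{def:onesided}.'' Only the first of these is built in: Definition~\ref{def:onesided} constrains $z_2\leq(\frac{4}{s})z$ and $z_4\leq(\frac{4}{s})z$, i.e.\ it bounds the flow into \emph{disagreement} (type-$1$) branches, and it imposes no constraint whatsoever bounding the flow into states where a clause suffers a type-$2$ error on the $\mathcal{A}_2$ side. An adversarial LP solution could route all of its flow into such states, so no ``exponentially small in $k/4$'' bound for that event is available from $S_1$'s constraints. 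The correct repair—which is what the paper does—is to observe that a type-$2$ error on clause $c$ \emph{not} contributed by $A$ means $c$'s marked variables satisfy it under $\mathcal{A}_1=A$ but not under $\mathcal{A}_2$, which forces a marked variable of $c$ to be set differently in the two partial assignments, i.e.\ a type-$1$ error at distance at most $1$ from the $3$-tree; these forced type-$1$ errors are then controlled by the $\frac{4}{s}$ constraints (with a slightly larger count of patterns to union over). Your sentence attempting this step instead concludes something about the \emph{unmarked} variables of $c$ being $F$ in $A$, which is not the right deduction (unmarked variables are never set by the procedure) and does not connect to any constraint in Definition~\ref{def:onesided}. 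Relatedly, your final step of ``summing over the $4^\tau$ leaves against this per-assignment bound'' would be far too lossy; the sum must be organized by error pattern ($3$-tree plus nearby forced type-$1$ nodes), with the $\frac{4}{s}$ constraints factored out per pattern and a union bound over the $n^{O(c\log(dk))}$ patterns, not over all $4^\tau$ leaves.
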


\noindent The proof of this theorem will use many of the same tools that appeared in the proof of Theorem~\ref{thm:couplingterminates}, since in essence we are performing a one-sided charging argument. 

\begin{proof}
The proof will proceed by constructing a complete bipartite graph $H = (U, V, E)$ and finding a fractional approximate matching as follows. The nodes in $U$ represent the satisfying assignments of $\Phi$ with $x = T$. The nodes in $V$ represent the satisfying assignments of $\Phi$ with $x = F$. Moreover all but a $O(\frac{1}{n^c})$ fraction of the nodes on the left will send between $1-q'' - O(\frac{1}{n^c})$ and $1-q' + O(\frac{1}{n^c})$ flow along their outgoing edges. Finally all but a $O(\frac{1}{n^c})$ fraction of the nodes on the right will receive between $q' - O(\frac{1}{n^c})$ and $q'' + O(\frac{1}{n^c})$ flow along their incoming edges.

First notice that any assignment $A$ (say with $x = T$) is mapped by $S_1$ to a distribution over leaf nodes, some of which correspond to a coupling and some of which correspond to a failure to couple before reaching size $\tau$. Now consider matched pairs of leaf nodes (according to Claim~\ref{claim:match}) that correspond to a coupling. Let $p_1$ and $p_2$ be the total probability of the leaf nodes in $S_1$ and $S_2$ respectively. Let $N_1$ and $N_2$ be the total number of assignments that are consistent with $\mathcal{A}_1$ and $\mathcal{A}_2$, and let $\mathcal{N}_1$ and $\mathcal{N}_2$ be the corresponding sets of assignments. From the assumption that $$\Big ( \frac{q'}{1-q'} \Big ) p_2 N_2 \leq p_1 N_1 \leq \Big ( \frac{q''}{1-q''} \Big ) p_2 N_2$$
and the intermediate value theorem it follows that there is a $q' \leq q^* \leq q''$ which satisfies
$$\Big ( \frac{q'^*}{1-q^*} \Big ) p_2 N_2 = p_1 N_1$$
Hence there is a flow that sends exactly $(1-q^*)p_1$ units of flow out of each node in $\mathcal{N}_1$ and which each node in $\mathcal{N}_2$ receives exactly $q^* p_2$ units of flow. 

If every leaf node corresponded to a coupling, we would indeed have the fractional matching we are looking for, just by summing these flows over all leaf nodes. What remains is to handle the leaf nodes that do not correspond to the coupling terminating before size $s$. Consider any such leaf node $u$ in $S_1$ and the corresponding leaf node $v$ in $S_2$. From Lemma~\ref{lemma:3tree} we have that there is a $3$-tree $T$ of size at least $10 c \log n$. For each node in $T$, from Claim~\ref{claim:disj} we have there are at least $10 c \log n$ disjoint type $1$ or type $2$ errors. 

\textbf{Case \# 1:} Suppose that there are at least $3.5 c \log n$ disjoint type $1$ errors. Fix the $3$-tree $T$, and look at all root-to-leaf paths that are consistent with just the type $1$ errors. Then the sum of their total probabilities is at most 
$$\Big ( \frac{4}{d^{12}} \Big )^{3.5 c \log n}$$
This follows because the constraint that $z_2 \leq  (\frac{4}{s}  ) z$ (and similarly for $z_4$) in Definition~\ref{def:onesided} implies that for each path we can factor out the above term corresponding to just the decisions where there are type $1$ errors. Moreover we chose $s = d^{12}$ exactly as we did in the proof of Theorem~\ref{thm:couplingterminates}. The remaining probabilities are conditional distributions on the paths (after having taken into account the type $1$ errors) and sum to at most one. Finally the total number of $3$-trees of size $10 c \log n$ is at most $(3e (6dk)^3)^{10 c \log n}$. Thus for any assignment $A$, if we ignore what happens to it when it ends up at a leaf node which did not couple and which has at least $3.5 c \log n$ disjoint type $1$ errors, in total we have ignored at most $$\Big ( \frac{4}{d^{12}} \Big )^{3.5 c \log n} \Big ( 3e (6dk)^3 \Big )^{10 c \log n} \leq 1/n^c$$ of its probability, where the last inequality uses the fact that $\log d \geq 5 \log k + 20$. 

\textbf{Case \# 2:} Suppose that there are at least $6.5 c \log n$ disjoint type $2$ errors. Each type $2$ error can be blamed on either $\mathcal{A}_1$ or $\mathcal{A}_2$ or both (e.g. it could be that the clause $c$ might only have all of its marked variables set to $F$ in $\mathcal{A}_1$). Let's suppose that the assignment $A$ contributes at least $2.5 c \log n$ disjoint type $2$ errors. In this case we will completely ignore $A$ in the constraints imposed by our flow. How many such assignments can there be? The probability of getting any such assignment is bounded by $$\Big (  \Big (\frac{49}{100} \Big )^{k/4} \Big )^{2.5 c \log n} \Big ( 3e (6dk)^3 \Big )^{10 c \log n}  \leq 1/n^c$$
where the last inequality has used the fact that $k \geq 50 \log d + 50 \log k + 250$.

Thus if we ignore the flow constraints for all such assignments, we will be ignoring at most a $1/n^c$ fraction of the nodes in $U$ and the nodes in $V$. The only remaining case is when the assignment $A$ ends up at a leaf node $u$ that has at least $6.5 c \log n$ disjoint type $2$ errors, but it contributes less than $2.5 c \log n$ itself. For each type $2$ error that it does not contribute to, it contributes to another type $1$ error. The only minor complication is that the node responsible might not be in the $3$-tree $T$. However it is distance at most $1$ from the $3$-tree because it is contained in a clause that results in type $2$ error that does contain a node in $T$. Now by an analogous reasoning as in Case \#1 above, if we fix the pattern of these type $1$ errors \--- i.e. we fix the $3$-tree and the extra nodes at distance $1$ from it that contribute the missing type $1$ errors \--- the sum of the total probability of all consistent root-to-leaf paths is at most 
$$\Big ( \frac{4}{d^{12}} \Big )^{4 c \log n} $$
Now the number of patterns can be bounded by $(4 e (6kd)^4 )^{10 c \log n}$, which accounts for the inclusion of extra nodes that are not in $T$. Once again, for such an assignment $A$ if we ignore what happens to it when it ends up at a leaf node which did not couple and which has at least $6.5 c \log n$ disjoint type $2$ but it contributes less than $2.5 c \log n$ itself, in total 
we have ignored at most $$\Big ( \frac{4}{d^{12}} \Big )^{4 c \log n} \Big (4 e (6kd)^4 \Big )^{10 c \log n} \leq 1/n^c$$ of its probability, where the last inequality uses the fact that $\log d \geq 5 \log k + 20$. 

Now returning to the beginning of the proof and letting $N_1$ and $N_2$ be the total number of satisfying assignments with $x = T$ and $x = F$ respectively. We have that the flow in the bipartite graph implies
$$(1-q'')N_1 - O\Big(\frac{1}{n^c}\Big)\leq \mbox{flowout}_U = \mbox{flowin}_V \leq q'' N_2 + O\Big(\frac{1}{n^c}\Big)$$
and the further condition
$$q' N_2 - O\Big(\frac{1}{n^c}\Big)\leq \mbox{flowin}_V = \mbox{flowout}_U \leq (1-q') N_1 + O\Big(\frac{1}{n^c}\Big)$$
which gives
$\frac{q'}{1-q'} - O(\frac{1}{n^c}) \leq \frac{q}{1-q} \leq \frac{q''}{1-q''} + O(\frac{1}{n^c})$
which completes the proof of the theorem. 
\end{proof}

\section{Applications}

Here we show how to use our algorithm for computing marginal probabilities when $k$ is logarithmic in $d$ for approximate counting and sampling from the uniform distribution on satisfying assignments. Throughout this section we will assume that each clause contains between $k$ and $2k$ variables and make use of Theorem~\ref{thm:certify} as a subroutine, which makes a weaker assumption that the number of variables per clause is between $k$ and $6k$. 

\subsection{Approximate Counting}

There is a standard approach for how to use an algorithm for computing marginal probabilities to do approximate counting in a monotone CNF, where no variable is negated (see e.g. \cite{BGGGS}). Essentially, we fix an ordering of the variables $x_1, x_2, \dots, x_n$ and a sequence of formulas $\Phi_1, \Phi_2, \dots, \Phi_n$. Let $\Phi_1= \Phi$ and let $\Phi_{i}$ be the subformula we get when substituting $x_1 = T, x_2 = T, \dots, x_{i-1} = T$ into $\Phi$ and simplifying. Notice that each such formula is a monotone CNF and inherits the properties we need from $\Phi$. In particular, each clause has at least $k$ variables because the only clauses left in $\Phi_i$ (i.e. not already satisfied) are the ones which have all of their variables unset. 

However such approaches crucially use monotonicity to ensure that no clause becomes too small (i.e. contains few variables, but is still unsatisfied). This is a similarly issue to what happened with the coupling procedure, which necessitating using {\em marked} and {\em unmarked} variables, the latter being variables that are never set and are used to make sure no clause becomes too small. We can take a similar approach here. In what follows we will no longer assume $\Phi$ is monotone. 

\begin{lemma}\label{lemma:nonmonotonehelper}
Suppose that $e (D+1)  \leq (32/31)^k$. Then there is a partial assignment $\mathcal{A}$ so that every clause is satisfied and each clause has at least $7k/8$ unset variables. Moreover there is a randomized algorithm to find such a partial assignment that runs in time polynomial in $m$, $n$, $k$ and $d$. Alternatively there is a deterministic algorithm that runs in time polynomial in $m$ and $n^{O(d^2)}$. 
\end{lemma}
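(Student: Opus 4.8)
The plan is to phrase this as another application of the Lovász Local Lemma, in the same spirit as Lemma~\ref{lemma:marked}, but now with bad events that simultaneously force every clause to be satisfied \emph{and} force only a few variables to be set in each clause. Run the following random experiment: independently for every variable $x_i$, set $x_i$ to $T$ with probability $p/2$, to $F$ with probability $p/2$, and leave $x_i$ unset with probability $1-p$, for a small absolute constant $p$ to be fixed at the end. For each clause $c$ introduce two bad events: $A_c$, that no literal of $c$ has been set to $T$ (so $c$ is not yet satisfied), and $B_c$, that fewer than $7k/8$ of the variables of $c$ are left unset. If for the final assignment $\mathcal{A}$ (keeping only the variables that were set) neither $A_c$ nor $B_c$ occurs for any $c$, then $\mathcal{A}$ has exactly the two required properties.

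Next I would bound the two probabilities. Each variable of $c$ makes its literal in $c$ true with probability exactly $p/2$, independently across the variables of $c$, so $\Pr[A_c]\le (1-p/2)^{|c|}\le (1-p/2)^k$ because $|c|\ge k$. For $B_c$, the number of set variables of $c$ is $\mathrm{Bin}(|c|,p)$ with mean $p|c|$, and since $|c|\ge k$ the event $B_c$ is contained in the event that more than $|c|/8$ of the variables of $c$ are set; for $p<1/8$ a Chernoff bound gives $\Pr[B_c]\le \rho^{|c|}\le \rho^k$ for a constant $\rho=\rho(p)<1$. Treating $A_c\cup B_c$ as the single bad event attached to $c$, its probability is at most $(1-p/2)^k+\rho^k$ and it depends only on the variables of $c$, hence on at most $D$ other such events (the clauses sharing a variable with $c$). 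Choosing $p$ to balance $1-p/2$ against $\rho(p)$ drives $(1-p/2)^k+\rho(p)^k$ below $(31/32)^k$, at which point the symmetric Local Lemma applies precisely when $e(D+1)(31/32)^k\le 1$, i.e.\ under the hypothesis $e(D+1)\le(32/31)^k$; this yields the desired $\mathcal{A}$. (I would not fuss over the numerical constant --- optimizing over $p$ is the routine computation that produces the value $32/31$, and the statement explicitly does not try to optimize it.)

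For the algorithmic claims I would appeal to the constructive Local Lemma. The randomized algorithm is the Moser--Tardos procedure \cite{MT} for this instance: draw the ternary assignment above, and while some clause $c$ violates $A_c\cup B_c$, resample all variables of $c$; since each variable lies in at most $d$ clauses and the Local Lemma condition holds with constant slack, this terminates after expected polynomially many resamplings, and each step is polynomial in $m,n,k,d$. For the deterministic algorithm I would invoke the standard derandomization of this procedure in the bounded-degree regime (the method of conditional probabilities applied to the witness-tree analysis): each bad event touches $O(k)$ variables and each variable lies in at most $d$ clauses, so the local structures that the pessimistic estimator must examine live on neighborhoods of size $O(d^2)$, which can be enumerated and evaluated in time $n^{O(d^2)}$; fixing variables one at a time then produces $\mathcal{A}$ deterministically in time polynomial in $m$ and $n^{O(d^2)}$.

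The main obstacle is the competition between the two bad events: suppressing $A_c$ forces $p$ upward (the per-literal success probability $p/2$ must exceed roughly $1/32$), while suppressing $B_c$ forces $p$ downward (we need $p$ safely below $1/8$ so that the number of set variables sits a constant factor below its large-deviation threshold), so the existence proof goes through only within a narrow window of admissible $p$, and locating that window is exactly what pins down the constant in the hypothesis. The only other delicate point is the deterministic running-time bound: one must quote the appropriate derandomized Local Lemma statement and verify that on these CNF instances the relevant witness neighborhoods are genuinely of size $O(d^2)$.
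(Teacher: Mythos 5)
Your proposal follows essentially the same route as the paper: the paper runs exactly this ternary experiment with $p=1/16$ (set to $T$ or $F$ with probability $1/32$ each, unset with probability $15/16$), bounds the unsatisfiability term by $(31/32)^k$ and the too-few-unset term by a Chernoff/KL bound $e^{-D(\frac18\|\frac{1}{16})k}$, combines them into one bad event per clause, and invokes the LLL plus the randomized and deterministic Moser--Tardos algorithms. The only difference is that you leave $p$ as a parameter to be optimized while the paper fixes it outright, and both treatments are equally loose about the residual factor of $2$ in the bad-event probability.
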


\begin{proof}
We will choose independent for each variable to set it to $T$ with probability $1/32$, to set it to $F$ with probability $1/32$ and to leave it unset with probability $15/16$. Now consider the $m$ bad events, one for each clause $c$, that $c$ is either unsatisfied or has not enough unset variables (or both). Then we have 
$$\mbox{Pr}[c \mbox{ is bad}] \leq e^{-D(\frac{1}{8} || \frac{1}{16}) k} + \Big ( \frac{31}{32} \Big )^k \leq 2 \Big ( \frac{31}{32} \Big )^k $$
Here the first term follows from the Chernoff bound and represents the probability that there are not enough unset variables and the second term is the probability that the clause is unsatisfied. Moreover using the fact that $D(\frac{1}{8} || \frac{1}{16}) \geq \frac{1}{11}$ we conclude that the second term is larger than the first. Now we can once again we can appeal to the Lov\'asz Local Lemma to show the existence. Finally we can use the algorithm of Moser and Tardos \cite{MT} to find such a partial assignment in randomized polynomial time. Moreover Moser and Tardos \cite{MT} also give a deterministic algorithm that runs in time polynomial in $m$ and $n^{d^2k^2}$. 
\end{proof}

\begin{theorem}\label{thm:nonmonotonecounting}
Suppose we are given a CNF formula $\Phi$ on $n$ variables where every clause contains between $k$ and $2k$ variables. Moreover suppose that $\log d \geq 5 \log k + 20$ and $k \geq 60 \log d + 60 \log k + 300$.
Let $\mbox{OPT}$ be the number of satisfying assignments. Then there is a deterministic algorithm that outputs a quantity $\mbox{count}$ that satisfies
$$\Big ( 1 - \frac{1}{n^c} \Big) \mbox{OPT} \leq \mbox{count} \leq \Big ( 1 + \frac{1}{n^c} \Big) \mbox{OPT}$$
and runs in time polynomial in $m$ and $n^{cd^2 k}$.
\end{theorem}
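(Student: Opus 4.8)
The plan is to express $\mbox{OPT}$ as a telescoping product of conditional marginal probabilities and estimate each factor using Theorem~\ref{thm:certify}, while using Lemma~\ref{lemma:nonmonotonehelper} to pick the order in which variables are fixed so that every intermediate formula still has clauses wide enough for that theorem to apply. Concretely, I would first run the deterministic algorithm of Lemma~\ref{lemma:nonmonotonehelper} — its hypothesis $e(D+1)\le(32/31)^k$, with $D=2dk$, is implied by $k\ge 60\log d+60\log k+300$ — to obtain a partial assignment $\mathcal{A}$ that satisfies every clause of $\Phi$ and leaves at least $7k/8$ variables of each clause unset. Let $U$ be the set of variables $\mathcal{A}$ sets and $W:=\{x_1,\dots,x_n\}\setminus U$, so every clause of $\Phi$ contains at least $7k/8$ variables of $W$. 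Fix any completion $A^\star$ of $\mathcal{A}$ to a total assignment (it is satisfying, since $\mathcal{A}$ already is), and fix an ordering $x_{\pi(1)},\dots,x_{\pi(n)}$ of the variables that lists all of $U$ first and then all of $W$.

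For $0\le j\le n$ let $\Phi^{(j)}$ be the result of substituting $x_{\pi(1)}=A^\star_{\pi(1)},\dots,x_{\pi(j)}=A^\star_{\pi(j)}$ into $\Phi$ and simplifying, and let $\mathcal{D}_j$ be the uniform distribution on satisfying assignments of $\Phi^{(j)}$. The standard self-reducibility identity gives
$$\mbox{OPT} \;=\; \prod_{j=1}^{n}\frac{1}{\Pr_{\mathcal{D}_{j-1}}\!\big[x_{\pi(j)}=A^\star_{\pi(j)}\big]}.$$
The ordering is chosen precisely so this product is computable step by step: if $x_{\pi(j)}\in U$, then every clause still present in $\Phi^{(j-1)}$ retains all of its $\ge 7k/8$ literals on variables of $W$ (only variables of $U$ have been substituted so far), so $\Phi^{(j-1)}$ has all clause widths in $[7k/8,\,2k]$ and maximum variable-degree at most $d$; if $x_{\pi(j)}\in W$, the formula $\Phi^{(j-1)}$ is already empty (once all of $U$ is set to $\mathcal{A}$, every clause is satisfied) and the corresponding marginal equals $1/2$ exactly. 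Hence $\mbox{OPT} = 2^{|W|}\prod_{j:\,x_{\pi(j)}\in U}\Pr_{\mathcal{D}_{j-1}}[x_{\pi(j)}=A^\star_{\pi(j)}]^{-1}$, and it remains to estimate the at most $n$ marginals $q_j:=\Pr_{\mathcal{D}_{j-1}}[x_{\pi(j)}=T]$ with $x_{\pi(j)}\in U$ and $\Phi^{(j-1)}$ nonempty (the rest contributing an exact factor $1/2$).

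To estimate such a $q_j$, I would apply Theorem~\ref{thm:certify} (together with the constructive Theorem~\ref{thm:constructive}) to $\Phi^{(j-1)}$ with its clause-width parameter taken to be $k':=7k/8$. The hypotheses of Theorem~\ref{thm:certify} with parameter $k'$ follow from those of the present theorem: the widths lie in $[k',2k]\subseteq[k',6k']$; $5\log k'+20\le 5\log k+20\le\log d$; and $k'=\tfrac78 k\ge \tfrac78(60\log d+60\log k+300)\ge 50\log d+50\log k'+250$ — it is exactly this factor-$\tfrac87$ loss from having to argue about width $7k/8$ rather than $k$ that forces the constant $60$ in place of $50$. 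By Corollary~\ref{corr:uniform} (with $s=8$, say) we have $q_j\in[1/4,3/4]$. Now I would grid $[1/4,3/4]$ into subintervals of length $n^{-c'}$ for a sufficiently large constant $c'$ (e.g.\ $c'=c+2$), and for each subinterval $[q',q'']$ run the polynomial-time construction of Theorem~\ref{thm:constructive} with stopping size $\tau = 20c'(6dk)^2\log n$; whenever it succeeds, Theorem~\ref{thm:certify} certifies $q_j\in[q'-O(n^{-c'}),\,q''+O(n^{-c'})]$. The subinterval that actually contains $q_j$ always succeeds, so at least one does; taking the midpoint of its certified interval yields $\hat q_j$ with $|\hat q_j-q_j|=O(n^{-c'})$, at a cost polynomial in $m$ and $4^\tau$ per subinterval.

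Finally I would output $\mbox{count} := 2^{|W|}\prod_{j:\,x_{\pi(j)}\in U}(\hat q_j^{\pm})^{-1}$, where $\hat q_j^{\pm}$ is $\hat q_j$ or $1-\hat q_j$ according to $A^\star_{\pi(j)}$. Since each true factor $q_j^{\pm}\ge 1/4$ is bounded away from $0$ and $1$, an additive error $O(n^{-c'})$ in $\hat q_j^{\pm}$ becomes a multiplicative error $1\pm O(n^{-c'})$ in that factor, and over at most $n$ factors this compounds to $\mbox{count}/\mbox{OPT} = 1\pm O(n^{1-c'})$, which is within $1\pm n^{-c}$ once $c'\ge c+1$ and $n$ exceeds an absolute constant (small $n$ being handled by brute force). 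Combining the running time of Lemma~\ref{lemma:nonmonotonehelper}'s deterministic algorithm with that of the $O(n^{c'+1})$ linear programs — each polynomial in $m$ and $4^\tau$ with $\tau=O(c'd^2k^2\log n)$ — yields the claimed running time (polynomial in $m$ and $n$ for the constants $c,d,k$). The one genuinely delicate ingredient is the choice of ordering via the marked/unmarked-style split of Lemma~\ref{lemma:nonmonotonehelper}: it is what prevents conditioning from ever producing a clause narrower than $7k/8$ (hence intractable), which is what keeps Theorem~\ref{thm:certify} applicable at every step; everything else is the standard reduction from marginals to counting together with routine error accounting.
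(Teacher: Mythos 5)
Your proposal is correct and follows essentially the same route as the paper: use Lemma~\ref{lemma:nonmonotonehelper} to find a satisfying partial assignment leaving every clause with at least $7k/8$ unset variables, order the set variables first so each intermediate formula has widths in $[7k/8,2k]$ (within the factor-$6$ window required by Theorem~\ref{thm:certify}), estimate each conditional marginal with the certification machinery, and multiply the telescoping product by $2^{n-t}$ for the free variables. Your write-up actually supplies details the paper leaves implicit — the grid search over candidate intervals $[q',q'']$ to instantiate the marginal oracle, the explicit check that $k'=7k/8$ still satisfies the hypotheses of Theorem~\ref{thm:certify} (which is indeed why the constant is $60$ rather than $50$), and the additive-to-multiplicative error accounting — but these are elaborations of the same argument, not a different one.
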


\begin{proof}
First we (deterministically) find a partial assignment that meets Lemma~\ref{lemma:nonmonotonehelper}. Note that the conditions on $k$ and $d$ imply that the condition in Lemma~\ref{lemma:nonmonotonehelper} holds. Let $x_1, x_2, \dots, x_t$ be an ordering of the set variables. We define $\Phi_1, \Phi_2, \dots, \Phi_t$ in the same way as the subformula we get by substituting in the assignments for $x_1, x_2, \dots, x_{i-1}$ and simplifying to get $\Phi_i$. Again let $q_i$ be our estimate for the marginal probabilities. 

The key point is that $\Phi_{t+1}$ would be empty, because all clauses are satisfied. Moreover each clause that appears in any formula $\Phi_i$ for $1 \leq i \leq t$ has at most $2k$ variables and has at least $7k/8$ variables because it has at least that many unset variables in the partial assignment. Note that the upper and lower bound on clause sizes differ by less than a factor of $6$. Moreover we can now output 
$$\mbox{count} \triangleq 2^{n - t} \prod_{i=1}^n \Big( \frac{1}{q_i}\Big)  = \Big ( 1 \pm \frac{1}{n^c} \Big) \mbox{OPT}$$
because $\Phi_{t+1}$ has exactly $2^{n-t}$ satisfying assignments (every choice of the unset variables) and we have used the same telescoping product, but now to compute the ratio of the number of satisfying assignments to $\Phi_{t+1}$ divided by the number of satisfying assignments to $\Phi$. 
\end{proof}

\subsection{Approximate Sampling}

Here we give an algorithm to generate an assignment approximately uniformly from the set of all satisfying assignments. Again, the complication is that our oracle for approximating the marginals works only if $k$ is at least logarithmic in $d$ so we need some care in the order we choose to sample variables. First we give the algorithm:

\begin{fragment*}[t]
\caption{
\label{alg:iterrefine}{\sc Sampling Procedure}, \\ \textbf{Input:} CNF $\Phi$, oracle $F$ for approximating marginals of variables  \vspace*{0.01in}
}

\begin{enumerate} \itemsep 0pt
\small 
\item Using Lemma~\ref{lemma:marked}, label variables as marked or unmarked
\item While there is a marked variable $x$ that is unset
\item $\quad$ Sample $x$  using $F$
\item $\quad$ Initialize $V_I = \{x\}$ and $V_O$ to be all unset variables ($x$ is already set)
\item $\quad$ While there is a clause $c$ with variables in both $V_I$ and $V_O$
\item $\quad$ $\quad$ Sequentially sample its marked variables (if any) using $F$
\item $\quad$ $\quad$ Case \# 1: $c$ is satisfied
\item $\quad$ $\quad$ $\quad$ Delete $c$
\item $\quad$ $\quad$ Case \# 2: $c$ is unsatisfied
\item $\quad$ $\quad$ $\quad$ Let $S$ be all variables in $c$ (marked or unmarked)
\item $\quad$ $\quad$ $\quad$ Update $V_I \leftarrow V_I \cup S$, $V_O \leftarrow V_O \setminus S$
\item $\quad$ End
\item End
\item For each connected component of the remaining clauses
\item $\quad$ Enumerate and uniformly choose a satisfying assignment of the unset variables
\item End
%\item Output the assignment
\end{enumerate} 

\end{fragment*}

First, we prove that the output is close to uniform. 

\begin{lemma}\label{lemma:approxunif}
If the oracle $F$ outputs a marginal probability that is $1/n^{c+1}$ close to the true marginal distribution for each variable queried, then the output of the {\sc Sampling Procedure} is a random assignment whose distribution is $1/n^c$-close in total variation distance to the uniform distribution on all satisfying assignments. 
\end{lemma}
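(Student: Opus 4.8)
The plan is to separate the analysis into an idealized run, in which the oracle $F$ returns the \emph{exact} conditional marginals, and then to pay for the approximation by a hybrid argument. First I would show that the idealized {\sc Sampling Procedure} outputs a sample that is \emph{exactly} uniform on the satisfying assignments of $\Phi$; the lemma then follows by showing that switching from the exact oracle to one that is $1/n^{c+1}$-accurate perturbs the output distribution by at most $1/n^{c}$ in total variation distance.

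\emph{Idealized correctness.} Note that the outer \textbf{while} loop terminates only after every marked variable has been set, so the set $M$ of variables assigned during the main phase is fixed (it does not depend on the random choices), and the variables still unset at the end are exactly the unmarked ones. In the main phase each $y\in M$ is sampled exactly once, from $\mbox{Pr}_{\mathcal D}[y=T\mid \rho]$ where $\rho$ is the partial assignment built so far; although the order in which the marked variables are processed is adaptive (it depends on which clauses have become satisfied), the next variable to be set is a deterministic function of $\rho$, so for a fixed target assignment $A$ there is a unique sequence of states leading to $\rho = A|_M$, and the product of the conditional marginals along it telescopes to $\mbox{Pr}_{\mathcal D}[X|_M = A|_M] = N_A/N$, where $N$ is the number of satisfying assignments of $\Phi$ and $N_A$ the number of those consistent with $A|_M$. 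Once the main phase ends, every remaining clause involves only unmarked (unset) variables; a clause deleted during the run was satisfied by $A|_M$, so no constraint of $\Phi|_{A|_M}$ is lost, and conversely a few already-satisfied clauses that are ``left in'' do not change the set of satisfying assignments of a component. Hence the satisfying assignments of the remaining formula --- its connected components being sub-CNFs on pairwise disjoint variable sets, together with the free unset variables which the procedure sets uniformly --- are exactly the completions of $A|_M$ to satisfying assignments of $\Phi$, and the procedure picks one of them uniformly, i.e.\ with probability $1/N_A$. Therefore $\mbox{Pr}[\text{output} = A] = (N_A/N)(1/N_A) = 1/N$, independent of $A$. (The procedure never gets stuck --- every $\rho$ reached is extendable and every component satisfiable --- since by Corollary~\ref{corr:uniform} the conditional marginal of each marked variable stays bounded strictly away from $0$ and $1$ throughout the run, because each clause always retains its $\ge k/4$ unmarked variables.)

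\emph{From the exact to the approximate oracle.} The procedure issues at most $n$ queries to $F$, one per marked variable. Consider hybrid processes $Q_0,Q_1,\dots,Q_n$, where $Q_j$ answers the first $j$ queries with the exact conditional marginals and the rest with $F$; then $Q_0$ is the actual {\sc Sampling Procedure} and $Q_n$ is the idealized one. Coupling $Q_{j-1}$ and $Q_j$ to make identical choices through query $j-1$, at query $j$ they each sample one bit, with biases differing by at most $1/n^{c+1}$, so an optimal coupling of that bit makes them agree with probability at least $1-1/n^{c+1}$, after which they can be run identically; thus $\|Q_{j-1}-Q_j\|_{\mathrm{TV}} \le 1/n^{c+1}$, and by the triangle inequality $\|Q_0-Q_n\|_{\mathrm{TV}} \le n/n^{c+1} = 1/n^{c}$. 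Since $Q_n$ is uniform on satisfying assignments, the output of the {\sc Sampling Procedure} is $1/n^{c}$-close to uniform, as claimed.

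\emph{Main obstacle.} The delicate part is the idealized-correctness step: verifying that the adaptive processing order is harmless for the chain rule, that the counting identity $\mbox{Pr}_{\mathcal D}[X|_M=A|_M]=N_A/N$ combines with the uniform-completion step to give exactly $1/N$, and that the distinction between ``deleted'' clauses and the clauses ``remaining'' at the end (and the treatment of unmarked variables that appear in no remaining clause) does not change the set of satisfying completions. The hybrid argument for the approximate oracle is routine.
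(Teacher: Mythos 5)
Your proposal is correct and follows essentially the same two-part strategy as the paper: first show the procedure with an exact marginal oracle outputs an exactly uniform satisfying assignment (chain rule over the adaptively-ordered marked variables plus the uniform cross-product over the remaining components), then bound the perturbation from switching to the $1/n^{c+1}$-accurate oracle over at most $n$ queries. Your hybrid/coupling argument is just a more detailed rendering of the paper's union bound that all random decisions coincide with probability $1-1/n^c$.
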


\begin{proof}
The proof of this lemma is in two parts. First, imagine we were instead given access to an oracle $G$ that answered each query for a marginal distribution with the exact value. Then each variable set using the oracle is chosen from the correct marginal distribution. And in the last step, the set of satisfying assignments is a cross-product of the satisfying assignments for each component. Thus the procedure would output a uniformly random assignment from the set of all satisfying assignments. 
Second, since at most $n$ variables are queried, we have that with probability at least $1-1/n^c$ all of the random decision of the procedure would be the same if we had given it answers from $G$ instead of from $F$. This now completes the proof. 
\end{proof}

The key step in the analysis of this algorithm rests on showing that with high probability each connected component is of logarithmic size. 

\begin{theorem}\label{thm:approximatesampling}
Suppose we are given a CNF formula $\Phi$ on $n$ variables where each clause contains between $k$ and $2k$ variables. Moreover suppose that $\log d \geq 5 \log k + 20$ and $k \geq 60 \log d + 60 \log k + 300$. There is an algorithm that outputs a random assignment whose distribution is $1/n^c$-close in total variation distance to the uniform distribution on all satisfying assignments. Moreover the algorithm runs in time polynomial in $m$ and $n^{cd^2 k^2}$.
\end{theorem}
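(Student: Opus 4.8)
The heavy lifting has already been done: Theorem~\ref{thm:constructive} and Theorem~\ref{thm:certify} let us implement an approximate marginal oracle, and Lemma~\ref{lemma:approxunif} says that feeding such an oracle into the {\sc Sampling Procedure} yields a near-uniform sample. So the proof is an assembly job with two things to verify carefully: that the oracle can actually be instantiated on every subformula the procedure queries, and that the procedure's final enumeration step runs in polynomial time, which requires the connected components it leaves behind to be of logarithmic size.

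\textbf{Step 1: instantiating the marginal oracle.} First I would turn Theorems~\ref{thm:constructive} and~\ref{thm:certify} into an honest algorithm for the conditional-distribution oracle $F$. Given a (simplified) formula $\Psi$ and a variable $y$, run a fine grid search over candidate pairs $q' < q''$: for each pair, invoke the linear program of Theorem~\ref{thm:constructive} with stopping size $\tau = \Theta(c'(6dk)^2 \log n)$ to try to build a valid pair of one-sided stochastic decision trees. By Theorem~\ref{thm:constructive} the LP is feasible for whichever consecutive grid pair brackets the true marginal $q = \Pr_{\mathcal{D}}[y = T]$, and by Theorem~\ref{thm:certify} feasibility of any pair certifies $q' - O(1/n^{c'}) \le q \le q'' + O(1/n^{c'})$; outputting the midpoint of the first feasible pair (with grid spacing $\ll 1/n^{c'}$ and $c'$ a large enough constant multiple of $c$) gives an estimate within $1/n^{c+1}$ of $q$. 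One oracle call then runs in time polynomial in $m$ and $4^{\tau} = n^{O(cd^2 k^2)}$. The only delicate point is that Theorem~\ref{thm:certify} applies only when every clause of $\Psi$ has width between $k''$ and $6k''$ for a $k'' = \Theta(k)$ meeting its hypotheses: this is exactly what the marked/unmarked split of Lemma~\ref{lemma:marked} buys us, since the {\sc Sampling Procedure} only ever sets \emph{marked} variables, so every surviving clause keeps all $\ge k/4$ of its unmarked variables and hence stays within a constant factor of $k$ wide; the bookkeeping that the width/degree conditions survive this rescaling is routine and uses the slack between the $60\log d$ assumed here and the $50\log d$ required in Theorem~\ref{thm:certify}.

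\textbf{Step 2: correctness.} With $F$ accurate to within $1/n^{c+1}$ on each of the $\le n$ queries, Lemma~\ref{lemma:approxunif} gives immediately that the output of the {\sc Sampling Procedure} is $1/n^c$-close in total variation distance to the uniform distribution on satisfying assignments (with an exact oracle every variable is drawn from its true conditional marginal and each leftover component is filled in uniformly and independently, so the output is exactly uniform; replacing exact answers by $F$ changes a decision with probability at most $1/n^{c+1}$ each time, and a union bound over the queries finishes it).

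\textbf{Step 3: components are logarithmic, and the running time; main obstacle.} The genuinely new argument is the analogue of Theorem~\ref{thm:couplingterminates} for the one-sided process, and this is where the work lies. Because there is now only a single partial assignment, there are no type~$1$ errors: a variable enters $V_I$ only via a type~$2$ error, i.e.\ a clause all of whose marked variables are set but which is still unsatisfied. Building the graph $G$ on $V_I$ and extracting a maximal $3$-tree exactly as in Lemma~\ref{lemma:3tree} and Claim~\ref{claim:disj}, the probability that any one component of the procedure exceeds $2(6dk)^2 t$ is at most $\big(3e(6dk)^3\big)^t \big(d(49/100)^{k/4}\big)^t \le (1/2)^t$ under the stated hypotheses, the per-vertex type~$2$ bound coming from Corollary~\ref{corr:uniform} just as before. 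Taking the cutoff $t = \Theta(c(6dk)^2 \log n)$ and declaring failure (outputting a fixed assignment) whenever some component is larger, a union bound over the $\le n$ components adds at most $1/n^c$ more to the total variation distance; on the good event each component has $O(cd^2k^2 \log n)$ variables, so the final enumeration step costs $m \cdot 2^{O(cd^2 k^2 \log n)} = n^{O(cd^2 k^2)}$. Together with the $\le n$ oracle calls of Step~1 (each $n^{O(cd^2k^2)}$ time) and the cost of producing the marked labeling via the Moser--Tardos algorithm~\cite{MT} (randomized polynomial time, or the deterministic variant in time $n^{O(d^2 k^2)}$), the total running time is polynomial in $m$ and $n^{cd^2 k^2}$, as claimed. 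I expect Step~3 — rerunning the $3$-tree charging for the single-assignment process and simultaneously confirming that this process never shrinks a clause below the width the marginal oracle needs — to be the main obstacle, though both ingredients are variants of tools already developed for the coupling procedure.
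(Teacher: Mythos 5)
Your proposal is correct and follows essentially the same route as the paper: correctness via Lemma~\ref{lemma:approxunif}, and component-size control by rerunning the $3$-tree analysis of Theorem~\ref{thm:couplingterminates} for the one-sided process, observing that only type~$2$ errors can occur. The only difference is that you spell out the grid-search/LP instantiation of the marginal oracle and the clause-width bookkeeping, which the paper leaves implicit by simply taking the oracle $F$ as given from the earlier sections.
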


\begin{proof}
The proof of this theorem uses many ideas from the coupling procedure as analyzed in Section~\ref{sec:couple}. Let $\Phi'$ be the formula at the start of some iteration of the inner WHILE loop. Then at the end of the inner WHILE loop, using Lemma~\ref{lemma:isfactored} we can write:
$$\Phi' = \Phi'_I \wedge \Phi'_O$$
where $\Phi'_I$ is a formula on the variables in $V_I$ and $\Phi'_O$ is a formula on the variables in $V_O$. In particular, no clause has variables in both because the inner WHILE loop terminated. Now we can appeal to the analysis in Theorem~\ref{thm:couplingterminates} which gives a with high probability bound on the size of $V_I$. The analysis presented in its proof is nominally for a different procedure, the {\sc Coupling Procedure}, but the inner WHILE loop of the {\sc Sampling Procedure} is identical except for the fact that there are no type 1 errors because we are building up just one assignment. Thus
$$\mbox{Pr}[|V_I| \geq 2 (6dk)^2 t] \leq \Big (\frac{1}{2} \Big )^t$$
The inner WHILE loop is run at most $n$ times and so if we choose $t \geq c \log n$ we get that with probability at least $1-1/n^{c}$ no component has size larger than $ 2c (6dk)^2 \log n$. Now the brute force search in the last step can be implemented in time polynomial in $m$ and $n^{cd^2 k^2}$, which combined with Lemma~\ref{lemma:approxunif} completes the proof.
\end{proof}

We can also now prove Corollary~\ref{corr:inference}. 

\begin{proof}
Recall, we are given a cause network and the truth assignment of each observed variable. First we do some preprocessing. If an observed variable is an OR of several hidden variables or their negation, and the observed variable is set to $F$ we know the assignment of each hidden variable on which it depends. Similarly, if an observed variable is an AND and it is set to $T$ again we know the assignment of each of its variables. For all the remaining observed variables, we know there is exactly one configuration of its variables that is prohibited so each yields a clause in a CNF formula $\Phi$. Moreover each clause depends on at least $7k/8$ variables whose truth value has not been set because the collection of observations is regular. Finally each variable is contained in at most $d$ clauses. The posterior distribution on the remaining hidden variables (whose value has not already been set) is uniform on the set of satisfying assignments to $\Phi$ and thus we can appeal to Theorem~\ref{thm:approximatesampling} to complete the proof. 
\end{proof}

\section{Conclusion}\label{sec:conclusion}

In this paper we presented a new approach for approximate counting in bounded degree systems based on bootstrapping an oracle for the marginal distribution. In fact, our approach seems to extend to non-binary approximate counting problems as well. For example, suppose we are given a set of hyperedges and our goal is to color the vertices {\tt red}, {\tt green} or {\tt blue} with the constraint that every hyperedge has at least one of each color. It is still true that the uniform distribution on satisfying colorings is locally close to the uniform distribution on all colorings in the sense of Corollary~\ref{corr:uniform}. This once again allows us to construct a coupling procedure, but now between a triple of partial colorings. The coupling can be used to give alternative ways to sample a satisfying coloring uniformly at random which in turn yields a method to certify the marginals on any vertex by solving a polynomial number of counting problems on logarithmic sized hypergraphs. We chose to work with only binary counting problems to simplify the exposition, but it remains an interesting question to understand the limits of the techniques that we introduced here.

\subsubsection*{Acknowledgements}

We are indebted to Elchanan Mossel and David Rolnick for many helpful discussions at an earlier stage of this work. We thank Heng Guo and anonymous referees for pointing out typos in some inequalities in a preliminary version, fixing which required increasing the constants in the main theorems by roughly a factor of three.

\end{document}